\documentclass[preprint,11pt]{elsarticle}
\usepackage{amsthm}
\usepackage{amsmath}
\allowdisplaybreaks
\usepackage{minted}
\usepackage[utf8]{inputenc}
\usepackage{url}
\usepackage{adjustbox}
\newtheorem{proposition}{Proposition}
\usepackage{amssymb}
\usepackage{graphicx}
\usepackage{bm}
\usepackage{placeins}
\usepackage{rotating}
\usepackage{graphicx} 
\usepackage{lscape}
\usepackage{multirow}
\usepackage{tfrupee}
\usepackage{subcaption}
\usepackage{comment}
\usepackage[english]{babel}
\usepackage{setspace}
\usepackage{enumerate}
\usepackage{epstopdf}
\usepackage{color}
\usepackage{float}
\usepackage{scalerel,stackengine}
\usepackage{mathtools}
\usepackage{verbatim}
\usepackage{esvect}
\usepackage{placeins}
\usepackage{natbib}
\usepackage[utf8]{inputenc}
\usepackage{amsmath}
\usepackage{mathrsfs}
\usepackage{orcidlink}

\usepackage{booktabs}
\usepackage{longtable}
\usepackage{hyperref}
\usepackage[labelfont=bf, textfont=normal]{caption}
\hypersetup{colorlinks,linkcolor={blue},citecolor={blue},urlcolor={red}}  
\usepackage{algorithm}
\usepackage{color}
\usepackage{algpseudocode}
\usepackage{amsthm,setspace}
\newtheorem{theorem}{Theorem}

\newtheorem{remark}{Remark}
 \usepackage{lscape}

\usepackage[margin=1.8cm]{geometry}
\usepackage[numbers]{natbib}

\usepackage{lineno}
\makeatletter
\def\ps@pprintTitle{%
 \let\@oddhead\@empty
 \let\@evenhead\@empty
 \def\@oddfoot{\reset@font\hfil\thepage\hfil}
 \let\@evenfoot\@oddfoot
}
\makeatletter

\begin{document}

\begin{frontmatter}
\title{A Flexible Modeling of Extremes in the Presence of Inliers}
\author[label1]{Shivshankar Nila\,\orcidlink{0009-0004-0465-5490}}
\author[label1]{Ishapathik Das\corref{cor1}}
\ead{ishapathik@iittp.ac.in}
\cortext[cor1]{Corresponding author: Ishapathik Das}
\author[label1]{N. Balakrishna}

\address[label1]{Department of Mathematics and Statistics, Indian Institute of Technology Tirupati, Tirupati, Andhra
Pradesh, India} 

\doublespacing
\begin{abstract}
\noindent Many real-world phenomena, including life-testing and environmental data, exhibit positive observations with an excess of zeros (referred to as inliers at zero), which pose significant modeling challenges. The presence and proportion of such inliers may affect extreme value analysis (EVA), biased parameter estimates, and distorted tail estimation. A key and persistent challenge in EVA is selecting an appropriate threshold to accurately capture tail behaviour.
This paper proposes a flexible modeling framework that jointly accounts for extremes, inliers, and the tail fraction, treating both the threshold and the tail fraction as parameters.
 The model parameters are estimated using maximum likelihood methods. The proposed approach is compared with classical graphical tools, including the mean excess plot, parameter stability plot, and Pickands plot. 
Theoretical properties are established, and simulation studies and real-data applications demonstrate that the proposed model provides more accurate and reliable estimation, reducing bias and mean squared error and improving tail inference in the presence of inliers.
\end{abstract}

\begin{keyword} Extreme value theory,  Mixture model, Threshold estimation, Uncertainty quantification, Tail fraction, Inliers. 
 
\end{keyword}
\end{frontmatter}
 \doublespacing
\section{Introduction}\label{sec1}
\noindent In statistical modeling, a primary objective is to describe the key characteristics of a population based on observed data. Standard probability distributions help capture general data patterns but are often insensitive to extreme observations with unusually large or small magnitudes.  There is increasing interest in predicting extreme events across a wide range of disciplines. In such applications, the focus lies in estimating extreme effects, that is, the maximum or minimum impact of an event, such as temperature, rainfall intensity, floods, financial losses, and lifetimes. Extreme value theory (EVT) provides a statistical framework for modeling extreme observations and quantifying their extreme effects or associated risks \cite{coles2001introduction}. In practice, the peaks-over-threshold (POT) approach is widely used in EVT, which models the exceedances above a chosen threshold. 
 A key issue in extreme value
modeling is the choice of a threshold beyond which the asymptotically motivated extreme
value models provide an adequate tail approximation. 
The threshold must be sufficiently high to ensure that the asymptotics underlying the (generalized Pareto distribution) GPD approximation are reliable, thus reducing the bias.
However, the small sample size for high thresholds increases the variance of the parameter estimates \cite{roth2016threshold}. Threshold selection involves a trade-off between bias and variance. In numerical threshold selection methods, a common choice is the $90$th percentile. Alternatively, the threshold may be selected using the $k$ largest observations, where $k = \sqrt{n}$ or $k = \frac{\sqrt[2/3]{n}}{\log(\log(n))}$ \cite{scarrott2012review}. 
Although numerical methods are simple, they are often unreliable because every dataset has its own characteristics.  Classical graphical diagnostics, such as the mean excess plot (MEP) \cite{coles2001introduction}, parameter stability plot (PSP) \cite{coles2001introduction}, Pickands plot (PP) \cite{finkenstadt2003extreme, frigessi2002dynamic}, and Hill plot \cite{finkenstadt2003extreme, frigessi2002dynamic}, are also widely used for threshold selection. However, these approaches are subjective and typically ignore threshold uncertainty.
To address threshold uncertainty, \citet{behrens2004bayesian} proposed the extreme value mixture model (EVMM), which treats the threshold as an estimable parameter. 
EVMM consists of two components: a flexible parametric model for observations below the threshold (the ``bulk model'') and a GPD for exceedances above the threshold (the ``tail model''). 
Various EVMMs with different bulk distributions have been proposed in the literature; some are listed in \hyperref[Appendix_EVMM_summary]{Appendix~A4}. For comprehensive reviews, see \cite{scarrott2012review, hu2013extreme}, and \cite{dey2016extreme}. 
Recent methods have been proposed for threshold selection \citep{sakthivel2025dual}, but they often ignore non-exceedances and the associated threshold uncertainty. Since non-exceedances may contain relevant information, neglecting them can lead to suboptimal results.
 Several studies have aimed to improve extreme-quantile and tail estimation using alternative distributions and estimation procedures, including the Gumbel and Weibull distributions \cite{pinheiro2016comparative,jokiel2024estimation}.
Recently, \citet{wu2025heavy} proposed a new estimator for the tail index of heavy-tailed distributions. Accurate tail estimation remains a central objective in such approaches, and in the present work, we address this through improved threshold selection.

Many random events occur in real-life situations; for example, in life-testing experiments, an item might fail immediately, resulting in a lifetime of zero. In rainfall data, zero observations may represent dry days during the rainy season, contributing to drought conditions. In these cases, inliers often emerge as a subset of observations that appear inconsistent with the rest of the dataset. Many real-world data sets exhibit a high concentration of observations near a particular point \( x_{0} \), while the remaining responses follow varying distributions. We refer to such observations as "early failures" or  "inliers" near the point \( x_{0} \). When \( x_{0} = 0 \) and observations at particular \( x_{0} = 0 \), this is specifically termed an "instantaneous failure" or an  "inliers at zero" .
To address such scenarios, ~\citet{muralidharan2006analysis} proposed a model incorporating the Weibull distribution and a degenerate distribution at zero. More generally, the Tweedie framework has been used to model data with such characteristics \cite{kurz2017tweedie,svensson2017statistical}.
\subsection{Effect of Inliers on Classical Threshold Estimation: Motivating Example}
\noindent To demonstrate the impact of inliers on classical threshold estimation methods, we examine parameter estimates obtained from graphical tools such as the MEP \cite{coles2001introduction}, PSP \cite{coles2001introduction}, and PP \cite{finkenstadt2003extreme, frigessi2002dynamic}.
We consider a dataset characterized by a known threshold $u = 8.04719$ and a GPD tail with shape parameter $\xi=0.20$ and scale $\sigma=5$, along with inliers at the origin. We then analyze two versions of this dataset: one including inliers, denoted by the full dataset $\mathbf{x} = \{x_1, x_2, \dots, x_n\}$ with $x_i \ge 0$, and the other excluding inliers, denoted by $\mathbf{x}^{+} = \{x_i : x_i > 0\}$, which retains only the nonzero observations. Figure~\ref{fig:meanexcess} shows that threshold estimation may be distorted in the presence of inliers, affecting the estimates of the GPD parameters. This highlights the sensitivity of classical graphical methods to the presence of inliers and suggests that ignoring or mishandling them can lead to biased threshold and parameter estimates, potentially resulting in unreliable tail inference. This motivates the need for a more robust modeling framework for extremes in the presence of inliers.
\vspace{-0.2cm}
\begin{figure}[!htbp]
    \centering
    \includegraphics[width=0.92\textwidth]{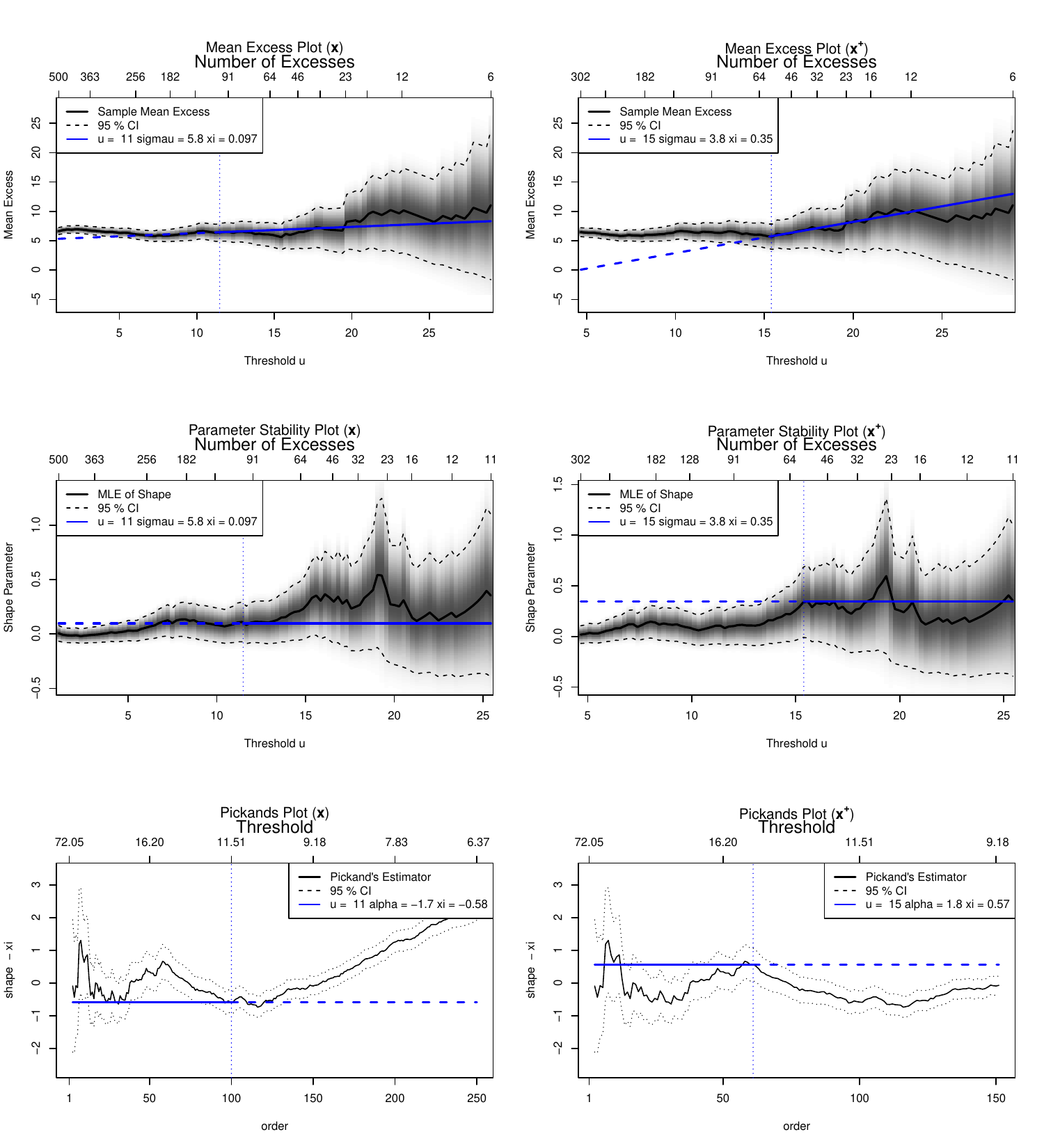}
    \caption{Comparison of MEP, PSP, and PP for the full dataset including inliers ($\mathbf{x}$) and the reduced dataset excluding inliers ($\mathbf{x}^{+}$). The results show that the presence of inliers affects threshold selection and the corresponding GPD parameter estimates ($\sigma$, $\xi$).}
    \label{fig:meanexcess}
\end{figure}
\subsection{Motivation and Research Gap}
\noindent 
In extreme-value mixture modeling, it is crucial to accurately characterise the tail while simultaneously determining the threshold and tail fraction. This becomes more difficult in the presence of inliers, as their proportion may influence threshold selection and the estimation of extreme-value parameters. 
This may distort estimates of risk, extreme quantiles, and other tail-related measures.
 Although inliers play an important role in the data, single-tail EVMMs with positive support are often ignored or not thoroughly studied across the full support, leading to suboptimal inference. There is scope for further research, as no unique and generalized framework for defining these EVMMs, particularly regarding the tail fraction (e.g., $\phi_u$), has been studied in detail \cite{hu2013extreme,dey2016extreme}. Moreover, misspecification of the distribution below the threshold can affect the tail fraction, threshold estimate, and extreme-parameter value estimates \cite{hu2018evmix}.
The robustness of both bulk and tail components remains a concern, and no single mixture specification performs uniformly well across different datasets \citep{scarrott2012review,marambakuyana2024composite}. These challenges have been noted in the literature, emphasizing the need for flexible and robust modeling frameworks \citep{behrens2004bayesian, hu2018evmix}.
Another limitation of EVT is its focus on exceedances, although non-exceedances may also contribute to extreme outcomes; for example, persistent moderate rainfall may cause flooding, while prolonged moderate heat may contribute to drought and water scarcity. Therefore, modeling the full distribution, including both extreme and non-extreme observations, is important \cite{andre2024joint}.

The main goal of this study is to determine whether the choice of observation form below the threshold affects the estimation of the threshold and extreme-value parameters. We propose a novel flexible extreme value inlier mixture model (FEVIMM), which combines a degenerate distribution at the origin, a bulk distribution for values below the threshold (excluding the origin), and a GPD above the threshold, with the tail proportion treated as a parameter. We evaluate robustness, goodness-of-fit, and parameter estimation accuracy by comparing the proposed FEVIMM with the established EVMM \cite{behrens2004bayesian} and flexible EVMM (FEVMM) \cite{macdonald2011flexible}.
Our analysis assesses whether the proposed model improves tail estimation, enhances the precision of threshold determination, and reduces bias and mean squared error (MSE) in parameter estimation. 
Model performance is assessed using standard goodness-of-fit tests, including the Anderson-Darling ($AD$), Cramér-von Mises ($CvM$), and Kolmogorov-Smirnov ($KS$) tests.

The paper is structured as follows. Section \ref{background} provides the background and key concepts, including EVT and EVMM. Section~\ref{methodology} presents the proposed methodology, including the model formulation, model properties, risk measures, the data-simulation algorithm, parameter estimation, and the asymptotic properties of the maximum likelihood estimator (MLE). Section \ref{SStudy} provides an extensive Monte Carlo simulation study. 
Section \ref{arStudy2} presents an application of the proposed methodology to real data. Finally, Section~\ref{conclusion} concludes with a discussion of the findings and possible extensions for future research. 
The proofs of the propositions, the elements of the asymptotic dispersion matrix, and additional simulation results are provided in the appendices.

\section{Background and Preliminaries} \label{background}
  \subsection{Extreme Value Theory} \label{evt}
 \noindent A popular approach in EVT is the POT method, which considers all data points exceeding a threshold and models the exceedances, enabling a more detailed and accurate analysis of tail behaviour. This methodology is supported by a key result from \citet{Pickands1975}. A sequence of independent and identically distributed (\text{i.i.d.}) observations $\{x_i : i = 1, \ldots, n\}$, under certain mild conditions, the excesses $x - u$ over some suitably high threshold $u$ can be well approximated by a GPD (denoted by $\text{GPD}(\sigma_u, \xi)$), as showed by \citet{davison1990models}, and defined as
\begin{equation*}
G(x|\xi,\sigma_u,u)= \Pr(X \leq x \mid X > u) =
\left\{
\begin{array}{ll}
1-\left(1+\xi\frac{x-u}{\sigma_u}\right)^{-1/\xi},& \mbox{if } \xi\neq 0,\\
1-\exp\left(-\frac{x-u}{\sigma_u}\right), & \mbox{if }\xi=0,
\end{array}
\right.
\end{equation*}
for $\xi\in\mathbb{R}$ and $\sigma_u > 0
$, where the support is $x \geq u$ if $\xi \geq 0$ and $u \leq x \leq u-\sigma_u/\xi$ if $\xi<0$. Hence, the GPD is bounded if $\xi<0$ and unbounded from above if $\xi\geq 0$. In some applications, a third implicit parameter is required, the probability of being above the threshold, called the ``tail fraction'', denoted as $\phi_u = \Pr(X > u)$. This quantity is required in calculating quantities like the unconditional survival probability as
\begin{equation}
\Pr(X > x) = \phi_u \left[1 - \Pr(X \leq x \mid X > u)\right].
\label{eq:unconditional_survival}
\end{equation}
 \noindent The implicit parameter $\phi_u$ is estimated using the MLE, which is simply the sample proportion above the threshold. The choice of threshold for fitting the GPD is crucial and arbitrary, with inference varying significantly for different thresholds \cite{tancredi2006accounting}.
\subsection{Extreme Value Mixture Model} \label{evmm}
 \noindent In traditional threshold selection methods, once the threshold is known and fixed, the GPD parameters are estimated based on this assumption, thereby ignoring the uncertainty associated with threshold selection. In contrast, \citet{behrens2004bayesian} introduced an EVMM that treats the threshold as a parameter and explicitly accounts for uncertainty in its estimation.
 The cumulative distribution function (CDF) $\mathcal{F}$ of an EVMM is defined  as
\begin{equation}
\label{eq:myotherequation7}
\mathcal{F}(x|\eta^*, u, \xi, \sigma) = \begin{cases}
    H(x|\eta^*), & \text{if } x < u, \\
    H(u|\eta^*) + [1 - H(u|\eta^*)] \cdot G(x|\Psi), & \text{if } x \geq u,
\end{cases}
\end{equation}
where $H$ is the CDF, parametrized by $\eta^*$, of the bulk, i.e., the portion of data below the threshold $u$ and $G$ is the CDF of GPD with parameters $\Psi=\{\xi,\sigma,u\}$, which models the tail of the distribution above the threshold $u$.  A variety of approaches for threshold estimation, including EVMMs, were reviewed by \citet{scarrott2012review}; some of them, summarized in \hyperref[Appendix_EVMM_summary]{Appendix~A4}, that use the full dataset and do not require a fixed threshold.
\section{Proposed Methodology} \label{methodology}
  \subsection{Proposed Model} \label{sec3}
\noindent In this section, we propose a novel flexible extreme value inlier mixture model (FEVIMM), which jointly models the degenerate distribution, bulk, and tail of the data. The model avoids threshold selection issues by treating the threshold as an estimable parameter and accounting for the associated uncertainty. Observations falling below the threshold \( u \) are assumed to follow a particular distribution, either a degenerate distribution or a bulk distribution \(G^*(\cdot \mid \Phi)\) parameterized by $\Phi$, while those exceeding the threshold \( u \) are modeled using a GPD, \( \text{G}(\cdot \mid u, \sigma, \xi) \). The CDF \( F \) for any observation \( X \) from the FEVIMM is expressed as
\begin{equation}
\label{evimmcdf}
F(x \mid \phi_1, \Phi, u, \xi, \sigma, \phi_2) =
\begin{cases} 
0, & \text{if } x < 0, \\
\phi_1, & \text{if } x = 0, \\
\phi_1 + \left( 1 - (\phi_1 + \phi_2) \right) \frac{G^*(x \mid \Phi)}{ G^*(u \mid \Phi)}, & \text{if } 0 < x < u, \\
(1 - \phi_2) + \phi_2  G(x \mid u, \sigma, \xi), & \text{if } x \geq u.
\end{cases}
\end{equation}
where one can easily verify that the function $F(x \mid \phi_1, \Phi, u, \xi, \sigma, \phi_2)$ defines a valid CDF, since $0 < \phi_1, \phi_2 < 1$,~ $\phi_1 + (1-\phi_1-\phi_2) + \phi_2 = 1$, with $0 < \phi_1 + \phi_2 < 1 \Rightarrow 0 \le F(x) \le 1$, 
$x_1 \le x_2 \Rightarrow F(x_1) \le F(x_2)$, 
$\lim_{x \to 0^-} F(x)=0,\ \lim_{x \to \infty} F(x)=1$, and $F(x)$ is right-continuous, i.e., $\lim_{h \downarrow 0} F(x+h) = F(x)$ for all $x$.
The degenerate distribution at the origin captures inliers with a probability mass \( \phi_1 \).
The model defined in \eqref{evimmcdf} is general and, for practical use, requires a specific choice of the bulk distribution \( G^*(\cdot \mid \Phi) \), which can be modeled using gamma, Weibull, beta, or lognormal distributions, here \( G^*(\cdot \mid \Phi) \) represent the CDF of the bulk distribution. Throughout this study, we model the bulk using the gamma distribution \( G^*(x \mid \eta, \beta) \), and the tail follows a GPD \( G(x \mid \xi, \sigma, u) \) with a tail fraction parameter \( \phi_2 \).
The DF of the FEVIMM is defined as $x \geq 0$,
\begin{equation}\label{evmmlpdf1}
f(x \mid \phi_1, \Phi, u, \xi, \sigma,\phi_2) =
\begin{cases} 
\phi_1, & \text{if } x = 0, \\[2pt]
\left( 1 - (\phi_1 + \phi_2) \right) \frac{g^*(x \mid \Phi)}{ G^*(u \mid  \Phi) \,}, & \text{if } 0 < x < u, \\[2pt]
\phi_2 ~ g(x \mid u, \sigma, \xi), & \text{if } x \geq u.
\end{cases}
\end{equation}
We can express the model given in \eqref{evmmlpdf1} with a mixture model of three components as
\[
f(x) = \phi_1 \cdot \delta_0(x) + (1 - \phi_1 - \phi_2) \cdot f_1(x) + \phi_2 \cdot f_2(x),
\]
where
\[
f_1(x) = \frac{g^*(x \mid \Phi)}{G^*(u \mid \Phi)} \cdot \mathbb{I}_{(0, u)}(x), \quad
f_2(x) = g(x \mid u, \sigma, \xi) \cdot \mathbb{I}_{[u, \infty)}(x),
\]
and \( \delta_0(x) \) represents the Dirac delta function at zero, corresponding to the degenerate distribution at the origin. The \( g^*(\cdot \mid \Phi) \) represents the density function (DF) corresponding to the CDF \( G^*(\cdot \mid \Phi) \), while \( g(\cdot \mid u,\sigma,\xi) \) represents the DF corresponding to the CDF \( G(\cdot \mid u,\sigma,\xi) \).

We study key properties of the proposed model, including continuity and differentiability at $u$, as well as related risk measures.
\begin{proposition}[Continuity at the threshold $u$]
\label{rem:Continuity}
The DF \eqref{evmmlpdf1} of the FEVIMM is continuous at the threshold $u$,
i.e., $f(u^{-}) = f(u^{+})$, 
if and only if
\[
\sigma
=
\frac{\phi_2}{1-\phi_1-\phi_2}
\;\cdot\;
\frac{G^*(u \mid \Phi)}{g^*(u \mid \Phi)}.
\]
\noindent This condition ensures that the transition between the bulk and the GPD tail is continuous at $u$.
\end{proposition}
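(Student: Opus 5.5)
We compute the two one-sided limits of the density \eqref{evmmlpdf1} at $x=u$ and equate them. Throughout we use the convention that the GPD density is
\[
g(x\mid u,\sigma,\xi)=\frac{1}{\sigma}\left(1+\xi\frac{x-u}{\sigma}\right)^{-1/\xi-1}
\]
(or $\sigma^{-1}e^{-(x-u)/\sigma}$ when $\xi=0$). We also assume that $g^*(\cdot\mid\Phi)$ is continuous at $u$ with $g^*(u\mid\Phi)>0$ and $G^*(u\mid\Phi)>0$. These are standing regularity assumptions for the gamma bulk used throughout, and they guarantee that the stated ratio is well defined.

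For the left limit, for $0<x<u$ the density equals $(1-\phi_1-\phi_2)\,g^*(x\mid\Phi)/G^*(u\mid\Phi)$. By continuity of $g^*$ at $u$,
\[
f(u^-)=(1-\phi_1-\phi_2)\,\frac{g^*(u\mid\Phi)}{G^*(u\mid\Phi)}.
\]
For the right limit, for $x\ge u$ the density equals $\phi_2\, g(x\mid u,\sigma,\xi)$. The GPD density is continuous on its support and takes the value $1/\sigma$ at $x=u$ for every $\xi\in\mathbb{R}$, because $(1+0)^{-1/\xi-1}=1$ and $e^{0}=1$. Hence $f(u^+)=f(u)=\phi_2/\sigma$. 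This value holds for all $\xi$, which is why the continuity condition involves only $\sigma$ and not $\xi$.

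Next we set $f(u^-)=f(u^+)$, which gives
\[
(1-\phi_1-\phi_2)\,\frac{g^*(u\mid\Phi)}{G^*(u\mid\Phi)}=\frac{\phi_2}{\sigma}.
\]
Since $0<\phi_1+\phi_2<1$, $\sigma>0$, and $g^*(u\mid\Phi)>0$, every factor is strictly positive, so we may invert both sides. This yields $\sigma=\frac{\phi_2}{1-\phi_1-\phi_2}\cdot\frac{G^*(u\mid\Phi)}{g^*(u\mid\Phi)}$. Conversely, substituting this $\sigma$ into $\phi_2/\sigma$ returns $f(u^-)$. Each step is an equivalence, so the "if and only if" follows.

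The only real obstacle is the justification of the one-sided limits. One must check that $g^*$ is continuous and positive at $u$; this holds for the gamma density on $(0,\infty)$, and for other bulks it would be stated as an assumption. One must also check that the GPD density is right-continuous at $u$ with value $1/\sigma$ uniformly in $\xi$, including the $\xi<0$ case, where $u$ still lies inside the bounded support $[u,u-\sigma/\xi]$. Once these points are noted, the rest is algebra.
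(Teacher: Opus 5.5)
Your proposal is correct and follows the paper's proof. Both compute $f(u^-)=(1-\phi_1-\phi_2)\,g^*(u\mid\Phi)/G^*(u\mid\Phi)$ and $f(u^+)=\phi_2\,g(u\mid u,\sigma,\xi)=\phi_2/\sigma$, equate them, and solve for $\sigma$; you also state the continuity and positivity of $g^*$ and $G^*$ at $u$ that justify the inversion and the ``if and only if'', which the paper leaves implicit.
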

\begin{proposition}[Differentiability at the threshold $u$]
\label{rem:differentiability}
The DF \eqref{evmmlpdf1} of the FEVIMM is differentiable at the threshold $u$, 
i.e., $f'(u^{-}) = f'(u^{+})$, 
if and only if both Proposition~\ref{rem:Continuity} holds and
\begin{equation}
\label{eq:differentiability_general}
\sigma^{2}
=
\frac{\phi_{2}(1+\xi)}{\phi_{1}+\phi_{2}-1}
\;\cdot\;
\frac{G^{*}(u \mid \Phi)}{g^{*\,\prime}(u \mid \Phi)},
\end{equation}
where $g^{*\,\prime}(\cdot \mid \Phi)$ denotes the derivative of $g^{*}(\cdot \mid \Phi)$ with respect to $x$.
\medskip
\noindent
In the special case where the bulk $G^*(\cdot \mid \Phi)$ is a gamma distribution with shape and scale parameters $\eta$ and $\beta$, respectively, the condition in \eqref{eq:differentiability_general} reduces to
\begin{equation}
\label{eq:gamma_differentiability}
\sigma
=
\frac{1+\xi}{
\dfrac{1}{\beta}\;-\;\dfrac{\eta-1}{\,u\,}},
\qquad
\sigma > 0,
~~
\dfrac{1}{\beta} - \dfrac{\eta-1}{u} \neq 0.
\end{equation}
This condition ensures the DF $f(x)$ is differentiable at $u$. 
\end{proposition}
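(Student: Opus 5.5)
The plan is to compute the two one-sided derivatives of $f$ at $u$ directly from the piecewise form \eqref{evmmlpdf1}, equate them, and then combine the result with Proposition~\ref{rem:Continuity}. Differentiability of $f$ at $u$ requires continuity at $u$, so the continuity condition of Proposition~\ref{rem:Continuity} is necessary and will be assumed throughout. Given continuity, and since each branch of \eqref{evmmlpdf1} is smooth on its own side of $u$, the derivative $f'(u)$ exists if and only if the one-sided derivatives $f'(u^-)$ and $f'(u^+)$ coincide. This reduces the \emph{if and only if} statement to a single scalar equation.

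\textbf{Left derivative.} On $(0,u)$ the density is $(1-\phi_1-\phi_2)\,g^*(x\mid\Phi)/G^*(u\mid\Phi)$, and the denominator does not depend on $x$. Hence
\[
f'(u^-)=(1-\phi_1-\phi_2)\,\frac{g^{*\prime}(u\mid\Phi)}{G^*(u\mid\Phi)}.
\]

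\textbf{Right derivative.} On $[u,\infty)$ the density is $\phi_2\,g(x\mid u,\sigma,\xi)$ with
\[
g(x\mid u,\sigma,\xi)=\sigma^{-1}\bigl(1+\xi(x-u)/\sigma\bigr)^{-1/\xi-1}.
\]
Differentiating gives
\[
g'(x)=-\frac{1+\xi}{\sigma^2}\bigl(1+\xi(x-u)/\sigma\bigr)^{-1/\xi-2},
\qquad\text{so}\qquad
f'(u^+)=-\frac{\phi_2(1+\xi)}{\sigma^2}.
\]
The case $\xi=0$ needs a separate check. There $g(x)=\sigma^{-1}e^{-(x-u)/\sigma}$, so $g'(u)=-1/\sigma^2$, which agrees with the formula above at $\xi=0$. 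The result is therefore uniform in $\xi$.

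\textbf{General condition.} Setting $f'(u^-)=f'(u^+)$ and solving for $\sigma^2$ gives
\[
\sigma^2=-\frac{\phi_2(1+\xi)}{1-\phi_1-\phi_2}\cdot\frac{G^*(u\mid\Phi)}{g^{*\prime}(u\mid\Phi)}
=\frac{\phi_2(1+\xi)}{\phi_1+\phi_2-1}\cdot\frac{G^*(u\mid\Phi)}{g^{*\prime}(u\mid\Phi)},
\]
which is \eqref{eq:differentiability_general}. Every step is reversible provided $g^{*\prime}(u\mid\Phi)\neq 0$. If $g^{*\prime}(u\mid\Phi)=0$, matching the derivatives would force $1+\xi=0$. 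I would state this degenerate case as excluded, consistent with the nonvanishing assumption imposed in \eqref{eq:gamma_differentiability}.

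\textbf{Gamma case.} The main point is to use both conditions together rather than \eqref{eq:differentiability_general} alone. Proposition~\ref{rem:Continuity} gives
\[
\sigma=\frac{\phi_2}{1-\phi_1-\phi_2}\cdot\frac{G^*(u\mid\Phi)}{g^*(u\mid\Phi)}.
\]
Dividing \eqref{eq:differentiability_general} by this expression cancels $\phi_1$, $\phi_2$ and $G^*(u\mid\Phi)$, leaving
\[
\sigma=-(1+\xi)\,\frac{g^*(u\mid\Phi)}{g^{*\prime}(u\mid\Phi)}.
\]
For the gamma density $g^*(x)\propto x^{\eta-1}e^{-x/\beta}$, the logarithmic derivative is
\[
\frac{g^{*\prime}(x)}{g^*(x)}=\frac{\eta-1}{x}-\frac1\beta.
\]
Substituting this yields
\[
\sigma=\frac{1+\xi}{\dfrac1\beta-\dfrac{\eta-1}{u}}.
\]
Positivity of $\sigma$ restricts the admissible parameters, and the nonvanishing of the denominator is exactly the condition $g^{*\prime}(u)\neq0$ noted above.

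I expect no real obstacle here. The only care needed is in the bookkeeping: the sign flip from $1-\phi_1-\phi_2$ to $\phi_1+\phi_2-1$, treating $\xi=0$ separately, and recognising that the gamma form requires continuity as well as derivative matching.
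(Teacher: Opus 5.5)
Your proposal is correct and follows essentially the same route as the paper: you compute the one-sided derivatives $f'(u^-)$ and $f'(u^+)=-\phi_2(1+\xi)/\sigma^2$, equate them to get the general condition, and then combine it with the continuity condition of Proposition~\ref{rem:Continuity} and the gamma log-derivative $g^{*\prime}/g^{*}=(\eta-1)/u-1/\beta$ to reach the gamma form. Dividing the two conditions, as you do, is the same algebra as the paper's substitution, and your explicit checks of the case $\xi=0$ and of the degenerate case $g^{*\prime}(u\mid\Phi)=0$ are small tightenings that the paper leaves implicit.
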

\subsection{Risk Measures}\label{risk}
\noindent\noindent Risk measures are tools used in risk analysis to quantify the level of risk associated with a random variable by summarizing it into a single value. Value-at-Risk (VaR) and Tail-Value-at-Risk (TVaR) are among the most widely used risk measures. Several risk measures have been developed; for more details, see \cite{szego2004risk}. It was developed to deal with heavy-tailed distributions when the information above a given threshold
is required. In this section, we derive the VaR and TVaR for the proposed FEVIMM.
\begin{proposition}[Risk measures: VaR]
\label{prop:quantile}
Assume that $G^*(\cdot\mid\Phi)$ and $G(\cdot\mid u,\sigma,\xi)$ admit inverse functions.
For a probability level $p\in(0,1)$, the VaR, denoted by $\mathrm{VaR}_p$, is defined as
$\mathrm{VaR}_p = \inf\{x: F(x)\ge p\}$.
Then the $\mathrm{VaR}_p$ of \eqref{evimmcdf} is given by
\[
\mathrm{VaR}_p=
\begin{cases}
\displaystyle
0, 
& 0 \le p \le \phi_1, \\[6pt]
\displaystyle
G^{*-1}\!\Bigl(
\dfrac{p-\phi_1}{\,1-(\phi_1+\phi_2)\,}\,
G^*(u\mid\Phi)
\Bigr),
& \phi_1 < p \le 1-\phi_2, \\[10pt]
\displaystyle
G^{-1}\!\Bigl(
\dfrac{p-(1-\phi_2)}{\phi_2}
\Bigr),
& 1-\phi_2 < p \le 1.
\end{cases}
\]
\end{proposition}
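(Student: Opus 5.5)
The plan is to compute the generalized inverse $\inf\{x : F(x)\ge p\}$ directly from the piecewise form of \eqref{evimmcdf}. The range $p\in(0,1)$ splits according to the values $F$ attains on each piece. The key observations are:
\begin{itemize}
\item $F$ vanishes on $(-\infty,0)$ and jumps to $\phi_1$ at $0$.
\item On $(0,u)$, $F$ increases continuously from $\phi_1$ (as $x\downarrow 0$) towards $\phi_1+(1-\phi_1-\phi_2)=1-\phi_2$ (as $x\uparrow u$).
\item $F(u)=1-\phi_2$, since $G(u\mid u,\sigma,\xi)=0$.
\item On $[u,\infty)$, $F$ increases from $1-\phi_2$ to $1$.
\end{itemize}
So $F$ is continuous on $(0,\infty)$, and the only atom is at the origin. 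This yields the three cases of the statement.

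\emph{Case $0\le p\le\phi_1$.} For $x<0$ we have $F(x)=0<p$ whenever $p>0$, while $F(0)=\phi_1\ge p$. Hence the infimum is $0$. For $p=0$ the set $\{F\ge 0\}$ is all of $\mathbb{R}$, so the infimum is formally $-\infty$. I would note that $p=0$ is excluded by $p\in(0,1)$, or that VaR is restricted to the support $[0,\infty)$, which gives $0$.

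\emph{Case $\phi_1<p\le 1-\phi_2$.} Since $F(0)=\phi_1<p$ and $F$ is nondecreasing, any $x$ with $F(x)\ge p$ satisfies $x>0$. On $(0,u]$, $F$ is continuous and nondecreasing: it is the formula $\phi_1+(1-\phi_1-\phi_2)G^*(x)/G^*(u)$ on $(0,u)$ and equals $1-\phi_2$ at $u$, which agrees with the limit of that formula. Because $p\le 1-\phi_2=F(u)$, the infimum lies in $(0,u]$. Solving $\phi_1+(1-\phi_1-\phi_2)G^*(x\mid\Phi)/G^*(u\mid\Phi)=p$ gives
\[
G^*(x\mid\Phi)=\frac{p-\phi_1}{1-(\phi_1+\phi_2)}\,G^*(u\mid\Phi),
\]
and applying $G^{*-1}$ (assumed to exist) gives the stated formula. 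The argument of $G^{*-1}$ lies in $(0,G^*(u)]$, so the solution lies in $(0,u]$. In the boundary case $p=1-\phi_2$ the formula returns $u$, consistent with the third case as $p\downarrow 1-\phi_2$.

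\emph{Case $1-\phi_2<p\le1$.} Now $F(x)\le F(u)=1-\phi_2<p$ for all $x\le u$, so the infimum lies in $(u,\infty)$. There $F(x)=(1-\phi_2)+\phi_2G(x)$, and solving $F(x)=p$ gives $G(x)=(p-(1-\phi_2))/\phi_2\in(0,1]$, hence $x=G^{-1}\bigl((p-1+\phi_2)/\phi_2\bigr)$. Here $G^{-1}$ is the inverse of the GPD CDF in the original scale, that is, $u+\frac{\sigma}{\xi}\bigl[(1-q)^{-\xi}-1\bigr]$ for $\xi\ne0$. At $p=1$ the value is infinite when $\xi\ge0$ and equals the finite endpoint $u-\sigma/\xi$ when $\xi<0$. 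I would record this in a remark.

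\emph{Main point of care.} The only step needing real care is justifying that the infimum equals the solution of $F(x)=p$. This requires continuity and strict monotonicity of $G^*$ and $G$ on the relevant ranges, which follows from the assumption that the inverses exist. It also requires handling the atom at $0$ and the junction at $u$, where $F$ is continuous because $G(u)=0$ and the bulk piece tends to $1-\phi_2$.
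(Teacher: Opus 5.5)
Your argument is correct. It is the direct inversion of the piecewise CDF \eqref{evimmcdf} that the paper has in mind when it says this proposition ``follows directly from the definition'' of VaR; the paper gives no written proof beyond that. Your handling of the atom at $0$, the continuity of $F$ at $u$ (via $G(u\mid u,\sigma,\xi)=0$, and implicitly $G^*(0\mid\Phi)=0$ for the positive-support bulk), and the endpoints $p=0$ and $p=1$ (which the stated case ranges include even though $p\in(0,1)$) fills in what the paper leaves unstated.
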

\begin{proposition}[Risk measures: TVaR]
\label{prop:TVaRquantile}
For a given probability level $p$, \text{let}~ $\mathrm{VaR}_p$ be defined as in Proposition~\ref{prop:quantile}. 
The \emph{TVaR}, also called 'expected shortfall' or 'conditional tail expectation', is defined for a random variable $X$ as
\[
\mathrm{TVaR}_p(X)
= \mathbb{E}[X \mid X > \mathrm{VaR}_p]
= \frac{\mathbb{E}\!\left[ X \,\mathbf{1}_{\{X > \mathrm{VaR}_p\}} \right]}
       {\mathbb{P}(X > \mathrm{VaR}_p)}
= \frac{1}{1-p} \int_{\mathrm{VaR}_p}^{\infty} x\, f(x)\,dx,
\]
where $f(x)$ is the density in \eqref{evmmlpdf1}. The TVaR at level~$p$ can therefore be expressed as
\[
\mathrm{TVaR}_p =
\begin{cases}
\displaystyle
\frac{1}{1-\phi_1}\Biggl[
\int_{0^+}^{u} x \,
\frac{1-(\phi_1+\phi_2)}{G^*(u\mid\Phi)}\,g^*(x\mid\Phi)\,dx
\;+\;
\int_{u}^{\infty} x\, \phi_2\, g(x\mid u,\sigma,\xi)\,dx
\Biggr], 
& 0 \le p \le \phi_1,\\[10pt]
\displaystyle
\frac{1}{1-p}\Biggl[
\int_{\mathrm{VaR}_p}^{u} x \,
\frac{1-(\phi_1+\phi_2)}{G^*(u\mid\Phi)}\,g^*(x\mid\Phi)\,dx
\;+\;
\int_{u}^{\infty} x\, \phi_2\, g(x\mid u,\sigma,\xi)\,dx
\Biggr], 
& \phi_1 < p \le 1-\phi_2,\\[10pt]
\displaystyle
\frac{1}{1-p}\,
\int_{\mathrm{VaR}_p}^{\infty} x \,\phi_2\, g(x\mid u,\sigma,\xi)\,dx, 
& 1-\phi_2 < p \le 1.
\end{cases}
\]
\noindent
Note the lower limit $0^+$ in the first case excludes the possible atom at $x=0$ of mass $\phi_1$; when $p\le\phi_1$ we have $\mathrm{VaR}_p=0$.
Both VaR and TVaR are widely used risk measures in risk analysis. For the FEVIMM, these quantities can be obtained analytically through the derived integrals or estimated numerically using a Monte Carlo simulation.
\end{proposition}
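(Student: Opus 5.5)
The plan is to start from the definition $\mathrm{TVaR}_p = \mathbb{E}[X\mathbf 1_{\{X>\mathrm{VaR}_p\}}]/\mathbb{P}(X>\mathrm{VaR}_p)$. We then substitute the mixture representation $f(x)=\phi_1\delta_0(x)+(1-\phi_1-\phi_2)f_1(x)+\phi_2 f_2(x)$ of \eqref{evmmlpdf1} and split the integral over the three supports $\{0\}$, $(0,u)$ and $[u,\infty)$. The value $\mathrm{VaR}_p$ in each regime is taken directly from Proposition~\ref{prop:quantile}. The three cases of the statement then correspond exactly to where $\mathrm{VaR}_p$ lies relative to the atom and to the threshold $u$.

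\textbf{Case $\phi_1<p\le 1-\phi_2$.} Here Proposition~\ref{prop:quantile} gives $0<\mathrm{VaR}_p\le u$. The atom contributes nothing, since it lies outside $(\mathrm{VaR}_p,\infty)$ and in any case $x\cdot\phi_1\delta_0$ carries zero weight. The numerator is then the bulk integral over $(\mathrm{VaR}_p,u)$ with weight $(1-\phi_1-\phi_2)g^*/G^*(u\mid\Phi)$, plus the full GPD integral over $[u,\infty)$ with weight $\phi_2$.

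For the denominator, note that $F$ is continuous and strictly increasing on $(0,u)$, since $G^*$ is assumed invertible. Hence $F(\mathrm{VaR}_p)=p$ and $\mathbb{P}(X>\mathrm{VaR}_p)=1-p$. At $p=1-\phi_2$ we get $\mathrm{VaR}_p=u$; the bulk integral is then empty and the formula agrees with the third case.

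\textbf{Case $p>1-\phi_2$.} Now $\mathrm{VaR}_p>u$. Continuity of $G$ again gives $F(\mathrm{VaR}_p)=p$, and only the tail term survives.

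\textbf{Case $p\le\phi_1$.} This case is the main point requiring care. Here $\mathrm{VaR}_p=0$, and $F$ has a jump at $0$, so $F(\mathrm{VaR}_p)=\phi_1$, which need not equal $p$. The conditioning event $\{X>0\}$ excludes the atom, so the denominator is $1-\phi_1$ rather than $1-p$. This explains the prefactor $1/(1-\phi_1)$ and the lower limit $0^+$. The numerator is the bulk integral over $(0,u)$ plus the tail integral. I will point out explicitly that the displayed formula $\frac1{1-p}\int$ in the definition applies only when $F(\mathrm{VaR}_p)=p$. In this regime the conditional-expectation form $\mathbb{E}[X\mid X>\mathrm{VaR}_p]$ is the operative definition.

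Finally, I will note that all the integrals are finite whenever $\xi<1$, because the GPD mean exists in that range. For $\xi\ge 1$ the TVaR is infinite, and the same formulas hold in the extended sense.
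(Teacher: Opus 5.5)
Your proposal is correct and takes the same route the paper intends: the paper only says the result follows directly from the definition $\mathbb{E}[X\mid X>\mathrm{VaR}_p]$, and you make this explicit by splitting the mixture density over $\{0\}$, $(0,u)$ and $[u,\infty)$ and inserting $\mathrm{VaR}_p$ from Proposition~\ref{prop:quantile}. Your extra remarks are worthwhile clarifications the paper leaves implicit: the displayed identity with $1/(1-p)$ holds only when $F(\mathrm{VaR}_p)=p$, so for $p<\phi_1$ the atom forces the denominator $1-\phi_1$; and the third case degenerates at $p=1$, where $\mathbb{P}(X>\mathrm{VaR}_1)=0$.
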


The proofs of Propositions~\ref{rem:Continuity} and \ref{rem:differentiability} are provided in \hyperref[Appendix_Proof_prepositions]{Appendix~A1}, whereas Propositions~\ref{prop:quantile} and \ref{prop:TVaRquantile} follow directly from the definitions of the corresponding risk measures \cite{szego2004risk}.
\subsection{Algorithm for Data Simulation}\label{Simulation}
\noindent This section provides an algorithm for simulating data $x_i$, 
$i = 1, 2, 3, \dots, n$, from the proposed model, FEVIMM
$F(x \mid \phi_1, \Phi, u, \xi, \sigma, \phi_2)$ as defined in (\ref{evimmcdf}). The simulation procedure is presented in Algorithm~\ref{alg:simulation}.
\begin{algorithm}[!htbp]
\caption{Simulation Algorithm for the Flexible Extreme Value Inlier Mixture Model}
\label{alg:simulation}
\begin{algorithmic}[1]
\small
\Require $n$ (number of samples), $\phi_1, \Phi, u, \xi, \sigma, \phi_2$
\Ensure A sample $\{ x_1, x_2, \dots, x_n \}$ from the FEVIMM
\State Initialize an empty vector $x$ of length $n$
\For{$i = 1$ to $n$}
    \State Generate $U \sim \text{Uniform}(0,1)$
    \If{$U \leq \phi_1$} 
        \State Set $x_i \gets 0$
    \ElsIf{$\phi_1 < U \leq 1 - \phi_2$} 
        \State Compute $k \gets \frac{U - \phi_1}{1 - (\phi_1 + \phi_2)}$
        \State Compute $G_u^* \gets G^*(u \mid \Phi)$ \hfill \Comment{CDF of $G^*(\cdot \mid \Phi)$ at $u$}
        \State Set $x_i \gets Q_{G^*}(k \cdot G_u^* \mid \Phi)$ \hfill \Comment{Quantile function of $G^*(\cdot \mid \Phi)$}
    \Else
        \State Compute $c_u \gets \frac{U - (1 - \phi_2)}{\phi_2}$
        \If{$\xi \neq 0$} 
            \State Set $x_i \gets u + \frac{\sigma}{\xi} \left[ (1 - c_u)^{-\xi} - 1 \right]$ \Comment{GPD quantile ($\xi \neq 0$)}
        \Else
            \State Set $x_i \gets u - \sigma \log(1 - c_u)$ \Comment{GPD quantile ($\xi = 0$)}
        \EndIf
    \EndIf
\EndFor
\State \Return $\{ x_1, x_2, \dots, x_n \}$
\end{algorithmic}
\end{algorithm}
\subsection{Parameter Estimation} \label{estimation}\label{mle}
\noindent This section focuses on estimating the parameters of the FEVIMM. Let 
\(\bm{\Theta} = (\phi_1, \eta, \beta, u, \xi, \sigma, \phi_2)\) represent the parameters characterizing the observed data distribution. We obtain the maximum likelihood estimate by maximizing the likelihood function based on a random sample of size \( n \), denoted by \(\bm{x} = (x_1, x_2, \ldots, x_n)\).
To facilitate the likelihood formulation, we define the following indicator functions
\[
I_1(x_i) = \begin{cases} 
1 & \text{if } x_i = 0, \\
0 & \text{otherwise},
\end{cases} \quad
I_2(x_i; u) = \begin{cases} 
1 & \text{if } 0 < x_i < u, \\
0 & \text{otherwise},
\end{cases} ~~\text{and}~~ 
I_3(x_i; u) = \begin{cases} 
1 & \text{if } x_i \geq u, \\
0 & \text{otherwise}.
\end{cases}
\]
These indicator functions partition the sample space into three mutually exclusive regions corresponding to the point mass at zero, the bulk component, and the tail component.  For a random sample \( \bm{x} \) from a distribution \( F(x; \bm{\Theta}) \) (\ref{evimmcdf}) with a bulk as gamma distribution with corresponding DF \(f(x_i; \bm{\Theta})\) (\ref{evmmlpdf1}). The likelihood function and the corresponding log-likelihood are defined, respectively, as\\
$\mathcal{L}(\bm{x};\bm{\Theta})=\prod_{i=1}^{n}f(x_i;\Theta),
~~\ell(\bm{\Theta}\mid\bm{x}):=\log\mathcal{L}(\bm{\Theta}\mid\bm{x})$.
\begin{equation}  
\begin{aligned}  
\ell(\bm{\Theta} \mid \bm{x}) 
&= \sum_{i=1}^{n} I_1(x_i) \log \phi_1   
+ \sum_{i=1}^{n} I_2(x_i; u) \Big[ \log (1 - \phi_1 - \phi_2) + \log g^*(x_i \mid \eta,\beta) - \log G^*(u \mid \eta,\beta) \Big] \\  
&\quad + \sum_{i=1}^{n} I_3(x_i; u) \Big[ \log \phi_2 + \log g(x_i \mid u, \sigma, \xi) \Big],  
\end{aligned}  
\label{eq:log_likelihood}  
\end{equation}
\begin{equation}
\text{where}~~g^*(x_i\mid\eta,\beta)
=\frac{x_i^{\eta-1}e^{-x_i/\beta}}{\Gamma(\eta)\beta^\eta},
\qquad
g(x_i\mid u,\sigma,\xi)
=
\begin{cases}
\displaystyle \frac{1}{\sigma}\left(1+\xi\frac{x_i-u}{\sigma}\right)^{-(1/\xi+1)}, & \xi\neq0,\\[3pt]
\displaystyle \frac{1}{\sigma}\exp\left(-\frac{x_i-u}{\sigma}\right), & \xi=0.
\end{cases}
\label{eq:gpd_density}
\end{equation}
\noindent Here, $G^*(\cdot\mid\eta,\beta)$ is the CDF corresponding to the DF $g^*(\cdot\mid\eta,\beta)$. The parameters satisfy $0 < \phi_1, \phi_2 < 1$, $\eta > 0$, $\beta > 0$, $u > 0$, $\sigma > 0$, and $\xi \in \mathbb{R}$. We use the classical approach of estimating parameters by maximizing the log-likelihood function. The likelihood function of the FEVIMM is often complex and lacks a closed-form solution, requiring numerical methods to compute the MLEs.

  For MLE estimation, direct likelihood optimization for such mixture models is often challenging due to the likelihood surface's multimodality, especially for the threshold parameter.  Another challenge is that optimization is very sensitive to the initial threshold values.  Still, various advanced optimization tools could be used (e.g., multiple starting values or a stochastic optimizer) to overcome this multimodality.
  One commonly used approach is to profile likelihood estimation to avoid the direct optimization of the complete likelihood.
The profile likelihood over threshold ($ u$) may exhibit multimodality, so black-box optimization is similarly problematic.
A better and strongly recommended alternative is to perform a grid search over potential thresholds, which is typically used to find the value that maximises the profile likelihood.
 However, these methods do not provide uncertainty quantification for the estimated threshold. For more details, see \cite {dey2016extreme}. In contrast, approaches based on the complete likelihood, where the threshold is estimated jointly with the other model parameters, allow for uncertainty quantification. The complete likelihood approach is adopted in \cite{hu2018evmix, hu2013extreme}, and in the \texttt{evmix} package in \texttt{R} when accounting for uncertainty is required.  Consistent with this approach, we also used a similar strategy and selected initial parameter values; the detailed estimation procedure, including both the profile likelihood (grid search) and complete likelihood optimization, is given in Algorithm~\ref{alg:FEVIMM_estimation}.
Optimisation is performed using the Nelder-Mead algorithm via the \texttt{optim} function from the \texttt{stats} package in \textsf{R} \cite{team2020ra} up to 20{,}000 iterations.
\begin{remark}
Note that a key motivation for developing EVMM is to provide a more objective approach to threshold estimation, together with uncertainty quantification, compared to traditional diagnostics (\cite{dey2016extreme}). The estimates from the complete likelihood optimization $\hat{\Theta}^{(C)}$ (see Algorithm~\ref{alg:FEVIMM_estimation}, Step~7) were cross-validated against those from the profile likelihood and grid search methods, $\hat{\Theta}^{(P)}$, and were found to be largely consistent. An additional advantage of the complete optimization approach is that it allows uncertainty quantification, including the threshold parameter estimation. Consequently, all results presented in Sections \ref{SStudy} and \ref{arStudy2} are based on the complete MLE~ $\hat{\Theta}^{(C)}$.
\end{remark}
\begin{algorithm}[!htbp]
\caption{Estimation Algorithm for the Flexible Extreme Value Inlier Mixture Model}
\label{alg:FEVIMM_estimation}
\begin{algorithmic}[1]
\small
\State \textbf{Step 1.}
Given a sample $(x_1,\ldots,x_n)$, define
\[
\mathcal{X}_0 = \{x_i : x_i = 0\}, \quad
\mathcal{X}_B(u) = \{x_i : 0 < x_i < u\}, \quad
\mathcal{X}_T(u) = \{x_i : x_i \geq u\}.
\]
Consider a set of candidate thresholds $\mathcal{U} = \{u_1,\ldots,u_L\}$ chosen from high empirical quantiles, ensuring \(\sum_{i=1}^n \mathbf{1}\{x_i \geq u\} \geq m_0\),where $m_0$ denotes a pre-specified minimum number of exceedances.
\State \textbf{Step 2.}
For each $u \in \mathcal{U}$, compute the initial parameter values,
\[
\phi_1^{(0)} = \frac{1}{n} \sum_{i=1}^n \mathbf{1}\{x_i = 0\}, \quad
\phi_2^{(0)}(u) = \frac{1}{n} \sum_{i=1}^n \mathbf{1}\{x_i \geq u\}.
\]
Obtain $\Phi^{(0)} = (\eta^{(0)}, \beta^{(0)})$ from $\mathcal{X}_B(u)$ using moment- or MLE-based estimators and $(\xi^{(0)}, \sigma^{(0)})$ by fitting a GPD to $\mathcal{X}_T(u)$.
\State \textbf{Step 3.}
For each $u \in \mathcal{U}$, evaluate the profile log-likelihood
\[
\ell_p(u) = \max_{(\phi_1,\Phi,\xi,\sigma,\phi_2)}
\ell(\phi_1,\Phi,\xi,\sigma,\phi_2; u),
\]
using the initial values obtained in Step 2.
\State \textbf{Step 4.}
Obtain a threshold estimate $\hat u$ by maximizing the profile log-likelihood over $\mathcal{U}$, i.e.,
\[
\hat u = \arg\max_{u \in \mathcal{U}} \, \ell_p(u).
\]
\State \textbf{Step 5.} Given $\hat u$, update the initial parameter values by repeating Step~2 at $u=\hat u$, based on $\mathcal{X}_B(\hat u)$ and $\mathcal{X}_T(\hat u)$.

\State \textbf{Step 6.}
Given the profile likelihood estimate $\hat u$, estimate the remaining parameters 
$(\phi_1,\Phi,\xi,\sigma,\phi_2)$,
\[
(\hat\phi_1(\hat u),\hat\Phi(\hat u),\hat\xi(\hat u),\hat\sigma(\hat u),\hat\phi_2(\hat u))
=
\arg\max_{(\phi_1,\Phi,\xi,\sigma,\phi_2)}
\ell(\phi_1,\Phi,\xi,\sigma,\phi_2;\,\hat u),
\]
using initial values from Step $5$. The resulting profile likelihood-based estimator of $\Theta$ is
\[
\hat{\Theta}^{(P)} = (\hat\phi_1(\hat u),\hat\Phi(\hat u), \hat u, \hat\xi(\hat u),\hat\sigma(\hat u),\hat\phi_2(\hat u)).
\]
\State \textbf{Step 7.}
Let $\Theta = (\phi_1,\Phi,u,\xi,\sigma,\phi_2)$. Obtain the MLE $\hat{\Theta}$ by maximizing the complete log-likelihood over all parameters, using appropriate initial values, such as those obtained in Step~6 or other reasonable initializations.
\[
\hat{\Theta}^{(C)}=(\hat\phi_1,\hat\Phi,\hat u,\hat\xi,\hat\sigma,\hat\phi_2)
=
\arg\max_{(\phi_1,\Phi,u,\xi,\sigma,\phi_2)}
\ell(\phi_1,\Phi,u,\xi,\sigma,\phi_2).
\]
\end{algorithmic}
\end{algorithm}
\subsection{Asymptotic Distribution of the MLE} 
\label{asymptoticdistribution} 
\noindent 
For the study of the asymptotic properties, we consider the log-likelihood for a single observation (\(n=1\)) from~\eqref{eq:log_likelihood}, where the DF of  \(g^*(x\mid\eta,\beta)\) and the  \( g^*(x\cdot \mid \Phi) \)  are defined in~\eqref{eq:gpd_density}. The Fisher information entries and asymptotic properties are presented in this section, while the corresponding score functions and related derivations are provided in \hyperref[Appendix_asymptoticdistribution]{Appendix~A2}.

\begin{remark}[Threshold assumption]
For the development of the standard asymptotic theory of the MLE, the threshold parameter \(u\) is assumed to be known and fixed at \(u=u_0\). Treating \(u\) as an unknown parameter violates the regularity conditions required for establishing the asymptotic distribution of the MLE. Accordingly, the score functions and Fisher information matrix are derived under this assumption. In practice,  \(u\) can be estimated using a profile likelihood over a grid search; see Algorithm~\ref{alg:FEVIMM_estimation}, Step~$4$. 
Once the threshold is selected and fixed  \(u=u_0\), the remaining parameters are estimated (see Steps~$5$-$6$), which is strongly recommended \cite{hu2018evmix}, thereby ensuring the validity of asymptotic inference conditional on the selected threshold.
\end{remark}
\begin{remark}[Regularity conditions for the GPD shape parameter (\(\xi\))]
The asymptotic normality of the MLE of \(\xi\) holds under suitable regularity conditions, including finite Fisher information. The MLE of the $\xi$ does not satisfy the regularity conditions when $\xi \in (-1,-0.5)$; moreover, MLEs do not exist for $\xi < -1$, as the likelihood function can become unbounded (\cite{castillo1997fitting,do2012semiparametric}).
It is common practice to assume that $\xi$ lies within the interval $-0.5 < \xi < 0.5$, since this range is frequently encountered in practical applications as noted in~\citet{hosking1985estimation}.
\end{remark}
\subsubsection{Fisher Information and Asymptotic Normality: Case \(\xi \ne 0\)}
\noindent
Let \(\boldsymbol{\theta}=(\phi_1,\eta,\beta,\xi,\sigma,\phi_2)\) denote the parameter vector for the single-observation log-likelihood function \(\ell(\boldsymbol{\theta}\mid x)\) obtained from equation~\eqref{eq:log_likelihood}.\\
\noindent
\textbf{Notation and definitions:} \\
\noindent
Define \(h(\eta,\beta):=\log G^*(u_0\mid\eta,\beta)\), where \(G^*(u_0\mid\eta,\beta)\) is the gamma CDF evaluated at the known threshold \(u_0>0\) and acts as the normalising constant for the gamma distribution truncated above at \(u_0\), i.e., on \((0,u_0)\). Its second-order partial derivatives are denoted by
\(h_{\eta\eta}:=\frac{\partial^2 h(\eta,\beta)}{\partial\eta^2}\),
\(h_{\beta\beta}:=\frac{\partial^2 h(\eta,\beta)}{\partial\beta^2}\), and
\(h_{\eta\beta}:=\frac{\partial^2 h(\eta,\beta)}{\partial\eta\,\partial\beta}\).
The lower incomplete gamma and trigamma functions are defined, respectively, by
\(\gamma(s,x):=\int_0^x t^{s-1}e^{-t}\,dt,\ s>0,\ x\geq0\), and
\(\psi'(\eta):=\frac{d^2}{d\eta^2}\log\Gamma(\eta)\).
For \(X \geq u_{0}\), \(X \sim \mathrm{GPD}(\xi, \sigma, u_{0})\) then the standardized variable \(Z = \frac{X - u_{0}}{\sigma} \sim \mathrm{GPD}(\xi, 1, 0)\). The \( \mathbb{E}_Z[\cdot] \) denotes the expectation taken with respect to the variable \( Z \).
These notations will be used throughout the section for derivatives and likelihood calculations. 
Let \(\mathcal{I}(\boldsymbol{\theta})\) denote the Fisher information matrix, with
\(\mathcal{I}(\boldsymbol{\theta})=-\mathbb{E}\!\left[\frac{\partial^2\ell(\boldsymbol{\theta}\mid\bm{x})}{\partial\boldsymbol{\theta}\,\partial\boldsymbol{\theta}^T}\right]\)
and
\(\mathcal{I}_{ij}(\boldsymbol{\theta})=-\mathbb{E}\!\left[\frac{\partial^2\ell(\boldsymbol{\theta}\mid\bm{x})}{\partial\theta_i\,\partial\theta_j}\right]\).
where $\theta_i \in \boldsymbol{\theta}$. 
 Thus, the Fisher information entries are
\begin{align*}
\mathcal{I}_{\phi_1 \phi_1} &= \frac{1}{\phi_1} + \frac{1}{(1 - \phi_1 - \phi_2)}, \quad
\mathcal{I}_{\phi_2 \phi_2} = \frac{1}{\phi_2} + \frac{1}{(1 - \phi_1 - \phi_2)}, \quad
\mathcal{I}_{\phi_1 \phi_2} = \frac{1}{(1 - \phi_1 - \phi_2)}, \\
\mathcal{I}_{\phi_1 \eta} &= \mathcal{I}_{\phi_1 \beta} = \mathcal{I}_{\phi_1 \xi} = \mathcal{I}_{\phi_1 \sigma} =
\mathcal{I}_{\phi_2 \eta} = \mathcal{I}_{\phi_2 \beta} = \mathcal{I}_{\phi_2 \xi} = \mathcal{I}_{\phi_2 \sigma} = 0,\\
\mathcal{I}_{\eta\eta} &= (1 - \phi_1 - \phi_2) \left[ \psi'(\eta) + h_{\eta \eta} \right], \quad \mathcal{I}_{\eta \beta}  = (1 - \phi_1 - \phi_2) \left( \frac{1}{\beta} + h_{\eta \beta} \right).
\end{align*}

\begin{align*} 
\mathcal{I}_{\beta \beta} &= (1-\phi_1-\phi_2)\left[ h_{\beta \beta} + \frac{2}{\beta^2} \cdot \frac{\gamma\left(\eta + 1, \frac{u_0}{\beta}\right)}{\gamma\left(\eta, \frac{u_0}{\beta}\right)} - \frac{\eta}{\beta^2} \right], \quad 
\mathcal{I}_{\sigma \sigma} = \frac{\phi_2}{\sigma^2} \left[ 
\left(\frac{1}{\xi} + 1\right) \mathbb{E}_Z \left( 
\frac{\xi Z (2 + \xi Z)}{(1 + \xi Z)^2} 
\right) - 1 
\right]. \\ 
\mathcal{I}_{\xi\xi} 
&= 
\phi_2\,\mathbb{E}_Z\!\left[ 
\frac{2}{\xi^3}\log(1+\xi Z) 
-\frac{2Z}{\xi^2(1+\xi Z)} 
-\left(\frac{1}{\xi}+1\right)\frac{Z^2}{(1+\xi Z)^2} 
\right], 
~ 
\mathcal{I}_{\sigma\xi} 
= 
\phi_2\,\mathbb{E}_Z\!\left[ 
\frac{Z}{\sigma\xi(1+\xi Z)} 
-\left(\frac{1}{\xi}+1\right)\frac{Z}{\sigma(1+\xi Z)^2} 
\right].
\end{align*}
\begin{theorem}[Asymptotic normality of the MLE (for known threshold \( u_0 > 0 \), \(\xi\neq 0\))]
Let \( X_1, X_2, \dots, X_n \) be  \text{i.i.d.} observations from the FEVIMM with known threshold \( u_0 > 0 \), and parameter vector
\[
\bm{\theta} = (\phi_1, \eta, \beta, \xi, \sigma, \phi_2)^\top \in (0,1) \times (0, \infty)^2 \times \left( (-0.5, \infty) \setminus \{0\} \right) \times (0, \infty) \times (0,1),
\]
with \(0<\phi_1,\phi_2<1\) and \(\phi_1+\phi_2<1\). Assume that the true parameter \( \bm{\theta}_0 \) lies in the interior of the parameter space and that the regularity conditions for maximum likelihood estimation hold. Let \( \widehat{\bm{\theta}}_n \) be the MLE based on the sample of size \( n \). Then, as \( n \to \infty \),
\[
\sqrt{n} \left( \widehat{\bm{\theta}}_n - \bm{\theta}_0 \right) 
\xrightarrow{d} \mathcal{N}_6 \left( \bm{0}, \, \mathcal{I}^{-1}(\bm{\theta}_0) \right),
\]
where $\mathcal{I}(\bm{\theta}_0)$ is the Fisher information matrix based on a single observation, evaluated at the true parameter \( \bm{\theta}_0 \), with entries $\mathcal{I}_{ij}(\bm{\theta}_0)$.
The Fisher information matrix is given by
\[
\mathcal{I}(\bm{\theta}) =
\begin{bmatrix}
\mathcal{I}_{\phi_1\phi_1} & 0 & 0 & 0 & 0 & \mathcal{I}_{\phi_1\phi_2} \\[4pt]
0 & \mathcal{I}_{\eta\eta} & \mathcal{I}_{\eta\beta} & 0 & 0 & 0 \\[4pt]
0 & \mathcal{I}_{\beta\eta} & \mathcal{I}_{\beta\beta} & 0 & 0 & 0 \\[4pt]
0 & 0 & 0 & \mathcal{I}_{\xi\xi} & \mathcal{I}_{\xi\sigma} & 0 \\[4pt]
0 & 0 & 0 & \mathcal{I}_{\sigma\xi} & \mathcal{I}_{\sigma\sigma} & 0 \\[4pt]
\mathcal{I}_{\phi_2\phi_1} & 0 & 0 & 0 & 0 & \mathcal{I}_{\phi_2\phi_2}
\end{bmatrix}.
\]
\end{theorem}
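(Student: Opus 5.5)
The plan is to reduce the statement to the classical asymptotic normality theorem for maximum likelihood estimators (Cramér's theorem) and then verify its ingredients for the single-observation log-likelihood obtained from \eqref{eq:log_likelihood} with $u=u_0$ fixed. The key structural observation is that, once $u_0$ is known, the indicators $I_1,I_2,I_3$ no longer depend on any parameter. The log-likelihood therefore separates additively into three blocks:
\[
\ell(\bm\theta\mid x)=\ell_A(\phi_1,\phi_2\mid x)+I_2(x;u_0)\,\ell_B(\eta,\beta\mid x)+I_3(x;u_0)\,\ell_C(\xi,\sigma\mid x),
\]
where $\ell_A=I_1\log\phi_1+I_2\log(1-\phi_1-\phi_2)+I_3\log\phi_2$ is a trinomial log-likelihood. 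Similarly, $\ell_B=\log g^*(x\mid\eta,\beta)-h(\eta,\beta)$ is the log-density of a gamma truncated to $(0,u_0)$, and $\ell_C=\log g(x\mid u_0,\sigma,\xi)$ is a GPD log-density. All mixed second derivatives across blocks vanish identically, so the block-diagonal (after permutation) form of $\mathcal I(\bm\theta)$ displayed in the theorem follows at once.

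Next I will compute each block using $\mathbb E[I_1]=\phi_1$, $\mathbb E[I_2]=1-\phi_1-\phi_2$ and $\mathbb E[I_3]=\phi_2$. I also use that, conditionally on $I_2=1$ (resp.\ $I_3=1$), $X$ follows the truncated gamma (resp.\ $\mathrm{GPD}(\xi,\sigma,u_0)$).
\begin{itemize}
\item For the multinomial block, the second derivatives give the stated $\mathcal I_{\phi_1\phi_1},\mathcal I_{\phi_1\phi_2},\mathcal I_{\phi_2\phi_2}$ directly.
\item For the gamma block, $-\partial^2\ell_B$ is deterministic in $\eta$: it equals $\psi'(\eta)+h_{\eta\eta}$, and the mixed derivative is $1/\beta+h_{\eta\beta}$. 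For $\beta\beta$ one needs $\mathbb E[X\mid 0<X<u_0]=\beta\,\gamma(\eta+1,u_0/\beta)/\gamma(\eta,u_0/\beta)$.
\item For the GPD block, I substitute $Z=(X-u_0)/\sigma$ and write the Hessian entries as $\mathbb E_Z$ expressions, multiplied by $\phi_2$.
\end{itemize}

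Then I verify the regularity conditions. The support of each block is parameter-free once $u_0$ is fixed, except the GPD upper endpoint when $\xi<0$. The log-density is $C^3$ in $\bm\theta$ on the interior. Differentiation under the integral (so that the score has mean zero and the information identity holds) is justified by dominated convergence on compact neighbourhoods of $\bm\theta_0$. Third derivatives are dominated by an integrable envelope, and $\mathcal I(\bm\theta_0)$ is positive definite. Positive definiteness reduces blockwise to three facts:
\begin{itemize}
\item the trinomial information is positive definite for $\phi_1,\phi_2,1-\phi_1-\phi_2>0$;
\item the truncated gamma information is the covariance matrix of $(\log X,X)$ under a nondegenerate law, hence positive definite;
\item the GPD information is finite and nonsingular for $\xi>-1/2$ (Smith's result, as in the remark on regularity for $\xi$).
\end{itemize}
Given these conditions, consistency of a root of the likelihood equations and the usual Taylor expansion of the score give $\sqrt n(\widehat{\bm\theta}_n-\bm\theta_0)\to\mathcal N_6(\bm 0,\mathcal I^{-1}(\bm\theta_0))$. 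Because of block diagonality, the inverse is itself blockwise.

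The main obstacle I expect is the GPD block for $\xi\in(-1/2,0)$, where the support endpoint $u_0-\sigma/\xi$ depends on $(\xi,\sigma)$. There the usual interchange arguments fail. I would handle it by appealing to Smith's (1985) nonregular-MLE theory, which shows standard asymptotics hold for $\xi>-1/2$, rather than re-deriving it. I would also check that the $\mathbb E_Z$ moments defining $\mathcal I_{\xi\xi},\mathcal I_{\sigma\sigma},\mathcal I_{\sigma\xi}$ are finite in this range, using $\mathbb E_Z[(1+\xi Z)^{-r}]=1/(1+r\xi)$.
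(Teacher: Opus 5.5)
Your proposal is correct and follows essentially the same route as the paper. With $u_0$ fixed, the single-observation log-likelihood splits into trinomial, truncated-gamma and GPD pieces; the second derivatives you describe reproduce exactly the listed $\mathcal{I}_{ij}$ and the vanishing cross-block entries; and the conclusion is then read off from the classical MLE asymptotic theorem. The only difference is that the paper simply assumes the regularity conditions, whereas you indicate how to verify them: positive definiteness blockwise, finiteness via $\mathbb{E}_Z[(1+\xi Z)^{-r}]=1/(1+r\xi)$, and Smith's nonregular theory for $\xi\in(-1/2,0)$, applied to the GPD factor of the likelihood conditional on the random number of exceedances.
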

\subsubsection{Fisher Information and Asymptotic Normality: Case \(\xi = 0\)}
\noindent
Let \(\boldsymbol{\theta}=(\phi_1,\eta,\beta,\sigma,\phi_2)\) denote the parameter vector for the single-observation log-likelihood function \(\ell(\boldsymbol{\theta}\mid x)\) obtained from~\eqref{eq:log_likelihood}. For \(\xi=0\), the GPD tail reduces to an exponential distribution with mean \(\sigma\). The Fisher information entries involving \(\sigma\) are
$
\mathcal{I}_{\sigma\sigma}=\frac{\phi_2}{\sigma^2},
\quad
\mathcal{I}_{\phi_1\sigma}
=\mathcal{I}_{\phi_2\sigma}
=\mathcal{I}_{\eta\sigma}
=\mathcal{I}_{\beta\sigma}=0,
$
and the remaining Fisher information entries are as in the case \(\xi\ne0\), except for those involving \(\xi\).
\begin{theorem}[Asymptotic normality of the MLE (for known threshold \( u_0 > 0 \), \(\xi=0\))]
Let \( X_1, X_2, \dots, X_n \) be \text{i.i.d.} observations from the FEVIMM with known threshold \( u_0 > 0 \), and parameter vector
\[
\bm{\theta} = (\phi_1, \eta, \beta, \sigma, \phi_2)^\top
\in (0,1)\times(0,\infty)^2\times(0,\infty)\times(0,1),
\]
with \(0<\phi_1,\phi_2<1\) and \(\phi_1+\phi_2<1\). Assume that the true parameter \( \bm{\theta}_0 \) lies in the interior of the parameter space and that the regularity conditions for maximum likelihood estimation hold. Let \( \widehat{\bm{\theta}}_n \) be the MLE based on the sample of size \( n \). Then, as \( n \to \infty \),
\[
\sqrt{n} \left( \widehat{\bm{\theta}}_n - \bm{\theta}_0 \right) 
\xrightarrow{d} \mathcal{N}_5 \left( \bm{0}, \, \mathcal{I}^{-1}(\bm{\theta}_0) \right).
\]
where $\mathcal{I}(\bm{\theta}_0)$ is the Fisher information matrix based on a single observation, evaluated at the true parameter \( \bm{\theta}_0 \), with entries
$\mathcal{I}_{ij}(\bm{\theta}_0)$.
The Fisher information matrix is defined similarly as in the case \(\xi \neq 0\), with the row and column corresponding to \(\xi\) removed.
\end{theorem}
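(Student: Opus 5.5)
The plan is to reduce the $\xi=0$ theorem to the standard Cramér--Wald / Taylor-expansion argument for i.i.d.\ maximum likelihood. With $u_0$ fixed, we verify the classical regularity conditions for the single-observation log-likelihood obtained from \eqref{eq:log_likelihood} with $\xi=0$. The key structural observation is that, for fixed $u_0$, this log-likelihood separates additively into three blocks:
\begin{itemize}
\item a multinomial block $I_1\log\phi_1+I_2\log(1-\phi_1-\phi_2)+I_3\log\phi_2$ in $(\phi_1,\phi_2)$;
\item an upper-truncated gamma block $I_2\,[\log g^*(x\mid\eta,\beta)-h(\eta,\beta)]$ in $(\eta,\beta)$;
\item an exponential block $I_3\,[-\log\sigma-(x-u_0)/\sigma]$ in $\sigma$.
\end{itemize}
Because the indicators do not depend on any parameter once $u_0$ is known, the support of $X$ is parameter-free. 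This is exactly why fixing $u$ restores regularity.

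First, we compute the score vector and show it has mean zero. We then check the information identity $\mathbb{E}[\dot\ell\dot\ell^\top]=-\mathbb{E}[\ddot\ell]$ blockwise. For the multinomial block this is elementary and gives $\mathcal{I}_{\phi_1\phi_1},\mathcal{I}_{\phi_1\phi_2},\mathcal{I}_{\phi_2\phi_2}$. For the truncated gamma block it follows because a truncated gamma on $(0,u_0)$ is a regular exponential family in $(\eta,\beta)$, since $\log x$ and $x$ are its sufficient statistics. The entries $\psi'(\eta)+h_{\eta\eta}$ etc.\ arise from differentiating the log-normaliser. The expectation of $x$ under truncation is written via $\gamma(\eta+1,u_0/\beta)/\gamma(\eta,u_0/\beta)$.

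For the exponential block, $\partial^2_\sigma\ell=I_3[1/\sigma^2-2(x-u_0)/\sigma^3]$. Since $\mathbb{E}[I_3]=\phi_2$ and $\mathbb{E}[(X-u_0)\mid X\ge u_0]=\sigma$, this gives $\mathcal{I}_{\sigma\sigma}=\phi_2/\sigma^2$.

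Cross-block second derivatives vanish identically: every mixed partial either involves parameters from different additive pieces, or is a derivative of $\log\phi$-terms with respect to density parameters. This yields the block-diagonal structure, which is the matrix of the $\xi\ne0$ theorem with the $\xi$ row and column deleted.

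Next, we verify the remaining Cramér conditions. Third derivatives must be dominated by an integrable function on a neighbourhood of $\bm\theta_0$. This holds because each block is a smooth exponential-family log-likelihood with bounded-away-from-boundary parameters: $\phi_1,\phi_2,1-\phi_1-\phi_2>0$, $\eta,\beta,\sigma>0$ in the interior. Moreover $\log x$ and $x$ have all moments on $(0,u_0)$ under the truncated gamma, and $X-u_0$ has all moments under the exponential tail.

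We also need $\mathcal{I}(\bm\theta_0)$ to be positive definite. It suffices to check each block. The multinomial block is the inverse of a nonsingular multinomial covariance. The gamma block is the covariance of $(\log X,X)$ given $0<X<u_0$, which is nondegenerate since $\log x$ and $x$ are affinely independent on an interval. Finally, $\phi_2/\sigma^2>0$.

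With these conditions in place, consistency of a root of the likelihood equations follows from the standard argument. We take a Taylor expansion of $n^{-1/2}\sum\dot\ell(\bm\theta_0\mid X_i)$ about $\widehat{\bm\theta}_n$. The multivariate CLT applied to the i.i.d.\ scores, with covariance $\mathcal{I}(\bm\theta_0)$, together with the LLN for the Hessian and Slutsky's lemma, then gives $\sqrt n(\widehat{\bm\theta}_n-\bm\theta_0)\to\mathcal N_5(\bm0,\mathcal I^{-1}(\bm\theta_0))$.

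The main obstacle is positive definiteness and the interchange of differentiation and integration for the truncated gamma normaliser $h(\eta,\beta)$. The cleanest route is to invoke exponential-family theory for the truncated family, whose natural parameter space is open, rather than bounding derivatives of incomplete gamma functions directly. Everything else is routine.
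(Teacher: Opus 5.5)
Your proposal is correct and follows the same route as the paper. With $u_0$ fixed, you compute the scores and the blockwise Fisher information (multinomial block for $(\phi_1,\phi_2)$, truncated-gamma block for $(\eta,\beta)$, $\mathcal{I}_{\sigma\sigma}=\phi_2/\sigma^2$ from the exponential tail, all cross terms zero), then invoke standard i.i.d.\ MLE asymptotics; the only difference is that you actually verify the regularity conditions the paper simply assumes (parameter-free support, domination of third derivatives, positive definiteness via the exponential-family structure of the truncated gamma).
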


\begin{remark}[Asymptotic confidence intervals for the MLE]
Based on the asymptotic normality of the MLE, the approximate \( 100(1 - \alpha)\% \) confidence interval for each parameter 
\( \theta_i \in \{\phi_1, \eta, \beta, \xi, \sigma, \phi_2\} \) 
is given by
\[
\widehat{\theta}_i \pm z_{\alpha/2} \sqrt{ \frac{ \left[ \mathcal{I}^{-1}(\widehat{\boldsymbol{\theta}}) \right]_{ii} }{n} },
\]
where \( z_{\alpha/2} \) represents the upper \( \alpha/2 \)-quantile of the standard normal distribution, and \( \left[ \mathcal{I}^{-1}(\widehat{\boldsymbol{\theta}}) \right]_{ii} \) denotes the \( i \)-th diagonal element of the inverse Fisher information matrix evaluated at the MLE \( \widehat{\boldsymbol{\theta}}_n \). In the case \( \xi = 0 \), the confidence interval is defined similarly for 
\( \theta_i \in \{\phi_1, \eta, \beta, \sigma, \phi_2\} \).
\end{remark}

\begin{remark}
The mixture proportions $\phi_1$ and $\phi_2$ exhibit statistical dependence in the Fisher information matrix, as indicated by the nonzero off-diagonal term $\mathcal{I}_{\phi_1\phi_2} \neq 0$. This arises from the unit-sum constraint $\phi_1 + \phi_2 + (1 - \phi_1 - \phi_2) = 1$, which imposes the relationship $\mathrm{Cov}(\widehat{\phi}_1, \widehat{\phi}_2) < 0$ under the MLE's asymptotic distribution. Consequently, increments in one proportion necessarily reduce the available probability mass for the other, leading to asymptotic correlation between $\widehat{\phi}_1$ and $\widehat{\phi}_2$. In the context of extreme value mixture modeling, explicitly accounting for the proportion of inliers ($\phi_1$) is important, as the existing EVMM literature has often either ignored this component or not accounted for it thoroughly, potentially affecting inference on the extreme component ($\phi_2$) and related risk measures.
\end{remark}

\section{Monte Carlo Simulation Study}\label{SStudy}\label{SStudy1}
\noindent In this section, we present a comprehensive Monte Carlo simulation study to evaluate the effectiveness of the proposed model. We consider different scenarios, including heavy-tailed, light-tailed, and exponentially-tailed cases, as well as varying proportions of tail data. Samples \( x_i \), for \( i = 1, 2, \dots, n \), are generated from the FEVIMM~\eqref{evimmcdf} with true parameters: \( \phi_1 \in \{0.1, 0.2, 0.3, 0.4\} \), \( \phi_2 \in \{0.1, 0.15\} \), \( \xi \in \{-0.2, 0, 0.2\} \), \( \eta \in \{1, 4\} \), \( \beta \in \{1, 5\} \), \( \sigma \in \{4, 5\} \), and threshold \( u \in \{6.6807, 9.4846, 11.5129, 19.448\} \), with sample sizes \( n \in \{150, 200, 300, 400, 500, 750, 1000\} \).
For each sample size $n$ considered, we performed 2000 Monte Carlo replications ($N$) using the procedure outlined in Section~\ref{Simulation}. MLEs of the parameter vector \( \Theta = (\phi_1, \eta, \beta, u, \xi, \sigma, \phi_2) \) are computed using the method in Section~\ref{estimation}. We assess estimator performance using the sample mean, bias, MSE, bootstrap standard error (BSE), 95\% bootstrap confidence interval (BCI), and asymptotic confidence-interval coverage probabilities.
The bias and MSE are calculated as
\begin{equation}
\mathrm{Bias}(\hat{\theta})
= \frac{1}{N} \sum_{i=1}^{N} \bigl(\hat{\theta}^{(i)} - \theta\bigr), \quad
\mathrm{MSE}(\hat{\theta})
= \frac{1}{N} \sum_{i=1}^{N}
\bigl(\hat{\theta}^{(i)} - \theta\bigr)^2,
\end{equation}
where $\theta \in \Theta$, \( \hat{\theta} \) is the estimator of \( \theta \), and $\hat{\theta}^{(i)}$ denotes the estimate of $\theta$ from the $i$th simulated dataset.

The results presented in Table~\ref {Table-1}--\ref{Table-3} indicate that the proposed methodology yields small bias and MSE, with estimated parameter values close to the true values and coverage probabilities aligning with the nominal confidence levels across all sample sizes. The parameters $u$, $\xi$ and $\sigma$ exhibit some variability at smaller sample sizes. As the sample size increases, both bias and MSE decrease, indicating improved estimation accuracy and greater estimator stability. The values shown in parentheses under the \textbf{Sample Mean}, \textbf{MSE}, and \textbf{Bias} columns are estimates obtained from the FEVMM \cite{macdonald2011flexible}. A comparison with these values shows that ignoring inliers leads to larger bias and MSE, particularly for the tail-related parameters, highlighting the importance of accounting for inliers. The \texttt{evmix} R package, developed by \citet{hu2018evmix}, was employed for this analysis and is available on CRAN. These estimates are included only as a reference to show how the FEVMM behaves when the data-generating process contains inliers. We report the results of FEVIMM, FEVMM, and EVMM, focusing on bias and MSE, and present the comparison in Section~\ref{smstudyc}. Additional results are given in the Appendix.

We compared the estimates by the MEP, PSP, and PP to both versions of the simulated dataset: one including inliers (i.e., the whole dataset $\mathbf{x}$) and the other excluding inliers (i.e., $\mathbf{x}^{+}$, where only the non-zero values are retained). For reference, the MEP, PSP, and PP were generated using the \texttt{evmix} package \cite{hu2018evmix}, and all results are based on a sample size of $n=1000$. The estimates from MEP, PSP, and PP are presented in Tables~\ref{Table-MEP1}, \ref{Table-MEP2}, and  \ref{Table-MEP3} and show that they are sensitive to inliers, with estimates varying depending on their inclusion. Estimates of the parameters $u$, $\xi$, and $\sigma$ vary across cases and may deviate from their true values, indicating unreliable estimates of the population tail. In contrast, the FEVIMM estimates remain closer to the true values and exhibit more stable behaviour.

\begin{landscape} 
\begin{table*}[htbp]
\caption{Monte Carlo simulation results ($N=2000$) for sample sizes $n = 150, 200, 300, 400,$ and $500$, with an inlier proportion of 40\% and a heavy-tailed GPD component with tail fraction 10\%, reporting sample mean, BSE, 95\% BCI, MSE, Bias, and CP.}
    \centering
    \scriptsize 
    \setlength{\tabcolsep}{3pt} 
    \renewcommand{\arraystretch}{1.3} 
    \begin{tabular}{|p{2.5cm}|p{1.8cm}|p{1.8cm}|p{2.4cm}|p{2cm}|p{3cm}|p{2.4cm}|p{2.3cm}|p{1.8cm}|} 
        \hline
       \textbf{$n$} & \textbf{Parameter} & \textbf{True} & \textbf{Sample Mean} & \textbf{BSE} & \textbf{BCI} & \textbf{MSE} & \textbf{Bias} & \textbf{CP} \\ \hline
         \multirow{6}{*}{ 150 } & $\phi_1$ &  0.4  &  0.4014  (NA) &  0.053  &  (0.2718, 0.4878)  &  0.0017  (NA) &  0.0014  (NA) &  97.2  \\ \cline{2-9}
                               & $\eta$   &  1  &  1.0733  ( 0.9899 ) &  0.3223  &  (0.7453, 1.9971)  &  0.0407  ( 0.0345 ) &  0.0733  ( -0.0101 ) &  99.6  \\ \cline{2-9}
                               & $\beta$  &  5  &  4.9232  ( 6.6769 ) &  4.8201  &  (1.9024, 16.8447)  &  4.0173  ( 14.9 ) &  -0.0768  ( 1.6769 ) &  99.8  \\ \cline{2-9}
                               & $u$      &  11.51293  &  11.2748  ( 14.1025 ) &  2.0904  &  (7.8559, 15.7551)  &  2.8547  ( 14.7699 ) &  -0.2381  ( 2.5896 ) &  92.8  \\ \cline{2-9}
                               & $\xi$    &  0.2  &  0.0353  ( 0.3213 ) &  5.2948  &  (-1.4304, 13.9537)  &  0.2846  ( 0.818 ) &  -0.1647  ( 0.1213 ) &  100  \\ \cline{2-9}
                               & $\sigma$ &  5  &  6.201  ( 6.2538 ) &  4.3969  &  (0, 13.6761)  &  14.9051  ( 39.7777 ) &  1.201  ( 1.2538 ) &  95  \\ \cline{2-9}
                               & $\phi_2$ &  0.1  &  0.1038  ( 0.1139 ) &  0.0259  &  (0.0583, 0.1692)  &  1e-04  ( 0.003 ) &  0.0038  ( 0.0139 ) &  98.8  \\ \hline

          \multirow{6}{*}{ 200 } & $\phi_1$ &  0.4  &  0.4018  (NA) &  0.0391  &  (0.337, 0.4931)  &  0.0012  (NA) &  0.0018  (NA) &  97.8  \\ \cline{2-9}
                               & $\eta$   &  1  &  1.0643  ( 0.9862 ) &  0.1973  &  (0.7777, 1.5535)  &  0.0305  ( 0.0256 ) &  0.0643  ( -0.0138 ) &  99.4  \\ \cline{2-9}
                               & $\beta$  &  5  &  4.827  ( 6.4291 ) &  4.5364  &  (2.3914, 11.7833)  &  2.4932  ( 9.9507 ) &  -0.173  ( 1.4291 ) &  100  \\ \cline{2-9}
                               & $u$      &  11.51293  &  11.288  ( 14.2069 ) &  0.9791  &  (7.4964, 11.487)  &  2.0468  ( 16.8521 ) &  -0.2249  ( 2.694 ) &  95.6  \\ \cline{2-9}
                               & $\xi$    &  0.2  &  0.0847  ( 0.2062 ) &  1.5187  &  (-0.2797, 5.4876)  &  0.1786  ( 0.5406 ) &  -0.1153  ( 0.0062 ) &  100  \\ \cline{2-9}
                               & $\sigma$ &  5  &  5.8022  ( 6.3743 ) &  1.876  &  (0.4122, 6.5289)  &  9.146  ( 33.0265 ) &  0.8022  ( 1.3743 ) &  98  \\ \cline{2-9}
                               & $\phi_2$ &  0.1  &  0.1036  ( 0.1119 ) &  0.0171  &  (0.0753, 0.1433)  &  1e-04  ( 0.0019 ) &  0.0036  ( 0.0119 ) &  99.4  \\ \hline                   
         \multirow{6}{*}{ 300 } & $\phi_1$ &  0.4  &  0.4007  (NA) &  0.0351  &  (0.3428, 0.4822)  &  8e-04  (NA) &  7e-04  (NA) &  98.8  \\ \cline{2-9}
                               & $\eta$   &  1  &  1.0551  ( 0.98 ) &  0.1803  &  (0.7586, 1.4704)  &  0.0206  ( 0.017 ) &  0.0551  ( -0.02 ) &  99.8  \\ \cline{2-9}
                               & $\beta$  &  5  &  4.734  ( 6.175 ) &  5.2947  &  (2.6529, 13.4997)  &  1.6077  ( 5.9934 ) &  -0.266  ( 1.175 ) &  99.6  \\ \cline{2-9}
                               & $u$      &  11.51293  &  11.3269  ( 14.1647 ) &  1.1467  &  (8.4679, 13.003)  &  1.3087  ( 14.7626 ) &  -0.1861  ( 2.6518 ) &  97.4  \\ \cline{2-9}
                               & $\xi$    &  0.2  &  0.1348  ( 0.163 ) &  1.7954  &  (-0.3065, 6.5139)  &  0.0947  ( 0.3353 ) &  -0.0652  ( -0.037 ) &  100  \\ \cline{2-9}
                               & $\sigma$ &  5  &  5.3717  ( 6.251 ) &  2.3184  &  (0.5083, 8.8321)  &  4.1284  ( 30.2196 ) &  0.3717  ( 1.251 ) &  98.4  \\ \cline{2-9}
                               & $\phi_2$ &  0.1  &  0.1033  ( 0.1118 ) &  0.0178  &  (0.0706, 0.1425)  &  1e-04  ( 0.0019 ) &  0.0033  ( 0.0118 ) &  99.8  \\ \hline
                               
      \multirow{6}{*}{ 400 } & $\phi_1$ &  0.4  &  0.4006  (NA) &  0.0321  &  (0.3666, 0.4939)  &  6e-04  (NA) &  6e-04  (NA) &  98.2  \\ \cline{2-9}
                               & $\eta$   &  1  &  1.0489  ( 0.9741 ) &  0.1959  &  (0.8448, 1.6278)  &  0.0155  ( 0.0133 ) &  0.0489  ( -0.0259 ) &  99.8  \\ \cline{2-9}
                               & $\beta$  &  5  &  4.7365  ( 6.1054 ) &  4.5211  &  (2.3965, 11.8848)  &  1.2963  ( 4.8553 ) &  -0.2635  ( 1.1054 ) &  100  \\ \cline{2-9}
                               & $u$      &  11.51293  &  11.3613  ( 14.2158 ) &  1.058  &  (8.8512, 13.0158)  &  1.0459  ( 17.1118 ) &  -0.1516  ( 2.7028 ) &  97.6  \\ \cline{2-9}
                               & $\xi$    &  0.2  &  0.162  ( 0.1685 ) &  1.5336  &  (-0.3573, 5.5673)  &  0.062  ( 0.2448 ) &  -0.038  ( -0.0315 ) &  100  \\ \cline{2-9}
                               & $\sigma$ &  5  &  5.1929  ( 5.9732 ) &  2.1288  &  (0.5685, 7.6824)  &  2.5714  ( 17.4148 ) &  0.1929  ( 0.9732 ) &  98.8  \\ \cline{2-9}
                               & $\phi_2$ &  0.1  &  0.1029  ( 0.1123 ) &  0.0186  &  (0.068, 0.1438)  &  1e-04  ( 0.0021 ) &  0.0029  ( 0.0123 ) &  99.6  \\ \hline
                               
        \multirow{6}{*}{ 500 } & $\phi_1$ &  0.4  &  0.4005  (NA) &  0.03  &  (0.3691, 0.4887)  &  5e-04  (NA) &  5e-04  (NA) &  98.8  \\ \cline{2-9}
                               & $\eta$   &  1  &  1.0458  ( 0.9718 ) &  0.1624  &  (0.7955, 1.442)  &  0.0131  ( 0.011 ) &  0.0458  ( -0.0282 ) &  100  \\ \cline{2-9}
                               & $\beta$  &  5  &  4.713  ( 6.0371 ) &  3.7174  &  (2.7281, 12.1438)  &  1.0375  ( 3.8358 ) &  -0.287  ( 1.0371 ) &  99.8  \\ \cline{2-9}
                               & $u$      &  11.51293  &  11.3612  ( 14.2047 ) &  1.231  &  (8.6151, 13.0803)  &  0.8638  ( 19.7881 ) &  -0.1518  ( 2.6918 ) &  97.4  \\ \cline{2-9}
                               & $\xi$    &  0.2  &  0.1762  ( 0.1689 ) &  1.4359  &  (-0.3678, 5.6121)  &  0.0449  ( 0.1662 ) &  -0.0238  ( -0.0311 ) &  100  \\ \cline{2-9}
                               & $\sigma$ &  5  &  5.0991  ( 5.7776 ) &  2.0339  &  (0.6512, 7.87)  &  1.8293  ( 11.5872 ) &  0.0991  ( 0.7776 ) &  98.2  \\ \cline{2-9}
                               & $\phi_2$ &  0.1  &  0.1031  ( 0.1136 ) &  0.0197  &  (0.0606, 0.1434)  &  1e-04  ( 0.0026 ) &  0.0031  ( 0.0136 ) &  99.6  \\ \hline

    \end{tabular}
\captionsetup{
  labelfont=bf, 
  textfont=normal 
}
    \label{Table-1}
\end{table*}
\end{landscape}

\begin{landscape} 
\begin{table*}[htbp]
\caption{Monte Carlo simulation results ($N=2000$) for sample sizes $n = 150, 200, 300, 400,$ and $500$, with an inlier proportion of 40\% and a light-tailed GPD component with a tail fraction of 10\%, reporting sample mean, BSE, 95\% BCI, MSE, Bias, and CP.}
    \centering
    \scriptsize 
    \setlength{\tabcolsep}{3pt} 
    \renewcommand{\arraystretch}{1.3} 
    \begin{tabular}{|p{2.5cm}|p{1.8cm}|p{1.8cm}|p{2.4cm}|p{2cm}|p{3cm}|p{2.4cm}|p{2.3cm}|p{1.8cm}|} 
        \hline
         \textbf{$n$}  & \textbf{Parameter} & \textbf{True} & \textbf{Sample Mean} & \textbf{BSE} & \textbf{BCI} & \textbf{MSE} & \textbf{Bias} & \textbf{CP} \\ \hline
        \multirow{6}{*}{ 150 } & $\phi_1$ &  0.4  &  0.4016  (NA) &  0.0509  &  (0.3113, 0.5139)  &  0.0017  (NA) &  0.0016  (NA) &  97.4  \\ \cline{2-9}
                               & $\eta$   &  1  &  1.0902  ( 0.9931 ) &  0.4219  &  (0.8887, 2.5004)  &  0.0426  ( 0.0313 ) &  0.0902  ( -0.0069 ) &  99.8  \\ \cline{2-9}
                               & $\beta$  &  5  &  4.6895  ( 6.5396 ) &  5.215  &  (1.4398, 13.1438)  &  2.5259  ( 12.4919 ) &  -0.3105  ( 1.5396 ) &  99.8  \\ \cline{2-9}
                               & $u$      &  11.51293  &  11.2448  ( 13.8795 ) &  1.4661  &  (6.8412, 12.8229)  &  2.6209  ( 9.5811 ) &  -0.2682  ( 2.3666 ) &  93.8  \\ \cline{2-9}
                               & $\xi$    &  -0.2  &  -0.4207  ( -0.0291 ) &  4.5609  &  (-1.3193, 11.9106)  &  0.2487  ( 0.5243 ) &  -0.2207  ( 0.1709 ) &  100  \\ \cline{2-9}
                               & $\sigma$ &  5  &  6.4367  ( 4.1139 ) &  3.6248  &  (0.0277, 11.6902)  &  12.8092  ( 11.9673 ) &  1.4367  ( -0.8861 ) &  96.8  \\ \cline{2-9}
                               & $\phi_2$ &  0.1  &  0.1036  ( 0.1116 ) &  0.0239  &  (0.0631, 0.1638)  &  2e-04  ( 0.0019 ) &  0.0036  ( 0.0116 ) &  99.6  \\ \hline
        \multirow{6}{*}{ 200 } & $\phi_1$ &  0.4  &  0.4014  (NA) &  0.0415  &  (0.3186, 0.4816)  &  0.0012  (NA) &  0.0014  (NA) &  98.8  \\ \cline{2-9}
                               & $\eta$   &  1  &  1.0722  ( 0.9847 ) &  0.2271  &  (0.7675, 1.6546)  &  0.03  ( 0.0227 ) &  0.0722  ( -0.0153 ) &  100  \\ \cline{2-9}
                               & $\beta$  &  5  &  4.6928  ( 6.3599 ) &  3.5909  &  (2.0997, 11.8932)  &  1.8951  ( 9.0643 ) &  -0.3072  ( 1.3599 ) &  100  \\ \cline{2-9}
                               & $u$      &  11.51293  &  11.3304  ( 13.8283 ) &  1.2368  &  (7.2405, 12.1245)  &  1.9234  ( 8.5928 ) &  -0.1825  ( 2.3154 ) &  96.4  \\ \cline{2-9}
                               & $\xi$    &  -0.2  &  -0.3548  ( -0.1624 ) &  2.2665  &  (-0.7558, 7.8823)  &  0.1503  ( 0.2999 ) &  -0.1548  ( 0.0376 ) &  100  \\ \cline{2-9}
                               & $\sigma$ &  5  &  5.98  ( 4.4598 ) &  2.3535  &  (0.3139, 8.8821)  &  7.8561  ( 9.3276 ) &  0.98  ( -0.5402 ) &  98.8  \\ \cline{2-9}
                               & $\phi_2$ &  0.1  &  0.1029  ( 0.1122 ) &  0.0198  &  (0.0686, 0.1491)  &  1e-04  ( 0.0014 ) &  0.0029  ( 0.0122 ) &  100  \\ \hline                        
      \multirow{6}{*}{ 300 } & $\phi_1$ &  0.4  &  0.4009  (NA) &  0.0366  &  (0.3419, 0.4892)  &  8e-04  (NA) &  9e-04  (NA) &  99.4  \\ \cline{2-9}
                               & $\eta$   &  1  &  1.0626  ( 0.9744 ) &  0.1968  &  (0.7753, 1.5603)  &  0.0205  ( 0.0162 ) &  0.0626  ( -0.0256 ) &  100  \\ \cline{2-9}
                               & $\beta$  &  5  &  4.623  ( 6.2852 ) &  3.7294  &  (2.3949, 13.4302)  &  1.2301  ( 6.4886 ) &  -0.377  ( 1.2852 ) &  100  \\ \cline{2-9}
                               & $u$      &  11.51293  &  11.3434  ( 13.8601 ) &  1.1666  &  (8.3574, 12.8928)  &  1.2609  ( 8.246 ) &  -0.1695  ( 2.3472 ) &  98.4  \\ \cline{2-9}
                               & $\xi$    &  -0.2  &  -0.2983  ( -0.245 ) &  1.8915  &  (-0.6716, 6.5678)  &  0.0818  ( 0.1843 ) &  -0.0983  ( -0.045 ) &  100  \\ \cline{2-9}
                               & $\sigma$ &  5  &  5.6034  ( 4.7422 ) &  2.2712  &  (0.4516, 8.5295)  &  4.3803  ( 6.5155 ) &  0.6034  ( -0.2578 ) &  99.2  \\ \cline{2-9}
                               & $\phi_2$ &  0.1  &  0.1026  ( 0.1101 ) &  0.0198  &  (0.0669, 0.1478)  &  1e-04  ( 0.0013 ) &  0.0026  ( 0.0101 ) &  100  \\ \hline
                               
   \multirow{6}{*}{ 400 } & $\phi_1$ &  0.4  &  0.401  (NA) &  0.0368  &  (0.3495, 0.4995)  &  6e-04  (NA) &  0.001  (NA) &  99  \\ \cline{2-9}
                               & $\eta$   &  1  &  1.0575  ( 0.9709 ) &  0.2172  &  (0.7863, 1.6433)  &  0.0155  ( 0.0125 ) &  0.0575  ( -0.0291 ) &  100  \\ \cline{2-9}
                               & $\beta$  &  5  &  4.6088  ( 6.1842 ) &  3.6571  &  (2.19, 12.798)  &  1.078  ( 5.006 ) &  -0.3912  ( 1.1842 ) &  100  \\ \cline{2-9}
                               & $u$      &  11.51293  &  11.3791  ( 13.8276 ) &  1.3352  &  (8.5167, 13.8758)  &  0.9787  ( 7.8276 ) &  -0.1338  ( 2.3146 ) &  98  \\ \cline{2-9}
                               & $\xi$    &  -0.2  &  -0.265  ( -0.2397 ) &  2.2304  &  (-0.8132, 7.3489)  &  0.0473  ( 0.1223 ) &  -0.065  ( -0.0397 ) &  100  \\ \cline{2-9}
                               & $\sigma$ &  5  &  5.357  ( 4.6528 ) &  2.9611  &  (0.5012, 10.2852)  &  2.5628  ( 4.7435 ) &  0.357  ( -0.3472 ) &  99.6  \\ \cline{2-9}
                               & $\phi_2$ &  0.1  &  0.1023  ( 0.1107 ) &  0.0216  &  (0.0597, 0.1506)  &  1e-04  ( 0.0012 ) &  0.0023  ( 0.0107 ) &  99.6  \\ \hline
                               
         \multirow{6}{*}{ 500 } & $\phi_1$ &  0.4  &  0.4009  (NA) &  0.0311  &  (0.3492, 0.4744)  &  5e-04  (NA) &  9e-04  (NA) &  99.4  \\ \cline{2-9}
                               & $\eta$   &  1  &  1.0558  ( 0.9683 ) &  0.1638  &  (0.7713, 1.4328)  &  0.0132  ( 0.0108 ) &  0.0558  ( -0.0317 ) &  100  \\ \cline{2-9}
                               & $\beta$  &  5  &  4.5824  ( 6.0976 ) &  2.6737  &  (2.6421, 10.9845)  &  0.9037  ( 3.9509 ) &  -0.4176  ( 1.0976 ) &  100  \\ \cline{2-9}
                               & $u$      &  11.51293  &  11.3942  ( 13.7412 ) &  1.0441  &  (9.1254, 13.3594)  &  0.7906  ( 7.0306 ) &  -0.1188  ( 2.2282 ) &  98.8  \\ \cline{2-9}
                               & $\xi$    &  -0.2  &  -0.2507  ( -0.2357 ) &  1.3818  &  (-0.6334, 5.0342)  &  0.0346  ( 0.0879 ) &  -0.0507  ( -0.0357 ) &  100  \\ \cline{2-9}
                               & $\sigma$ &  5  &  5.278  ( 4.606 ) &  1.97  &  (0.6256, 7.5148)  &  1.9003  ( 3.3344 ) &  0.278  ( -0.394 ) &  99.4  \\ \cline{2-9}
                               & $\phi_2$ &  0.1  &  0.1019  ( 0.1127 ) &  0.0213  &  (0.0608, 0.1481)  &  1e-04  ( 0.0012 ) &  0.0019  ( 0.0127 ) &  100  \\ \hline

    \end{tabular}
\captionsetup{
  labelfont=bf, 
  textfont=normal 
}
    \label{Table-2}
\end{table*}
\end{landscape}

\begin{landscape}
\begin{table*}[htbp]
   \caption{Monte Carlo simulation results ($N=2000$) for sample sizes $n = 150, 200, 300, 400,$ and $500$, with an inlier proportion of 40\% and an exponential-tailed GPD component with a tail fraction of 10\%, reporting sample mean, BSE, 95\% BCI, MSE, bias, and CP.}
    \centering
    \scriptsize 
    \setlength{\tabcolsep}{3pt} 
    \renewcommand{\arraystretch}{1.3} 
    \begin{tabular}{|p{2.5cm}|p{1.8cm}|p{1.8cm}|p{2.4cm}|p{2cm}|p{3cm}|p{2.4cm}|p{2.3cm}|p{1.8cm}|} 
        \hline
         \textbf{$n$}  & \textbf{Parameter} & \textbf{True} & \textbf{Sample Mean} & \textbf{BSE} & \textbf{BCI} & \textbf{MSE} & \textbf{Bias} & \textbf{CP} \\ \hline
      \multirow{6}{*}{ 150 } & $\phi_1$ &  0.4  &  0.4017  (NA) &  0.0504  &  (0.264, 0.4648)  &  0.0017  (NA) &  0.0017  (NA) &  96.6  \\ \cline{2-9}
                               & $\eta$   &  1  &  1.082  ( 0.9913 ) &  0.3873  &  (0.8758, 2.3558)  &  0.0416  ( 0.0326 ) &  0.082  ( -0.0087 ) &  99.6  \\ \cline{2-9}
                               & $\beta$  &  5  &  4.7834  ( 6.6294 ) &  4.2775  &  (1.6728, 14.9815)  &  2.8842  ( 14.4427 ) &  -0.2166  ( 1.6294 ) &  99.6  \\ \cline{2-9}
                               & $u$      &  11.51293  &  11.2564  ( 13.9772 ) &  1.6218  &  (7.5661, 14.1347)  &  2.7279  ( 11.3212 ) &  -0.2565  ( 2.4643 ) &  92.8  \\ \cline{2-9}
                               & $\xi$    &  0  &  -0.1936  ( 0.1315 ) &  4.7132  &  (-1.3287, 12.4424)  &  0.2604  ( 0.6512 ) &  -0.1936  ( 0.1315 ) &  100  \\ \cline{2-9}
                               & $\sigma$ &  5  &  6.342  ( 5.156 ) &  3.7099  &  (0.0118, 11.9709)  &  13.9194  ( 21.2158 ) &  1.342  ( 0.156 ) &  95.6  \\ \cline{2-9}
                               & $\phi_2$ &  0.1  &  0.1036  ( 0.1127 ) &  0.025  &  (0.0595, 0.1651)  &  1e-04  ( 0.002 ) &  0.0036  ( 0.0127 ) &  99  \\ \hline
      \multirow{6}{*}{ 200 } & $\phi_1$ &  0.4  &  0.4014  (NA) &  0.0409  &  (0.327, 0.4889)  &  0.0012  (NA) &  0.0014  (NA) &  98.6  \\ \cline{2-9}
                               & $\eta$   &  1  &  1.066  ( 0.9893 ) &  0.2121  &  (0.7608, 1.581)  &  0.0298  ( 0.0246 ) &  0.066  ( -0.0107 ) &  99.8  \\ \cline{2-9}
                               & $\beta$  &  5  &  4.7664  ( 6.3271 ) &  13.3727  &  (2.3616, 15.1065)  &  2.1343  ( 9.4745 ) &  -0.2336  ( 1.3271 ) &  100  \\ \cline{2-9}
                               & $u$      &  11.51293  &  11.2937  ( 13.8895 ) &  1.1992  &  (7.6729, 12.3516)  &  2.0207  ( 10.6206 ) &  -0.2192  ( 2.3765 ) &  96.6  \\ \cline{2-9}
                               & $\xi$    &  0  &  -0.137  ( 0.0221 ) &  2.237  &  (-0.5681, 8.2515)  &  0.162  ( 0.3924 ) &  -0.137  ( 0.0221 ) &  100  \\ \cline{2-9}
                               & $\sigma$ &  5  &  5.9233  ( 5.3654 ) &  2.4937  &  (0.3038, 9.1679)  &  8.5527  ( 18.1322 ) &  0.9233  ( 0.3654 ) &  98.8  \\ \cline{2-9}
                               & $\phi_2$ &  0.1  &  0.1035  ( 0.1158 ) &  0.0187  &  (0.0709, 0.1476)  &  1e-04  ( 0.0027 ) &  0.0035  ( 0.0158 ) &  99.8  \\ \hline                        
   \multirow{6}{*}{ 300 } & $\phi_1$ &  0.4  &  0.4006  (NA) &  0.0364  &  (0.3525, 0.4994)  &  9e-04  (NA) &  6e-04  (NA) &  99  \\ \cline{2-9}
                               & $\eta$   &  1  &  1.057  ( 0.9796 ) &  0.1856  &  (0.7635, 1.4827)  &  0.0201  ( 0.0169 ) &  0.057  ( -0.0204 ) &  99.8  \\ \cline{2-9}
                               & $\beta$  &  5  &  4.7061  ( 6.1797 ) &  4.8363  &  (2.6206, 15.4328)  &  1.7432  ( 5.9746 ) &  -0.2939  ( 1.1797 ) &  100  \\ \cline{2-9}
                               & $u$      &  11.51293  &  11.3124  ( 13.9208 ) &  1.3129  &  (8.0526, 12.9753)  &  1.3404  ( 10.077 ) &  -0.2005  ( 2.4078 ) &  97.4  \\ \cline{2-9}
                               & $\xi$    &  0  &  -0.0815  ( -0.0426 ) &  2.106  &  (-0.5473, 7.3149)  &  0.0847  ( 0.2424 ) &  -0.0815  ( -0.0426 ) &  100  \\ \cline{2-9}
                               & $\sigma$ &  5  &  5.4862  ( 5.4306 ) &  2.6638  &  (0.4128, 9.2719)  &  4.234  ( 12.437 ) &  0.4862  ( 0.4306 ) &  98.8  \\ \cline{2-9}
                               & $\phi_2$ &  0.1  &  0.1035  ( 0.113 ) &  0.0203  &  (0.064, 0.148)  &  1e-04  ( 0.0019 ) &  0.0035  ( 0.013 ) &  99.8  \\ \hline
                               
       \multirow{6}{*}{ 400 } & $\phi_1$ &  0.4  &  0.4007  (NA) &  0.0338  &  (0.3559, 0.4945)  &  6e-04  (NA) &  7e-04  (NA) &  98.8  \\ \cline{2-9}
                               & $\eta$   &  1  &  1.0528  ( 0.9751 ) &  0.1895  &  (0.7666, 1.5207)  &  0.0158  ( 0.0133 ) &  0.0528  ( -0.0249 ) &  99.8  \\ \cline{2-9}
                               & $\beta$  &  5  &  4.683  ( 6.0891 ) &  4.0013  &  (2.5807, 14.3109)  &  1.2025  ( 4.703 ) &  -0.317  ( 1.0891 ) &  100  \\ \cline{2-9}
                               & $u$      &  11.51293  &  11.3648  ( 13.9104 ) &  1.2972  &  (9.2156, 14.0921)  &  0.991  ( 9.9102 ) &  -0.1481  ( 2.3975 ) &  98  \\ \cline{2-9}
                               & $\xi$    &  0  &  -0.0493  ( -0.044 ) &  2.0165  &  (-0.5505, 7.1442)  &  0.0537  ( 0.1803 ) &  -0.0493  ( -0.044 ) &  100  \\ \cline{2-9}
                               & $\sigma$ &  5  &  5.2682  ( 5.3194 ) &  2.7809  &  (0.5316, 10.2327)  &  2.607  ( 8.9892 ) &  0.2682  ( 0.3194 ) &  99.4  \\ \cline{2-9}
                               & $\phi_2$ &  0.1  &  0.1026  ( 0.1131 ) &  0.021  &  (0.0646, 0.1495)  &  1e-04  ( 0.0018 ) &  0.0026  ( 0.0131 ) &  99.6  \\ \hline
        \multirow{6}{*}{ 500 } & $\phi_1$ &  0.4  &  0.4006  (NA) &  0.0299  &  (0.3382, 0.4579)  &  5e-04  (NA) &  6e-04  (NA) &  99.2  \\ \cline{2-9}
                               & $\eta$   &  1  &  1.0479  ( 0.9732 ) &  0.1603  &  (0.7961, 1.4348)  &  0.0127  ( 0.0115 ) &  0.0479  ( -0.0268 ) &  99.8  \\ \cline{2-9}
                               & $\beta$  &  5  &  4.6823  ( 6.015 ) &  3.1332  &  (2.7569, 11.4238)  &  1.0547  ( 3.8529 ) &  -0.3177  ( 1.015 ) &  100  \\ \cline{2-9}
                               & $u$      &  11.51293  &  11.3799  ( 13.8769 ) &  1.1319  &  (9.4227, 13.556)  &  0.7928  ( 9.6239 ) &  -0.133  ( 2.364 ) &  98.6  \\ \cline{2-9}
                               & $\xi$    &  0  &  -0.0401  ( -0.0375 ) &  1.3478  &  (-0.3975, 4.9998)  &  0.0384  ( 0.1256 ) &  -0.0401  ( -0.0375 ) &  100  \\ \cline{2-9}
                               & $\sigma$ &  5  &  5.2132  ( 5.2268 ) &  1.8801  &  (0.6678, 7.3568)  &  1.9534  ( 6.7681 ) &  0.2132  ( 0.2268 ) &  99.6  \\ \cline{2-9}
                               & $\phi_2$ &  0.1  &  0.1025  ( 0.1144 ) &  0.02  &  (0.0622, 0.1457)  &  1e-04  ( 0.0023 ) &  0.0025  ( 0.0144 ) &  100  \\ \hline

    \end{tabular}
\captionsetup{
  labelfont=bf, 
  textfont=normal 
}
    \label{Table-3}
\end{table*}
\end{landscape}

\begin{table}[!htbp]
\caption{Parameter estimates obtained from the proposed FEVIMM and classical methods (MEP, PSP, and PP) for data with and without inliers.}
\centering
\small
\setlength{\tabcolsep}{4pt} 
\renewcommand{\arraystretch}{1.0}
\begin{tabular}{lcccccccc}
\hline
\textbf{Parameter} 
& \textbf{True}  & \textbf{FEVIMM} 
& \textbf{MEP ($\mathbf{x}$)} & \textbf{MEP ($\mathbf{x}^{+}$)}    
& \textbf{PSP ($\mathbf{x}$)} & \textbf{PSP ($\mathbf{x}^{+}$)} 
& \textbf{PP ($\mathbf{x}$)}  & \textbf{PP ($\mathbf{x}^{+}$)} \\ \hline

\(\phi_1\)  & 0.400   & 0.412     & N/A & N/A & N/A & N/A & N/A & N/A \\
\(\eta\)    & 2.000   & 2.174     & N/A & N/A & N/A & N/A & N/A & N/A \\
\(\beta\)   & 5.000   & 4.313     & N/A & N/A & N/A & N/A & N/A & N/A \\
\(u\)       & 19.448  & 18.174    &   18.000  & 21.000 & 18.000  & 21.000 & 18.000 & 21.000 \\
\(\sigma\)  & 5.000   & 5.492     & 6.500 & 5.200 & 6.500 & 5.200 & N/A & N/A\\
\(\xi\)     & 0.200   & 0.057     & -0.006 & 0.140 & -0.006 & 0.140 & -0.240 & 0.110 \\
\(\phi_2\)  & 0.100   & 0.094     & N/A & N/A & N/A & N/A & N/A & N/A \\ \hline

\end{tabular}
\label{Table-MEP1}
\end{table}

\begin{table}[!htbp]
\caption{Parameter estimates obtained from the proposed FEVIMM and classical methods (MEP, PSP, and PP) for data with and without inliers.}
\centering
\small
\setlength{\tabcolsep}{4pt} 
\renewcommand{\arraystretch}{1.0}
\begin{tabular}{lcccccccc}
\hline
\textbf{Parameter} 
& \textbf{True}  & \textbf{FEVIMM} 
& \textbf{MEP ($\mathbf{x}$)} & \textbf{MEP ($\mathbf{x}^{+}$)}    
& \textbf{PSP ($\mathbf{x}$)} & \textbf{PSP ($\mathbf{x}^{+}$)} 
& \textbf{PP ($\mathbf{x}$)}  & \textbf{PP ($\mathbf{x}^{+}$)} \\ \hline

\(\phi_1\)  & 0.200   & 0.178     & N/A & N/A & N/A & N/A & N/A & N/A \\
\(\eta\)    & 1.000   & 1.036     & N/A & N/A & N/A & N/A & N/A & N/A \\
\(\beta\)   & 5.000   & 4.526     & N/A & N/A & N/A & N/A & N/A & N/A \\
\(u\)       & 11.513  & 11.487    & 11.000  & 12.000 & 11.000  & 12.000 & 11.000 & 12.000 \\
\(\sigma\)  & 5.000   & 5.129     & 5.900 & 6.000 & 5.900 & 6.000 & N/A & N/A\\
\(\xi\)     & 0.200   & 0.157     & 0.055 & 0.0051 & 0.055 & 0.0051 & 0.084 & 0.210 \\
\(\phi_2\)  & 0.100   & 0.104     & N/A & N/A & N/A & N/A & N/A & N/A \\ \hline

\end{tabular}
\label{Table-MEP2}
\end{table}

\begin{table}[!htbp]
\caption{Parameter estimates obtained from the proposed FEVIMM and classical methods (MEP, PSP, and PP) for data with and without inliers.}
\centering
\small
\setlength{\tabcolsep}{4pt} 
\renewcommand{\arraystretch}{1.0}
\begin{tabular}{lcccccccc}
\hline
\textbf{Parameter} 
& \textbf{True}  & \textbf{FEVIMM} 
& \textbf{MEP ($\mathbf{x}$)} & \textbf{MEP ($\mathbf{x}^{+}$)}    
& \textbf{PSP ($\mathbf{x}$)} & \textbf{PSP ($\mathbf{x}^{+}$)} 
& \textbf{PP ($\mathbf{x}$)}  & \textbf{PP ($\mathbf{x}^{+}$)} \\ \hline

\(\phi_1\)  & 0.400   & 0.403     & N/A & N/A & N/A & N/A & N/A & N/A \\
\(\eta\)    & 1.000   & 0.999     & N/A & N/A & N/A & N/A & N/A & N/A \\
\(\beta\)   & 5.000   & 6.312     & N/A & N/A & N/A & N/A & N/A & N/A \\
\(u\)       & 8.0472  & 10.855    &   11.000  & 15.000 & 11.000  & 15.000 & 11.000 & 15.000 \\
\(\sigma\)  & 5.000   & 6.823    & 7.100 & 9.000 & 7.100 & 9.000 & N/A & N/A\\
\(\xi\)     & 0.200   &  0.113     & 0.086 & -0.048 & 0.086 & -0.048 & -0.240 & -0.250 \\
\(\phi_2\)  & 0.200   & 0.109     & N/A & N/A & N/A & N/A & N/A & N/A \\ \hline
\end{tabular}
\label{Table-MEP3}
\end{table}

\subsection{Comparative Analysis of Bias and MSE: FEVIMM, EVMM, and FEVMM}\label{smstudyc}
\noindent
To assess how well the proposed FEVIMM performs at parameter estimation compared with the existing FEVMM and EVMM models, we considered various scenarios. In each case, the data were generated from FEVIMM. The performance of the estimators was evaluated using bias and MSE across sample sizes \( n \in \{150, 200, 300, 400, 500, 750, 1000\} \), based on Monte Carlo replications (\(N = 2000\)). The bias and MSE were computed for the parameter estimates obtained under the EVMM, FEVMM, and FEVIMM. 
The selected simulation settings are $\phi_1=0.4$, $\eta=4$, $\beta=1$, $u=6.6807$, $\xi=0.2$, $\sigma=4$, and $\phi_2=0.10$.
\noindent Figure~\ref{fig: bias comparison1} shows that the FEVIMM consistently yields lower bias and MSE across most parameter estimates than both the FEVMM and EVMM. The improvements are especially clear for the threshold ($u$), scale ($\sigma$), shape ($\xi$), and tail fraction ($\phi_2$) (the proportion of extremes), indicating that FEVIMM performs better at recovering these parameters under this data-generating process.
Across the scenarios, the observed trends remain consistent, showing similar patterns in the estimators' performance.

\begin{figure}[!htbp]
		\centering
		\includegraphics[width=18.5cm,height=20.0cm]{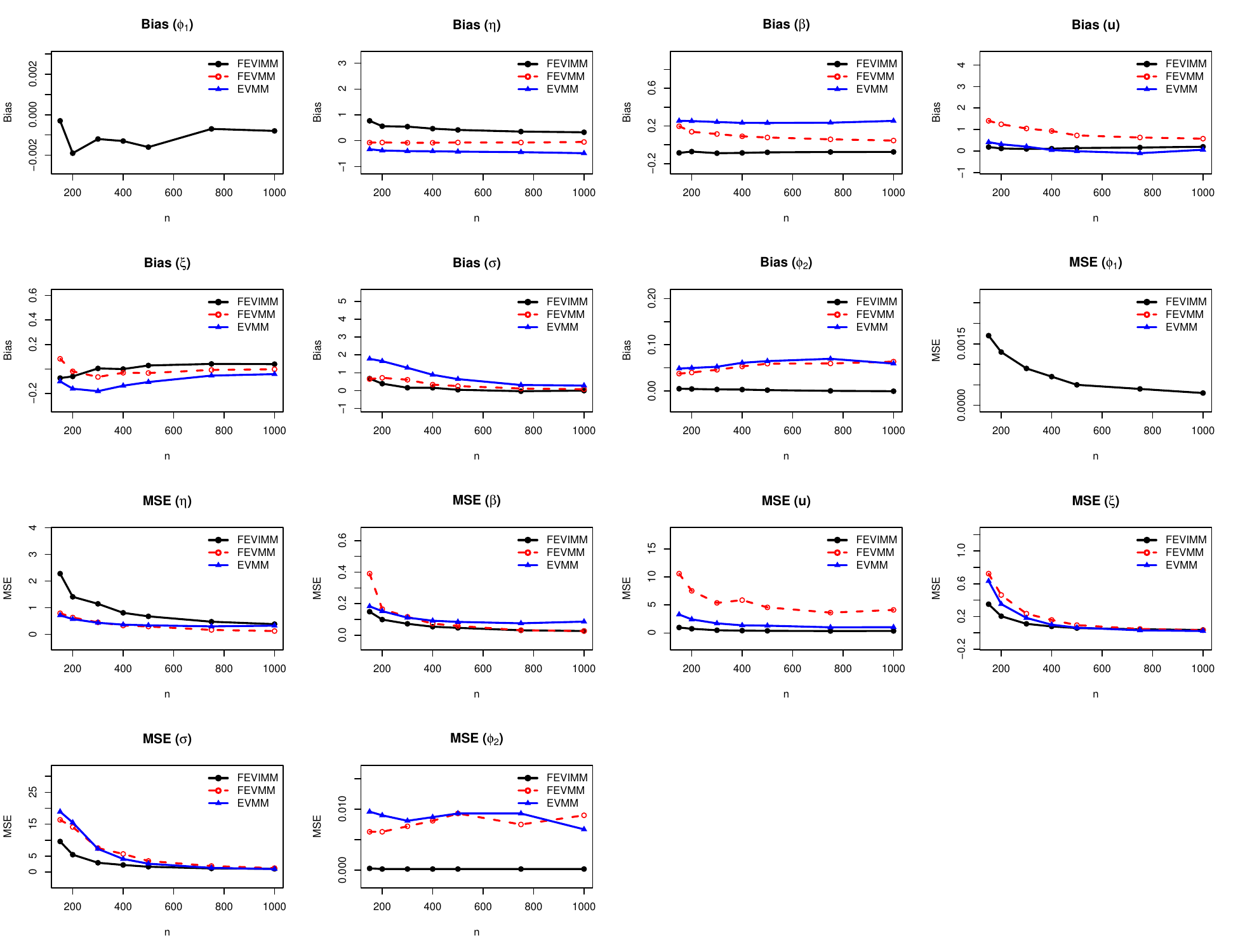}
        \caption{Bias and MSE comparison plots under Scenario 1, based on Monte Carlo replications ($N = 2000$): \(\phi_1 = 0.4\), \(\eta = 4\), \(\beta = 1\), \(u = 6.6807\), \(\xi = 0.2\), \(\sigma = 4\), \(\phi_2 = 0.10\).}
		\label{fig: bias comparison1}
   \end{figure} 

\subsection{Sensitivity Analysis with Respect to Inlier Proportions}
\noindent 
 To assess the robustness of the proposed FEVIMM under varying inlier proportions, a sensitivity analysis was conducted by simulating data from the FEVIMM~\eqref{evmmlpdf1} for
$\phi_1 \in \{0.1,0.2,0.3,0.4,0.5\}$. The remaining parameters were set to
$\eta=1$, $\beta=5$, $\sigma=5$, $\xi=0.2$, $\phi_2=0.1$, and $u=11.51$.
The analysis considered $n \in \{500,750,1000\}$ with $N=2000$ replications for each case.
Parameters were estimated using the MLE procedure described in Section~\ref{estimation}, and the corresponding bias and MSE were computed.

Figures~\ref{fig:inlier_sensitivity1} presents the bias and MSE for $n=1000$.
The bias remains close to zero across the considered inlier proportions, while the MSE is generally stable. For $\eta$ and $\beta$, the MSE increases slightly with $\phi_1$, reflecting the reduced number of non-zero observations available for estimating the bulk distribution parameters. Despite these variations, the proposed FEVIMM consistently yields lower bias and MSE than the FEVMM and EVMM, which do not account for inliers, demonstrating its robustness to varying inlier proportions.

\vspace{-0.2cm}
\begin{figure}[!htbp]
\centering
\makebox[\textwidth][c]{%
    \includegraphics[width=1.18\textwidth]{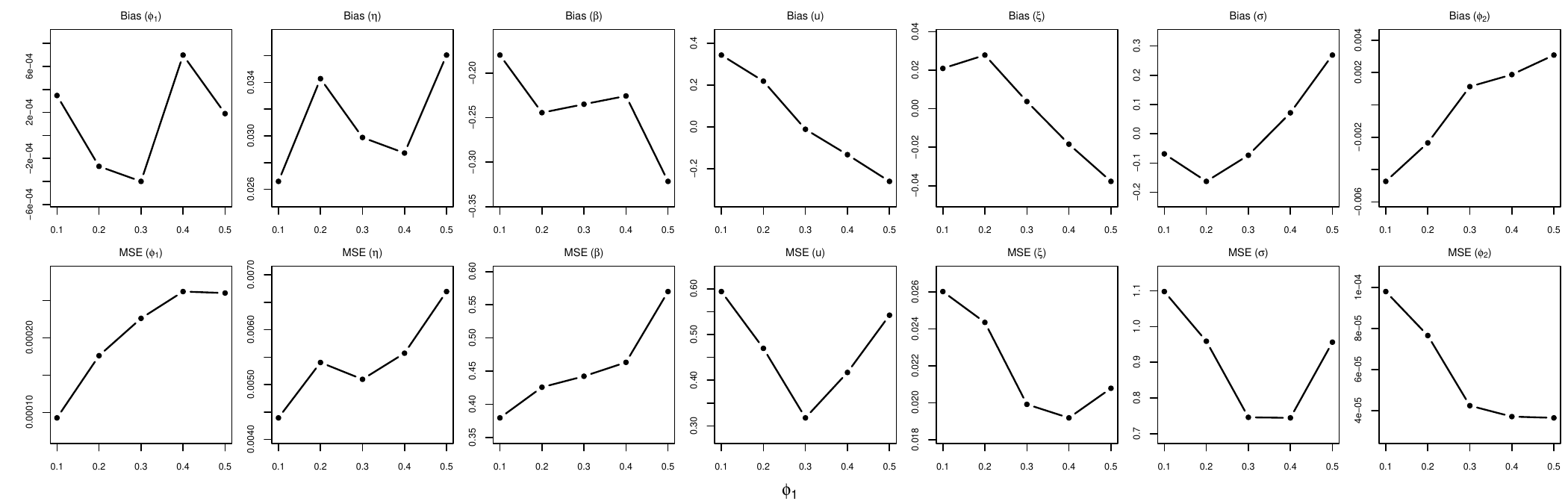}%
}
\vspace{-0.2cm}
\caption{Monte Carlo bias and MSE of the FEVIMM parameter estimates for varying inlier proportion $\phi_1$.}
\label{fig:inlier_sensitivity1}
\end{figure}

\vspace{-0.1cm}
\section{Application to Real Data}\label{arStudy2}
\noindent In this section, we illustrate the proposed methodology using two real-world datasets: weekly syphilis case counts from the \texttt{ZIM} package in \texttt{R} and monthly rainfall data for Pune. We compare the estimates of $u$ and extreme parameters such as $\xi$, $\sigma$, and $\phi_2$ with those obtained from classical methods including MEP (\cite{coles2001introduction}), PSP (\cite{coles2001introduction}), and PP (\cite{finkenstadt2003extreme}), as well as mixture models such as EVMM (\cite{behrens2004bayesian}) and FEVMM (\cite{macdonald2011flexible}).
We study risk measures for both datasets and present the corresponding results for the FEVIMM, EVMM, and FEVMM (see Figure~\ref{fig:return_syphilis} and \ref{fig:return_rainfall}). The MEP, the PSP, and PP are plotted for both real datasets (see Figure~\ref{fig:meanexcessreal}). 
\vspace{-0.2cm}
\subsubsection*{Illustration 1: Weekly Syphilis Cases in Maryland}
\noindent This study considers the real-life dataset \texttt{syph} from the \texttt{ZIM} package in \texttt{R}. The dataset records the weekly number of syphilis cases in the United States from $2007$ to $2010$, containing $209$ observations across $69$ variables. For analysis, we focus on the variable \texttt{a33} (Maryland), which represents the weekly reported cases in Maryland, and present the estimated results in Table~\ref{Table-21}.
\vspace{-0.1cm}\begin{table}[!htbp]
\caption{Parameter estimates from different models with standard errors in parentheses.}
\centering
\scriptsize
\setlength{\tabcolsep}{6pt} 
\renewcommand{\arraystretch}{1.2} 
\captionsetup{
  labelfont=bf,
  textfont=normal
}
\begin{tabular}{lcccc}
\hline
\textbf{Parameter} & \textbf{EVMM} & \textbf{FEVMM} & \textbf{FEVIMM} & \textbf{\(p\)-value} \\
\hline
\(\phi_1\) & N/A & N/A & 0.2993 (0.0064) & \(< 0.001\) \\
\(\eta\)   & 3.7702  (0.0104) & 2.1087 (0.05202) & 3.1322 (0.0064) & \(< 0.001\) \\
\(\beta\)  & 1.2562 (0.0043) & 2.2791 (0.1354) & 1.8197 (0.0068) & \(< 0.001\) \\
\(u\)      & 7.0008 (0.0244) &  4.9994 (0.0225)  & 7.0000 (0.0002) & \(< 0.001\) \\
\(\xi\)    & -0.3997 (0.0069) & 0.0859 (0.0094)  & -0.3354 (0.0064) & \(< 0.001\) \\
\(\sigma\) & 3.7946 (0.0291) & 1.6326 (0.0246)   & 3.6849 (0.0064) & \(< 0.001\) \\
\(\phi_2\) & N/A & 0.5067 (0.04082) & 0.0823 (0.0080) & \(< 0.001\) \\
\hline
\end{tabular}
\label{Table-21}
\end{table}
\vspace{-0.2cm}

\subsubsection*{Illustration 2: Monthly Rainfall in Pune, India}
\noindent
This study considers a real-world weather dataset from Pune, Maharashtra, India, collected from $1965$ to $2002$. The rainfall data were obtained from Kaggle: \href{https://www.kaggle.com/datasets/abhishekmamidi/precipitation-data-of-pune-from-1965-to-2002}{Pune Precipitation Dataset}. The dataset contains $456$ monthly precipitation records (in millimetres). Pune’s population is growing fast, and it is becoming an important financial and business centre. Extreme rainfall can cause financial losses and disrupt business activities, while drought and water availability can also affect the city’s economy and daily life. The estimated parameters based on the Pune rainfall dataset are given in Table~\ref{Table-22}.
\begin{table}[!htbp]
\caption{Parameter estimates from different models with standard errors in parentheses.}
\centering
\scriptsize
\setlength{\tabcolsep}{6pt} 
\renewcommand{\arraystretch}{1.2} 
\captionsetup{
  labelfont=bf,
  textfont=normal
}
\begin{tabular}{lcccc}
\hline
\textbf{Parameter} & \textbf{EVMM} & \textbf{FEVMM} & \textbf{FEVIMM} & \textbf{\(p\)-value} \\
\hline
\(\phi_1\) & N/A & N/A & 0.1134 (0.0148) & \(< 0.001\) \\
\(\eta\)   & 0.3338 (0.0190) & 0.3344 (0.0193) & 0.3382 (0.0175) & \(< 0.001\) \\
\(\beta\)  & 433.3581 (48.4452) & 444.4409 (91.9913) & 371.7527 (22.323) & \(< 0.001\) \\
\(u\)      & 422.9121 (0.0036) & 405.3587 (0.0044) & 402.5191 (0.0021) & \(< 0.001\) \\
\(\xi\)    & -0.4881 (0.2612) & -0.4855 (0.2304) & -0.2593 (0.1403) & \(< 0.005\) \\
\(\sigma\) & 201.6985 (62.0240) & 208.7014 (58.5758) & 156.8195 (28.5226) & \(< 0.001\) \\
\(\phi_2\) &  N/A & 0.1015 (0.0115) & 0.1012 (0.0141) & \(< 0.001\) \\
\hline
\end{tabular}
\label{Table-22}
\end{table}

\begin{figure}[!htbp]
    \centering
    \includegraphics[width=1.1\textwidth]{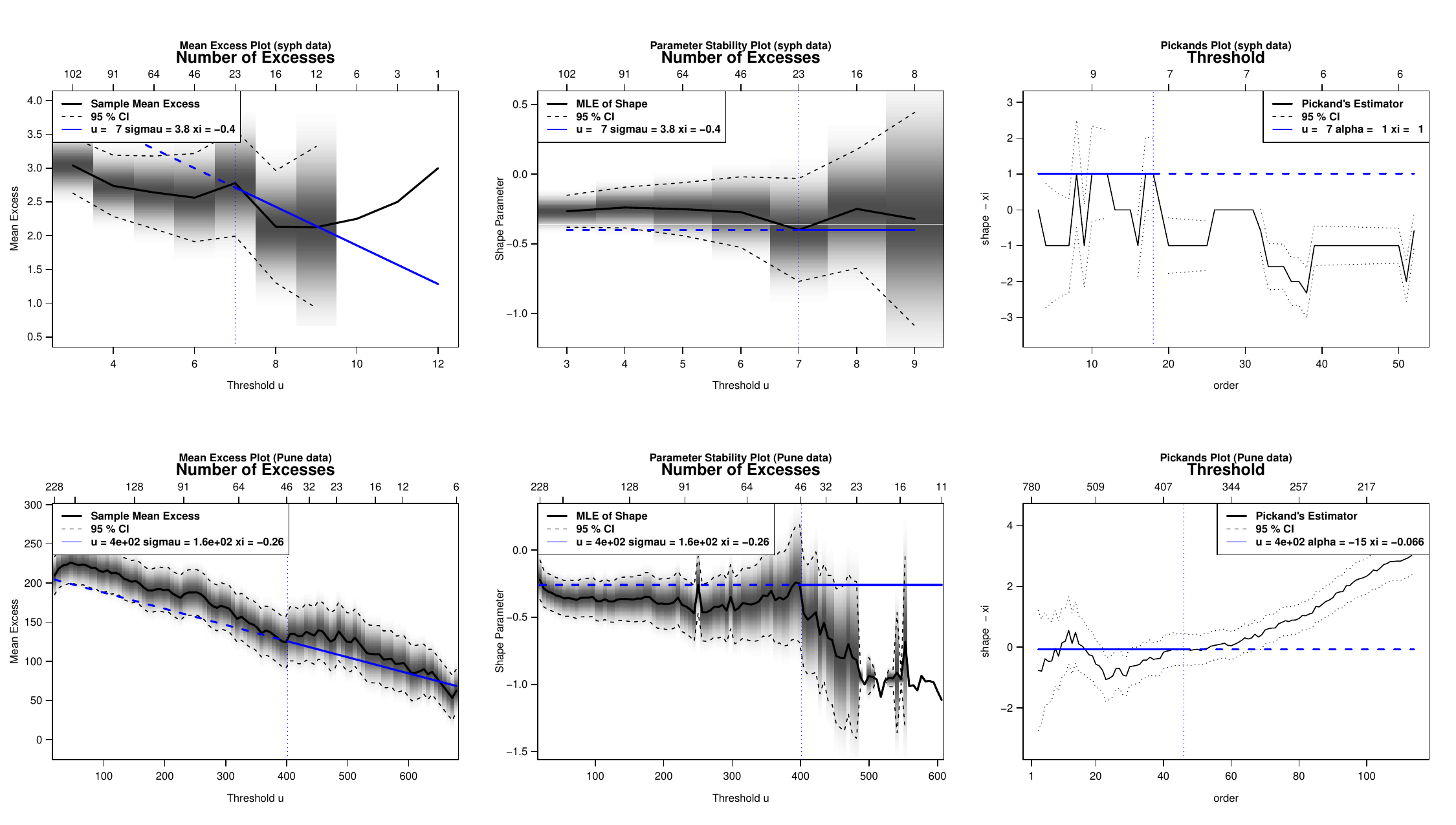}
    \caption{MEP, PSP, and PP for the syphilis and Pune rainfall datasets.}
    \label{fig:meanexcessreal}
\end{figure}
\noindent \textbf{Model Assessment and Comparative Analysis:} 
To assess the performance of the proposed FEVIMM against existing models (FEVMM and EVMM), we consider the normalized Akaike Information Criterion (nAIC) \cite{cohen2021normalized} and standard goodness-of-fit test, namely the $AD$, $CvM$, and $KS$ statistics. For each model (FEVIMM, FEVMM, and EVMM), the null hypothesis ($H_0$) states that the observed data follow the fitted distribution, i.e., $H_0: F(x) = F_{\hat{\theta}_M}(x)$, where $M$ denotes the model and $F_{\hat{\theta}_M}(x)$ its CDF with estimated parameters. The corresponding $p$-values are obtained using a parametric bootstrap (3{,}000 replications). The test statistics, $p$-values, and nAIC are reported in Table~\ref{tab:gof_results}. For each test, models with lower test statistics and higher $p$-values provide a better fit. The estimated parameters are presented in Tables~\ref{Table-21} and \ref{Table-22} for the syphilis and rainfall datasets. The results show that FEVIMM achieves the lowest nAIC and provides the best fit across all three goodness-of-fit tests among the models considered in both datasets. Furthermore, graphical assessments are provided in Figure~\ref{fig:return_comparison}. Figures~\ref{fig:hist_syphilis} and \ref{fig:hist_rainfall} show histograms of the syphilis and rainfall datasets, overlaid with fitted densities, highlighting the agreement between the data and the model. Figures~\ref{fig:return_syphilis} and \ref{fig:return_rainfall} present return level plots against return period, along with 95\% confidence intervals, illustrating the reliability of FEVIMM in estimating extreme quantiles relevant to real-world risk assessment. These findings demonstrate the ability of FEVIMM to capture key distributional features, including extreme events, the threshold, tail fraction, inliers, and the bulk of the distribution.
\begin{table}[!htbp]
\caption{Goodness-of-fit statistics for the Syphilis and Pune Rainfall datasets.}
\centering
\label{tab:gof_results}
\setlength{\tabcolsep}{3.2pt}
\renewcommand{\arraystretch}{1.1}
\begin{tabular}{|c|c|c|c|c|}
\hline
\textbf{Model} & $AD$ ($p$-value) & $CvM$ ($p$-value) & $KS$ ($p$-value) & \textbf{nAIC} \\
\hline
\multicolumn{5}{|c|}{\textbf{Syphilis Dataset}} \\
\hline
EVMM  & 344.89 ($\scriptstyle <0.0001$) & 2.4888 ($\scriptstyle <0.0001$) & 0.3168 ($\scriptstyle <0.0001$) & 4.5684 \\
FEVMM & 364.42 ($\scriptstyle <0.0001$) & 6.0075 ($\scriptstyle <0.0001$) & 0.2825 ($\scriptstyle <0.0001$) & 4.4953 \\
FEVIMM & 17.10 ($\scriptstyle 0.7848$) & 2.9253 ($\scriptstyle 0.8056$) & 0.1735 ($\scriptstyle >0.999$) & 4.4486 \\
\hline
\multicolumn{5}{|c|}{\textbf{Pune Rainfall Dataset}} \\
\hline
EVMM  & 5215.62 ($\scriptstyle <0.0001$) & 56.374 ($\scriptstyle <0.0001$) & 0.1426 ($\scriptstyle <0.0001$) & 10.4731 \\
FEVMM & 4657.30 ($\scriptstyle <0.0001$) & 54.8975 ($\scriptstyle <0.0001$) & 0.1432 ($\scriptstyle <0.0001$) & 10.4767 \\
FEVIMM & 6.9252 ($\scriptstyle 0.4023$) & 0.3607 ($\scriptstyle 0.9997$) & 0.0822 ($\scriptstyle >0.999$) & 9.9911 \\
\hline
\end{tabular}
\label{tab:gof_results}
\end{table}

\begin{figure}[!htbp]
    \centering
    \begin{subfigure}{0.48\textwidth}
        \centering
        \includegraphics[width=7.5cm, height=7cm]{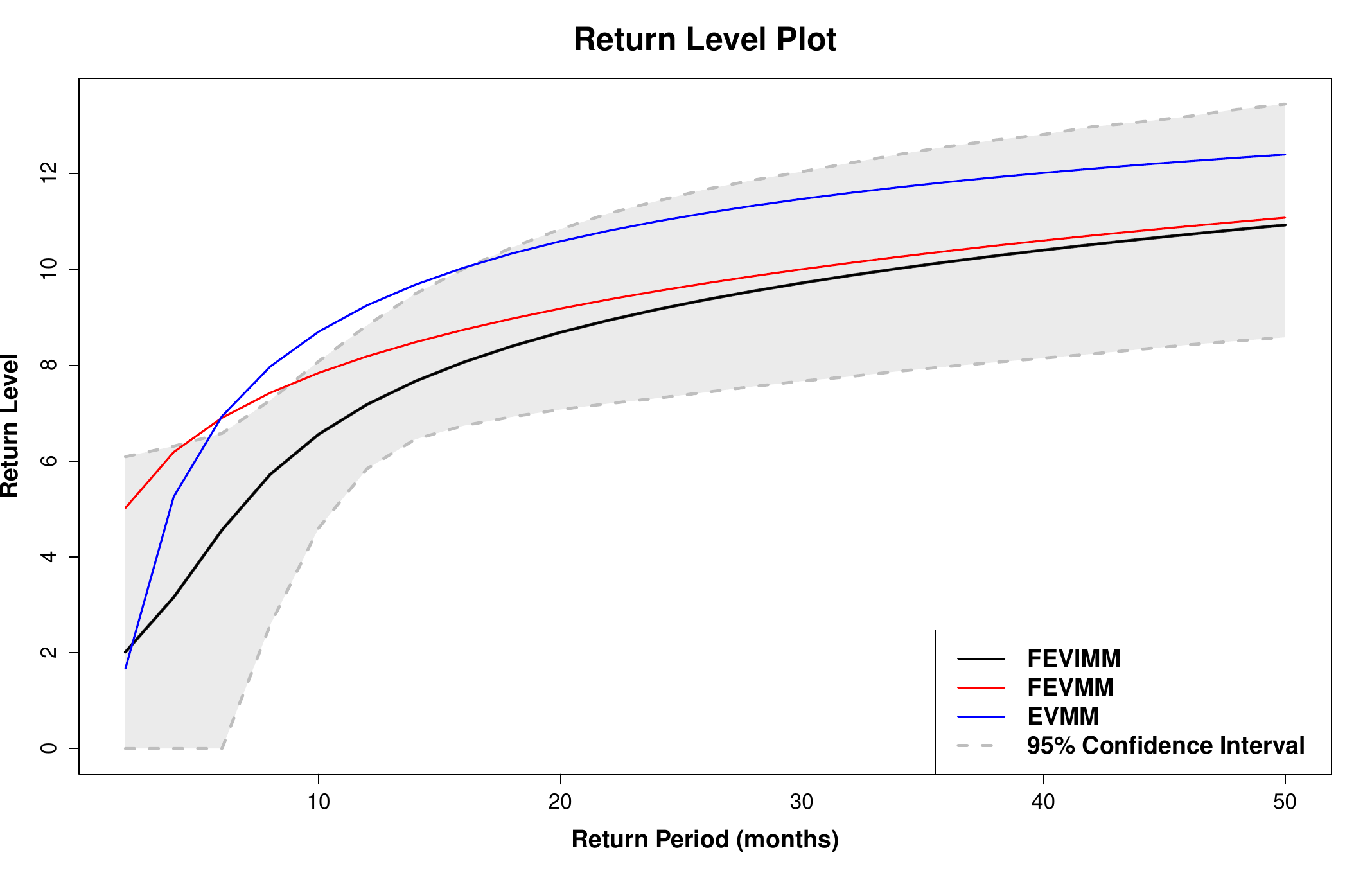}
        \caption{Return level estimates: FEVIMM (with 95\% confidence intervals), FEVMM, and EVMM (Syphilis dataset).}
        \label{fig:return_syphilis}
    \end{subfigure}
    \hfill
    \begin{subfigure}{0.48\textwidth}
        \centering
        \includegraphics[width=7.5cm, height=7cm]{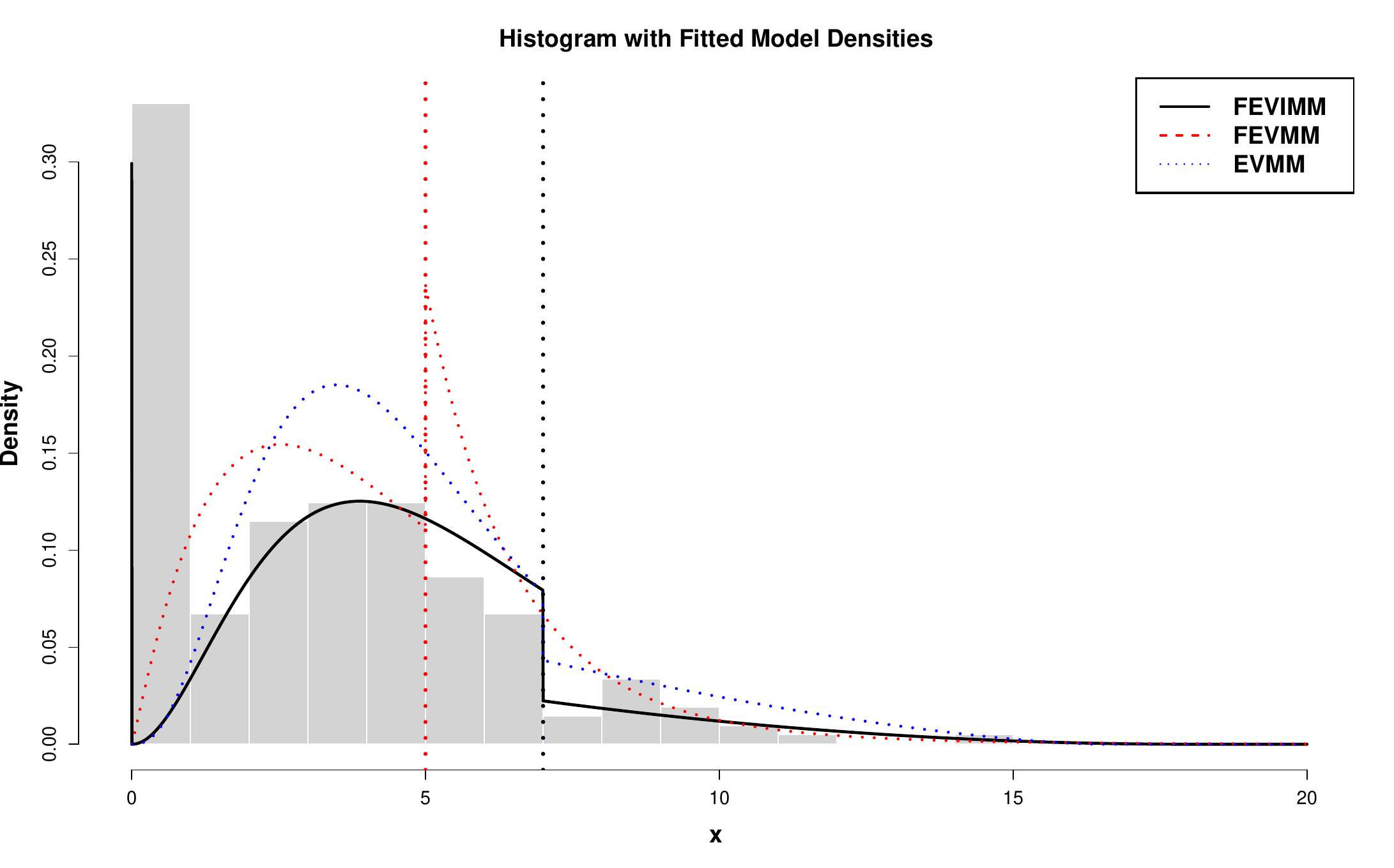}
        \caption{Histogram with fitted model densities. Vertical dashed lines pass through the model-estimated thresholds (Syphilis dataset).}
        \label{fig:hist_syphilis}
    \end{subfigure}
    \\
    \begin{subfigure}{0.48\textwidth}
        \centering
        \includegraphics[width=7.5cm, height=7cm]{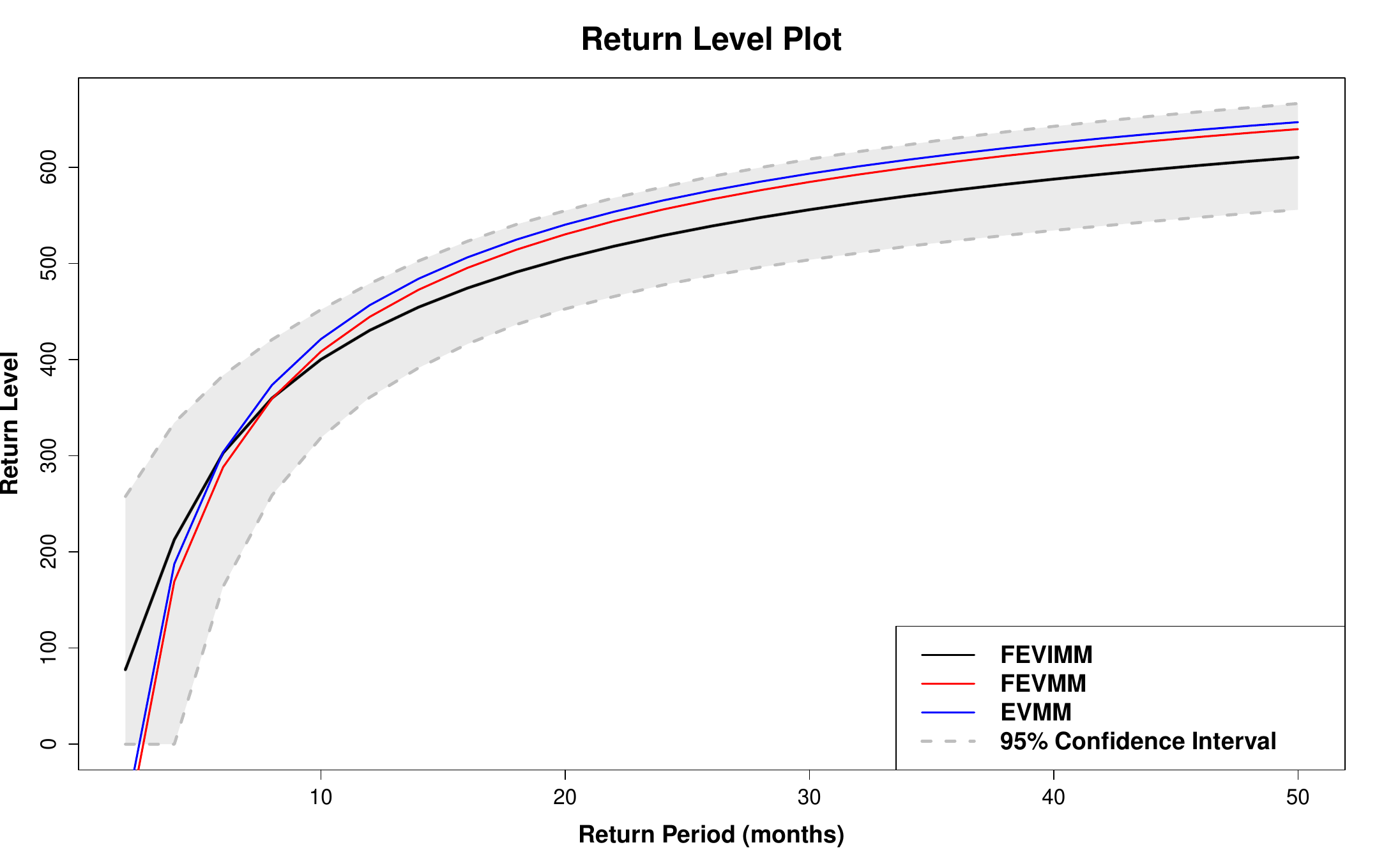}
        \caption{Return level estimates: FEVIMM (with 95\% confidence intervals), FEVMM, and EVMM (Pune rainfall dataset).}
        \label{fig:return_rainfall}
    \end{subfigure}
    \hfill
    \begin{subfigure}{0.48\textwidth}
        \centering
        \includegraphics[width=7.5cm, height=7cm]{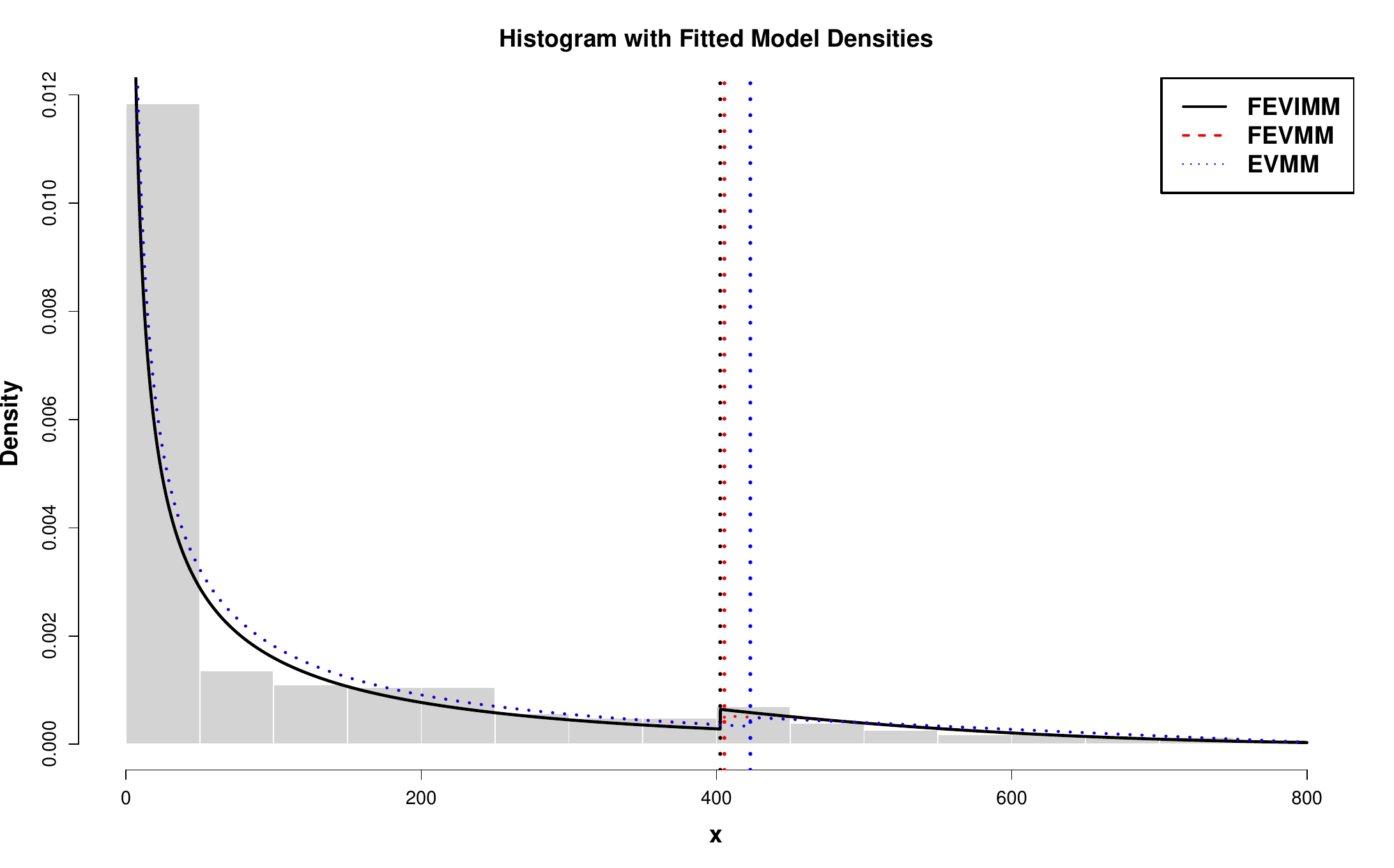}
        \caption{Histogram with fitted model densities. Vertical dashed lines pass through the model-estimated thresholds (Pune rainfall dataset).}
        \label{fig:hist_rainfall}
    \end{subfigure}
    \caption{Return level plots and density fits for real datasets.}
    \label{fig:return_comparison}
\end{figure}

\section{Conclusion and Possible Extensions} \label{conclusion}
\noindent In this article, we proposed a novel framework, FEVIMM, that effectively captures extreme events, including their proportions as parameters, as well as inliers. This is important because the form chosen below the threshold can affect the threshold and extreme parameter estimates used to quantify future extreme events.
In general, the presence of inliers is often assumed not to affect the estimation of extreme parameters, such as the threshold or tail fraction. However, our simulation settings and real-data applications show that ignoring inliers can affect these estimates. 
The theoretical results further show that the mixture proportions $\phi_1$ (inlier proportion) and $\phi_2$ (extremes proportion) are dependent under the unit-sum constraint, resulting in a negative asymptotic correlation between their MLEs.
Accounting for this dependence improves inference for the extreme component and its associated risk measures. The simulation study shows that the proposed FEVIMM yields lower bias and MSE than FEVMM and EVMM for the extreme parameters, as shown in Figure~\ref{fig: bias comparison1} and Table~\ref{Table-1}--\ref{Table-3}.
The FEVIMM provides the best fit to the data, as evidenced by the goodness-of-fit test results and nAIC (see Table~\ref{tab:gof_results}). Figures~\ref{fig:hist_syphilis} and \ref{fig:hist_rainfall} show the real-data histograms with fitted density curves from EVMM, FEVMM, and FEVIMM, highlighting the agreement of FEVIMM with the data. The proposed framework provides a more accurate representation of the data in such scenarios, leading to more precise return-level estimates and better predictions of extreme events.

The proposed model focuses solely on nonnegative values. To the best of our knowledge, research on modeling extremes and inliers is limited.
It is interesting to examine whether inliers occurring at a single or multiple points, or arising from multimodal data structures, affect the threshold position and, consequently, the estimation of GPD parameters.
Natural extensions include accommodating inliers clustering around a point and considering degeneracy at multiple discrete points.
Another research direction is to model multimodality in the bulk distribution using mixture distributions. The advantage of this model formulation is flexibility; the information criteria can be used to determine the number of components.
Finally, extending the FEVIMM to a change-point setup represents another promising research avenue. The change-point models, which explicitly study distributional changes in the structure of extreme values, are extremely limited in the literature \cite{hazra2025estimating}. Moreover, this type of framework is needed when the data contains inliers or when the unimodal assumption does not hold; studying extremes by developing models to handle such cases could provide further insight. 
In summary, this area remains largely underexplored, highlighting a substantial opportunity for further research. 
Future work could explore Bayesian estimation and alternative bulk distributions in the presence of inliers to obtain more robust solutions. The model may also be evaluated further in applications involving reliability, environmental data, finance, and medicine. 
\section*{Acknowledgments}
\noindent The authors would like to thank the Editor and the Reviewers for their helpful comments and suggestions, which
improved the flow and clarity of this manuscript. 
A preprint is available on arXiv; see \cite{nila2026flexible}.
 \bibliographystyle{apalike}
 \bibliography{01ref}
\section*{Appendix A1. Proofs of Propositions}
\label{Appendix_Proof_prepositions}
\subsection*{Proposition 1}
\begin{proof}
From \eqref{evmmlpdf1}, \eqref{eq:gpd_density}, evaluating the left and right limits at $u$ yields
\[
f(u^-) = (1 - (\phi_1 + \phi_2)) \frac{g^*(u \mid \Phi)}{G^*(u \mid \Phi)}, 
\quad
f(u^+) = \phi_2 \, g(u \mid u, \sigma, \xi) = \frac{\phi_2}{\sigma},
\]
where $G^*(\cdot \mid \Phi)$ and $g^*(\cdot \mid \Phi)$ denote the CDF and DF of the bulk distribution, respectively. 
Imposing continuity at $u$, i.e., $f(u^-) = f(u^+)$, yields
\[
(1 - (\phi_1 + \phi_2)) \frac{g^*(u \mid \Phi)}{G^*(u \mid \Phi)} = \frac{\phi_2}{\sigma}.
\]
Solving for $\sigma$, we obtain
\[
\sigma
=
\frac{\phi_2}{1 - \phi_1 - \phi_2}
\;
\frac{G^*(u \mid \Phi)}{g^*(u \mid \Phi)}.
\]
This completes the proof.
\end{proof}
\subsection*{Proposition 2}
\begin{proof}
From \eqref{evmmlpdf1}, evaluating the derivatives of the density from the left and right at $u$ yields
\[
f'(u^-) = (1 - (\phi_1 + \phi_2)) \frac{g^{*\,\prime}(u \mid \Phi)}{G^*(u \mid \Phi)}, ~\text{and}~\quad
f'(u^+) = \phi_2 \, g'(u \mid u, \sigma, \xi),
\]
here $g^{*\,\prime}(\cdot \mid \Phi)$ and $g'(\cdot \mid u,\sigma,\xi)$ denotes derivatives with respect to $x$.
So from \eqref{eq:gpd_density},
\[
g'(u \mid u, \sigma, \xi) = -\frac{1+\xi}{\sigma^2}
\;\Longrightarrow\;
f'(u^+) = -\frac{\phi_2 (1+\xi)}{\sigma^2},
\]
imposing differentiability at $u$, i.e., $f'(u^-) = f'(u^+)$, gives
\begin{equation}
\label{eq:differentiability_step}
(1 - (\phi_1 + \phi_2)) \frac{g^{*\,\prime}(u \mid \Phi)}{G^*(u \mid \Phi)}
=
-\frac{\phi_2 (1+\xi)}{\sigma^2}.
\end{equation}
When the $G^*(\cdot \mid \Phi)$ gamma distribution has shape $\eta$ and scale $\beta$, after solving, we have,
\begin{equation}
\label{eq:contiunitey_step}
g^{*\,\prime}(u \mid \eta,\beta)
=
g^*(u \mid \eta,\beta)\left( \frac{\eta - 1}{u} - \frac{1}{\beta} \right), 
\end{equation}
and from the Proposition~\ref{rem:Continuity} we have, 
\begin{equation}
\label{eq:continutiy_gamma}
(1 - (\phi_1 + \phi_2))\,\frac{g^*(u \mid \eta,\beta)}{G^*(u \mid \eta,\beta)}
=
\frac{\phi_2}{\sigma}
\end{equation}
substituting \eqref{eq:contiunitey_step} and \eqref{eq:continutiy_gamma} into \eqref{eq:differentiability_step}, and solving for $\sigma$, we obtain
\[
\sigma
=
\frac{1+\xi}{
\dfrac{1}{\beta} - \dfrac{\eta-1}{u}},
\]
provided $\sigma > 0$ and $\dfrac{1}{\beta} - \dfrac{\eta-1}{u} \neq 0$.

\noindent
This completes the proof.
\end{proof}
\section*{Appendix A2. Score Functions for the MLE}
\label{Appendix_asymptoticdistribution}
\noindent Recall that the DF $f(x \mid \phi_1, \eta, \beta, u, \xi, \sigma, \phi_2)$ is given below in~\eqref{evmmlpdf}. Here, $g(x \mid u, \sigma, \xi)$ denotes the GPD DF defined in~\eqref{eq:gpd_density}.
\begin{equation}\label{evmmlpdf}
\bm{f}(x \mid \phi_1, \eta, \beta, u, \xi, \sigma,\phi_2) =
\begin{cases} 
\phi_1, & \text{if } x = 0, \\[4pt]
\left( 1 - \phi_1 - \phi_2 \right)
\displaystyle\frac{ \frac{1}{\Gamma(\eta)\beta^\eta} x^{\eta-1} \exp\left(-\frac{x}{\beta}\right) }{ \int_0^u \frac{1}{\Gamma(\eta)\beta^\eta} t^{\eta-1} \exp\left(-\frac{t}{\beta}\right) dt }, & \text{if } 0 < x < u, \\[6pt]
\phi_2 \cdot g(x \mid u, \sigma, \xi), & \text{if } x \geq u,
\end{cases}
\end{equation} the parameters satisfy $0 < \phi_1, \phi_2 < 1$, $\eta > 0$, $\beta > 0$, $u > 0$, $\sigma > 0$, and $\xi \in \mathbb{R}$.
\section*{Score Functions: Case \(\xi \ne 0\)}
\noindent Let \(\boldsymbol{\theta} = (\phi_1, \eta, \beta, \xi, \sigma, \phi_2)\) be the parameter vector and, for a single observation, the log-likelihood function (\ref{eq:log_likelihood}) is given by
\begin{equation}\label{eq:loglikelihood_single}
\ell(\boldsymbol{\theta}; x) =
\begin{cases}
\log(\phi_1), & \text{if } x = 0, \\[10pt]

\log(1 - \phi_1 - \phi_2)
- \log G^*(u_{0} \mid \eta, \beta)
+ (\eta - 1)\log x
- \dfrac{x}{\beta} \\[4pt]
\qquad\quad
- \eta \log \beta
- \log \Gamma(\eta),
& \text{if } 0 < x < u_{0}, \\[12pt]

\log(\phi_2)
- \log \sigma
- \left( \dfrac{1}{\xi} + 1 \right)
\log\left( 1 + \xi \dfrac{x - u_{0}}{\sigma} \right),
& \text{if } x \geq u_{0}.
\end{cases}
\end{equation}
\noindent
\noindent
\textbf{Notation and definitions:} \\
\noindent
Define
\[
G^*(u_0\mid\eta,\beta)
:=
\frac{1}{\Gamma(\eta)}
\int_0^{u_0}
\frac{t^{\eta-1}e^{-t/\beta}}{\beta^\eta}\,dt,
\qquad
h(\eta,\beta):=\log G^*(u_0\mid\eta,\beta),
\]
where \(u_0>0\) is the known threshold. Its first-order partial derivatives and the digamma function are defined, respectively, by
\[
h_\eta:=\frac{\partial h(\eta,\beta)}{\partial\eta},
\qquad
h_\beta:=\frac{\partial h(\eta,\beta)}{\partial\beta},
\qquad
\psi(\eta):=\frac{d}{d\eta}\log\Gamma(\eta).
\]
As we know, for \(X\geq u_0\), \(X\sim\mathrm{GPD}(\xi,\sigma,u_0)\), then $ Z:=\frac{X-u_0}{\sigma}\sim\mathrm{GPD}(\xi,1,0)$,
here \(\mathbb{E}_Z[\cdot]\) denotes expectation with respect to \(Z\).
\noindent
The corresponding score functions with respect to the model parameters are given by
\begin{equation*}
\frac{\partial\ell}{\partial\phi_1}=
\begin{cases}
\dfrac{1}{\phi_1},&x=0,\\
-\dfrac{1}{1-\phi_1-\phi_2},&0<x<u_0,\\
0,&x\geq u_0,
\end{cases}
\qquad
\frac{\partial\ell}{\partial\phi_2}=
\begin{cases}
0,&x=0,\\
-\dfrac{1}{1-\phi_1-\phi_2},&0<x<u_0,\\
\dfrac{1}{\phi_2},&x\geq u_0.
\end{cases}
\end{equation*}
Define notation $D_\eta:=\partial_\eta\log G^*(u_{0}\mid\eta,\beta)$ and
$D_\beta:=\partial_\beta\log G^*(u_{0}\mid\eta,\beta)$,
\[
\frac{\partial\ell}{\partial\eta}=
\begin{cases}
0,&x=0\text{ or }x\geq u_0,\\
\log(x/\beta)-\psi(\eta)-D_\eta,&0<x<u_0,
\end{cases}
\qquad
\frac{\partial\ell}{\partial\beta}=
\begin{cases}
0,&x=0\text{ or }x\geq u_0,\\
x/\beta^2-\eta/\beta-D_\beta,&0<x<u_0.
\end{cases}
\]
\begin{equation*}
\frac{\partial \ell}{\partial \sigma} =
\begin{cases}
-\dfrac{1}{\sigma}
+
\left(\dfrac{1}{\xi}+1\right)
\dfrac{\xi Z}{\sigma(1+\xi Z)},
& x \ge u_0, \\[8pt]
0, & \text{otherwise},
\end{cases}
\quad\quad
\frac{\partial \ell}{\partial \xi} =
\begin{cases}
\dfrac{1}{\xi^2}\log(1+\xi Z)
-\left(\dfrac{1}{\xi}+1\right)
\dfrac{Z}{1+\xi Z},
& x \ge u_0, \\[8pt]
0, & \text{otherwise}.
\end{cases}
\end{equation*}

One can easily verify that
\begin{align*}
    &\mathbb{E} \left[ \frac{\partial}{\partial \phi_1} \ell(X) \right] = 0, \quad
\mathbb{E} \left[ \frac{\partial}{\partial \phi_2} \ell(X) \right] = 0, \quad \mathbb{E} \left[ \frac{\partial}{\partial \eta} \ell(X) \right] = 0,  \\
& \mathbb{E} \left[ \frac{\partial}{\partial \beta} \ell(X) \right] = 0, \quad \mathbb{E} \left[ \frac{\partial}{\partial \xi} \ell(X) \right] = 0, \quad \text{and} \quad
\mathbb{E} \left[ \frac{\partial}{\partial \sigma} \ell(X) \right] = 0.
\end{align*}
\noindent The corresponding Fisher information matrix and its entries for \(\xi\ne0\) are presented in Section~\ref{asymptoticdistribution}.

\vspace{-0.4cm}
\section*{Score Functions: Case \(\xi = 0\)}
\noindent Let the parameter vector be $\boldsymbol{\theta} = (\phi_1, \eta, \beta, \sigma, \phi_2)$. For a single observation, the log-likelihood function is
\begin{equation}\label{eq:loglikelihood_single_xi0}
\small
\ell(\boldsymbol{\theta}; x) =
\begin{cases}
\log(\phi_1), & \text{if } x = 0, \\[6pt]

\log(1 - \phi_1 - \phi_2)
- \log G^*(u_{0} \mid \eta, \beta)
+ (\eta - 1)\log x
- \dfrac{x}{\beta} 
- \eta \log \beta
- \log \Gamma(\eta),
& \text{if } 0 < x < u_{0}, \\[8pt]

\log(\phi_2)
- \log \sigma
- \dfrac{x - u_{0}}{\sigma},
& \text{if } x \geq u_{0}.
\end{cases}
\end{equation}

\noindent For the case $\xi = 0$, the GPD tail reduces to an exponential distribution with mean $\sigma$. If $x \ge u_{0}$, the corresponding log-likelihood contribution is
\[
\ell_{\text{tail}}(\sigma; x) = \log(\phi_2) - \log(\sigma) - \frac{x - u_{0}}{\sigma}.
\]
\noindent The score function and its second derivative with respect to $\sigma$ are
\[
\frac{\partial\ell}{\partial\sigma}=
\begin{cases}
0,&x=0\text{ or }0<x<u_0,\\
-1/\sigma+(x-u_0)/\sigma^2,&x\geq u_0,
\end{cases}
\qquad
\frac{\partial^2\ell}{\partial\sigma^2}=
\begin{cases}
0,&x=0\text{ or }0<x<u_0,\\
1/\sigma^2-2(x-u_0)/\sigma^3,&x\geq u_0.
\end{cases}
\]
\noindent For \(X \ge u_{0}\), we have \(Y = X - u_{0} \sim \mathrm{Exp}(\sigma)\) with \(\mathbb{E}[Y] = \sigma\).  
 The corresponding Fisher information entries for \(\xi=0\) are presented in Section~\ref{asymptoticdistribution}.
\section*{Appendix A3. Supplementary Simulation Results}
\label{app:bias_mse_comparison_apendex}

\section*{Additional Bias and MSE Comparative Analysis:}
\noindent We considered the bias and MSE results for the parameter setting $\phi_1=0.4$, $\eta=1$, $\beta=5$, $u=11.5129$, $\xi=0.2$, $\sigma=5$, and $\phi_2=0.10$. The corresponding results are presented in Figure~\ref{fig: bias comparison2}. The estimates obtained from FEVIMM generally exhibit lower bias and MSE than those obtained from FEVMM and EVMM across the parameter estimates.
\section*{Additional Simulation Study Tables:}
\noindent Additional simulation study results are provided in Tables~\ref{Table-A2}--\ref{Table-A5}.

\begin{landscape}
\begin{figure}[!htbp]
		\centering
		\includegraphics[width=24.5cm,height=17.5cm]
{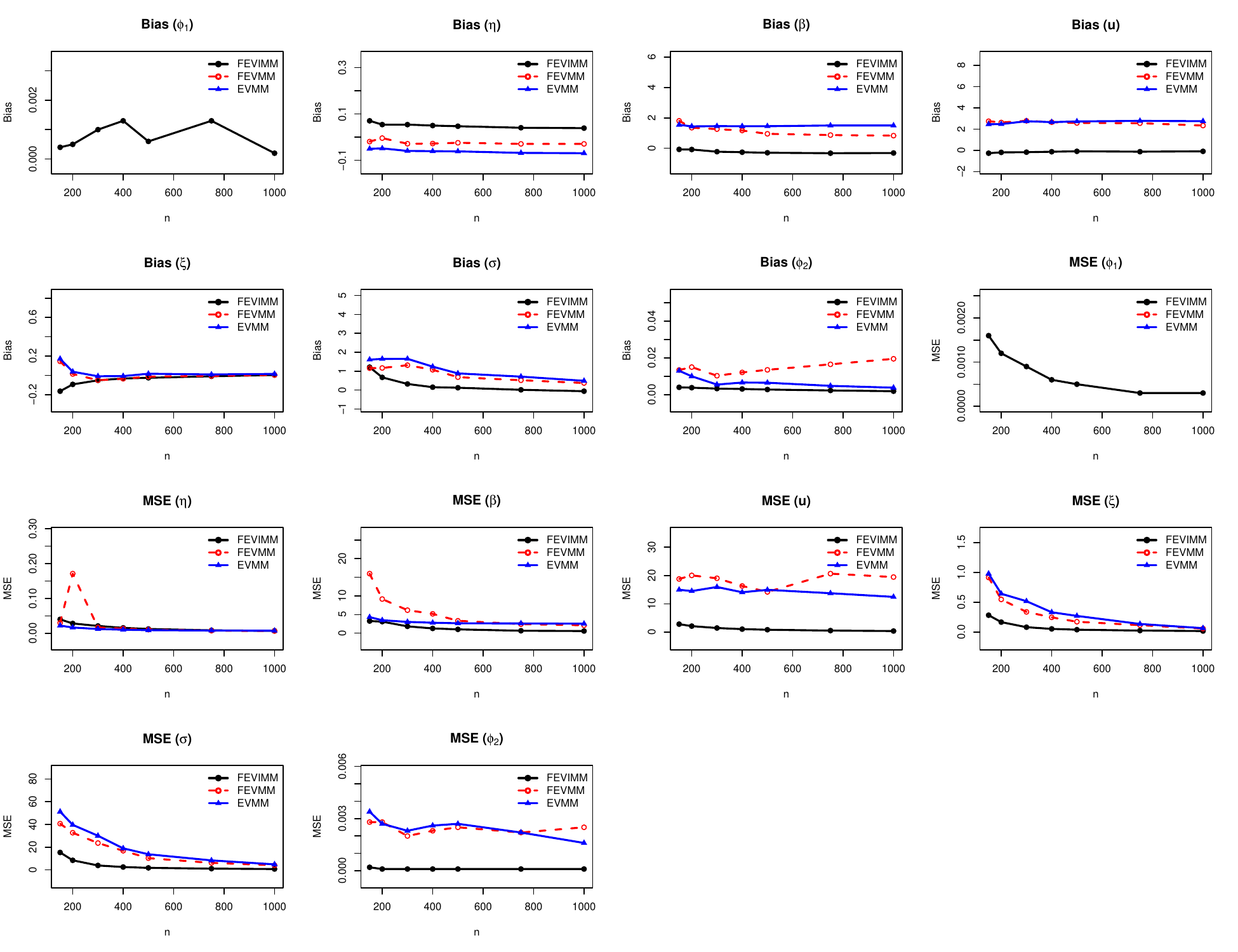}

		\caption{Bias and MSE comparison plots under Scenario 2, based on Monte Carlo replications ($N = 2000$): \(\phi_1 = 0.4\), \(\eta = 1\), \(\beta = 5\), \(u = 11.5129\), \(\xi = 0.2\), \(\sigma = 5\), \(\phi_2 = 0.10\).}
		\label{fig: bias comparison2}
   \end{figure} 
\end{landscape}
\noindent

\begin{landscape} 
\begin{table*}[htbp]
 \caption{Monte Carlo simulation results ($N=2000$) for sample sizes $n = 150, 200, 300, 400,$ and $500$, with an inlier proportion of 20\% and a heavy-tailed GPD component with a tail fraction of 10\%, reporting sample mean, BSE, 95\% BCI, MSE, bias, and CP.}
    \centering
    \scriptsize 
    \setlength{\tabcolsep}{3pt} 
    \renewcommand{\arraystretch}{1.3} 
    \begin{tabular}{|p{2.5cm}|p{1.8cm}|p{1.8cm}|p{2.4cm}|p{2cm}|p{3cm}|p{2.4cm}|p{2.3cm}|p{1.8cm}|} 
        \hline
         \textbf{$n$}  & \textbf{Parameter} & \textbf{True} & \textbf{Sample Mean} & \textbf{BSE} & \textbf{BCI} & \textbf{MSE} & \textbf{Bias} & \textbf{CP} \\ \hline
          \multirow{6}{*}{ 150 } & $\phi_1$ &  0.2  &  0.2  (NA) &  0.0415  &  (0.1073, 0.2696)  &  0.0011  (NA) &  0  (NA) &  95.8  \\ \cline{2-9}
                               & $\eta$   &  1  &  1.0518  ( 1.0108 ) &  0.2845  &  (0.8553, 1.9581)  &  0.0276  ( 0.0272 ) &  0.0518  ( 0.0108 ) &  98.2  \\ \cline{2-9}
                               & $\beta$  &  5  &  5.0049  ( 5.8235 ) &  2.6646  &  (2.1092, 11.1632)  &  2.7346  ( 7.4635 ) &  0.0049  ( 0.8235 ) &  98  \\ \cline{2-9}
                               & $u$      &  11.51293  &  11.6269  ( 12.5174 ) &  2.2359  &  (8.5447, 17.0922)  &  2.2209  ( 17.4559 ) &  0.114  ( 1.0044 ) &  95.6  \\ \cline{2-9}
                               & $\xi$    &  0.2  &  0.0721  ( 0.1635 ) &  4.6073  &  (-1.424, 13.06)  &  0.3103  ( 0.5303 ) &  -0.1279  ( -0.0365 ) &  100  \\ \cline{2-9}
                               & $\sigma$ &  5  &  5.9761  ( 6.3723 ) &  3.7671  &  (0, 12.3387)  &  15.0995  ( 32.1544 ) &  0.9761  ( 1.3723 ) &  90  \\ \cline{2-9}
                               & $\phi_2$ &  0.1  &  0.0994  ( 0.1117 ) &  0.0277  &  (0.039, 0.1541)  &  2e-04  ( 0.0026 ) &  -6e-04  ( 0.0117 ) &  99.6  \\ \hline

         \multirow{6}{*}{ 200 } & $\phi_1$ &  0.2  &  0.2001  (NA) &  0.0289  &  (0.1237, 0.2373)  &  8e-04  (NA) &  1e-04  (NA) &  96.6  \\ \cline{2-9}
                               & $\eta$   &  1  &  1.0481  ( 1.0079 ) &  0.1657  &  (0.887, 1.5353)  &  0.0208  ( 0.0187 ) &  0.0481  ( 0.0079 ) &  98.6  \\ \cline{2-9}
                               & $\beta$  &  5  &  4.9022  ( 5.5946 ) &  1.3283  &  (2.5239, 7.6142)  &  2.0874  ( 4.6312 ) &  -0.0978  ( 0.5946 ) &  98.2  \\ \cline{2-9}
                               & $u$      &  11.51293  &  11.6639  ( 12.7467 ) &  1.1737  &  (8.4, 13.1457)  &  1.6726  ( 13.1701 ) &  0.151  ( 1.2338 ) &  97.6  \\ \cline{2-9}
                               & $\xi$    &  0.2  &  0.1257  ( 0.1396 ) &  1.1812  &  (-0.2534, 4.3324)  &  0.1999  ( 0.3918 ) &  -0.0743  ( -0.0604 ) &  100  \\ \cline{2-9}
                               & $\sigma$ &  5  &  5.564  ( 6.116 ) &  1.6829  &  (0.5597, 6.7674)  &  9.4756  ( 20.4302 ) &  0.564  ( 1.116 ) &  92  \\ \cline{2-9}
                               & $\phi_2$ &  0.1  &  0.0994  ( 0.1071 ) &  0.0177  &  (0.0672, 0.1379)  &  1e-04  ( 0.0013 ) &  -6e-04  ( 0.0071 ) &  99.4  \\ \hline                  
          \multirow{6}{*}{ 300 } & $\phi_1$ &  0.2  &  0.2002  (NA) &  0.0261  &  (0.1489, 0.2525)  &  6e-04  (NA) &  2e-04  (NA) &  95.8  \\ \cline{2-9}
                               & $\eta$   &  1  &  1.0446  ( 1.0075 ) &  0.1539  &  (0.9338, 1.5535)  &  0.0145  ( 0.0126 ) &  0.0446  ( 0.0075 ) &  98.2  \\ \cline{2-9}
                               & $\beta$  &  5  &  4.8265  ( 5.3955 ) &  1.2127  &  (2.5525, 6.6039)  &  1.3063  ( 2.4968 ) &  -0.1735  ( 0.3955 ) &  97.2  \\ \cline{2-9}
                               & $u$      &  11.51293  &  11.6624  ( 12.6201 ) &  1.2702  &  (9.0593, 13.8905)  &  1.1542  ( 8.9011 ) &  0.1495  ( 1.1072 ) &  97.2  \\ \cline{2-9}
                               & $\xi$    &  0.2  &  0.1733  ( 0.141 ) &  0.9745  &  (-0.2417, 3.6749)  &  0.1115  ( 0.2416 ) &  -0.0267  ( -0.059 ) &  100  \\ \cline{2-9}
                               & $\sigma$ &  5  &  5.206  ( 5.8211 ) &  1.6659  &  (0.803, 7.1908)  &  5.3586  ( 12.6836 ) &  0.206  ( 0.8211 ) &  92.8  \\ \cline{2-9}
                               & $\phi_2$ &  0.1  &  0.0993  ( 0.1087 ) &  0.0173  &  (0.062, 0.1326)  &  1e-04  ( 0.0022 ) &  -7e-04  ( 0.0087 ) &  100  \\ \hline
                               
     \multirow{6}{*}{ 400 } & $\phi_1$ &  0.2  &  0.2004  (NA) &  0.0235  &  (0.1506, 0.2434)  &  4e-04  (NA) &  4e-04  (NA) &  97.6  \\ \cline{2-9}
                               & $\eta$   &  1  &  1.0439  ( 1.0058 ) &  0.1271  &  (0.8529, 1.3657)  &  0.0117  ( 0.0092 ) &  0.0439  ( 0.0058 ) &  98.4  \\ \cline{2-9}
                               & $\beta$  &  5  &  4.7763  ( 5.3123 ) &  1.3143  &  (2.8907, 7.7132)  &  0.9925  ( 2.0258 ) &  -0.2237  ( 0.3123 ) &  97.6  \\ \cline{2-9}
                               & $u$      &  11.51293  &  11.6765  ( 12.7026 ) &  1.5224  &  (8.6603, 14.3999)  &  0.9764  ( 14.3015 ) &  0.1636  ( 1.1897 ) &  97  \\ \cline{2-9}
                               & $\xi$    &  0.2  &  0.2046  ( 0.1671 ) &  0.9138  &  (-0.2739, 3.4286)  &  0.074  ( 0.1671 ) &  0.0046  ( -0.0329 ) &  100  \\ \cline{2-9}
                               & $\sigma$ &  5  &  4.994  ( 5.5325 ) &  1.8503  &  (0.9468, 8.1352)  &  3.283  ( 8.2301 ) &  -0.006  ( 0.5325 ) &  93  \\ \cline{2-9}
                               & $\phi_2$ &  0.1  &  0.0994  ( 0.109 ) &  0.0189  &  (0.0584, 0.1335)  &  1e-04  ( 0.0019 ) &  -6e-04  ( 0.009 ) &  99.6  \\ \hline
                               
       \multirow{6}{*}{ 500 } & $\phi_1$ &  0.2  &  0.1999  (NA) &  0.0218  &  (0.1675, 0.2552)  &  3e-04  (NA) &  -1e-04  (NA) &  97.4  \\ \cline{2-9}
                               & $\eta$   &  1  &  1.0437  ( 1.0056 ) &  0.1194  &  (0.9169, 1.3902)  &  0.0099  ( 0.0072 ) &  0.0437  ( 0.0056 ) &  97.8  \\ \cline{2-9}
                               & $\beta$  &  5  &  4.7329  ( 5.216 ) &  0.9617  &  (2.9098, 6.4437)  &  0.7909  ( 1.3264 ) &  -0.2671  ( 0.216 ) &  97.4  \\ \cline{2-9}
                               & $u$      &  11.51293  &  11.6952  ( 12.631 ) &  1.2178  &  (9.0939, 14.0366)  &  0.7769  ( 7.0034 ) &  0.1822  ( 1.1181 ) &  98  \\ \cline{2-9}
                               & $\xi$    &  0.2  &  0.2138  ( 0.1849 ) &  0.7159  &  (-0.2479, 2.8466)  &  0.0536  ( 0.1077 ) &  0.0138  ( -0.0151 ) &  100  \\ \cline{2-9}
                               & $\sigma$ &  5  &  4.919  ( 5.3463 ) &  1.426  &  (1.0752, 6.6847)  &  2.3895  ( 5.8333 ) &  -0.081  ( 0.3463 ) &  93  \\ \cline{2-9}
                               & $\phi_2$ &  0.1  &  0.0993  ( 0.1091 ) &  0.0176  &  (0.0616, 0.1302)  &  1e-04  ( 0.0015 ) &  -7e-04  ( 0.0091 ) &  99.6  \\ \hline
                               
    \end{tabular}
\captionsetup{
  labelfont=bf, 
  textfont=normal 
}
        \label{Table-A2}
\end{table*}
\end{landscape}

\begin{landscape} 
\begin{table*}[htbp]
 \caption{Monte Carlo simulation results ($N=2000$) for sample sizes $n = 150, 200, 300, 400,$ and $500$, with an inlier proportion of 20\% and a light-tailed GPD component with a tail fraction of 10\%, reporting sample mean, BSE, 95\% BCI, MSE, bias, and CP.}
    \centering
    \scriptsize 
    \setlength{\tabcolsep}{3pt} 
    \renewcommand{\arraystretch}{1.3} 
    \begin{tabular}{|p{2.5cm}|p{1.8cm}|p{1.8cm}|p{2.4cm}|p{2cm}|p{3cm}|p{2.4cm}|p{2.3cm}|p{1.8cm}|} 
        \hline
         \textbf{$n$}  & \textbf{Parameter} & \textbf{True} & \textbf{Sample Mean} & \textbf{BSE} & \textbf{BCI} & \textbf{MSE} & \textbf{Bias} & \textbf{CP} \\ \hline
         \multirow{6}{*}{ 150 } & $\phi_1$ &  0.2  &  0.1998  (NA) &  0.0421  &  (0.1095, 0.2774)  &  0.0012  (NA) &  -2e-04  (NA) &  96.2  \\ \cline{2-9}
                               & $\eta$   &  1  &  1.0626  ( 1.012 ) &  0.272  &  (0.8796, 1.9413)  &  0.0281  ( 0.0246 ) &  0.0626  ( 0.012 ) &  99.2  \\ \cline{2-9}
                               & $\beta$  &  5  &  4.8672  ( 5.7573 ) &  2.3268  &  (2.0606, 9.5827)  &  2.2091  ( 7.1443 ) &  -0.1328  ( 0.7573 ) &  98.8  \\ \cline{2-9}
                               & $u$      &  11.51293  &  11.6441  ( 12.4537 ) &  2.0997  &  (8.5244, 16.0821)  &  2.0259  ( 4.8627 ) &  0.1312  ( 0.9408 ) &  96.2  \\ \cline{2-9}
                               & $\xi$    &  -0.2  &  -0.3908  ( -0.2112 ) &  3.4169  &  (-1.7481, 9.4345)  &  0.2678  ( 0.3014 ) &  -0.1908  ( -0.0112 ) &  100  \\ \cline{2-9}
                               & $\sigma$ &  5  &  6.0651  ( 5.1536 ) &  2.9526  &  (0.0487, 9.8235)  &  11.6052  ( 11.4555 ) &  1.0651  ( 0.1536 ) &  93.2  \\ \cline{2-9}
                               & $\phi_2$ &  0.1  &  0.0991  ( 0.109 ) &  0.0293  &  (0.0363, 0.1564)  &  2e-04  ( 0.0017 ) &  -9e-04  ( 0.009 ) &  99.6  \\ \hline

          \multirow{6}{*}{ 200 } & $\phi_1$ &  0.2  &  0.2002  (NA) &  0.0332  &  (0.1281, 0.2579)  &  9e-04  (NA) &  2e-04  (NA) &  97.6  \\ \cline{2-9}
                               & $\eta$   &  1  &  1.0528  ( 1.0087 ) &  0.1907  &  (0.8336, 1.5823)  &  0.0206  ( 0.0184 ) &  0.0528  ( 0.0087 ) &  99.4  \\ \cline{2-9}
                               & $\beta$  &  5  &  4.8545  ( 5.5741 ) &  2.8094  &  (2.545, 10.847)  &  1.6704  ( 5.0616 ) &  -0.1455  ( 0.5741 ) &  99  \\ \cline{2-9}
                               & $u$      &  11.51293  &  11.6823  ( 12.3912 ) &  1.8163  &  (8.0953, 15.0391)  &  1.5629  ( 3.7971 ) &  0.1693  ( 0.8783 ) &  97.8  \\ \cline{2-9}
                               & $\xi$    &  -0.2  &  -0.332  ( -0.2753 ) &  2.4009  &  (-1.0478, 8.463)  &  0.1633  ( 0.208 ) &  -0.132  ( -0.0753 ) &  100  \\ \cline{2-9}
                               & $\sigma$ &  5  &  5.6843  ( 5.3701 ) &  2.7666  &  (0.2449, 10.541)  &  7.2562  ( 8.723 ) &  0.6843  ( 0.3701 ) &  93.6  \\ \cline{2-9}
                               & $\phi_2$ &  0.1  &  0.0982  ( 0.1088 ) &  0.0237  &  (0.046, 0.1422)  &  2e-04  ( 0.0013 ) &  -0.0018  ( 0.0088 ) &  100  \\ \hline                  
        \multirow{6}{*}{ 300 } & $\phi_1$ &  0.2  &  0.2006  (NA) &  0.0296  &  (0.1578, 0.2743)  &  6e-04  (NA) &  6e-04  (NA) &  97.6  \\ \cline{2-9}
                               & $\eta$   &  1  &  1.0487  ( 1.0047 ) &  0.1595  &  (0.82, 1.4508)  &  0.0149  ( 0.0119 ) &  0.0487  ( 0.0047 ) &  99  \\ \cline{2-9}
                               & $\beta$  &  5  &  4.7834  ( 5.4332 ) &  3.0101  &  (2.8715, 10.989)  &  1.1006  ( 2.8143 ) &  -0.2166  ( 0.4332 ) &  97.8  \\ \cline{2-9}
                               & $u$      &  11.51293  &  11.6509  ( 12.4659 ) &  1.7304  &  (8.7308, 15.2942)  &  1.0684  ( 3.2094 ) &  0.138  ( 0.953 ) &  98.6  \\ \cline{2-9}
                               & $\xi$    &  -0.2  &  -0.277  ( -0.2813 ) &  1.5561  &  (-1.0165, 5.7825)  &  0.0856  ( 0.1309 ) &  -0.077  ( -0.0813 ) &  100  \\ \cline{2-9}
                               & $\sigma$ &  5  &  5.3187  ( 5.2817 ) &  2.416  &  (0.5528, 9.637)  &  3.8239  ( 5.936 ) &  0.3187  ( 0.2817 ) &  95.2  \\ \cline{2-9}
                               & $\phi_2$ &  0.1  &  0.0988  ( 0.1065 ) &  0.0232  &  (0.0462, 0.137)  &  1e-04  ( 9e-04 ) &  -0.0012  ( 0.0065 ) &  99.8  \\ \hline
                               
        \multirow{6}{*}{ 400 } & $\phi_1$ &  0.2  &  0.2007  (NA) &  0.0262  &  (0.1715, 0.2749)  &  4e-04  (NA) &  7e-04  (NA) &  98.6  \\ \cline{2-9}
                               & $\eta$   &  1  &  1.0449  ( 1.0054 ) &  0.1264  &  (0.8043, 1.2984)  &  0.0114  ( 0.0088 ) &  0.0449  ( 0.0054 ) &  99.4  \\ \cline{2-9}
                               & $\beta$  &  5  &  4.762  ( 5.2846 ) &  1.927  &  (3.0897, 10.0069)  &  0.8779  ( 1.6176 ) &  -0.238  ( 0.2846 ) &  97.8  \\ \cline{2-9}
                               & $u$      &  11.51293  &  11.6893  ( 12.4709 ) &  1.4436  &  (8.4334, 14.0623)  &  0.852  ( 2.7865 ) &  0.1764  ( 0.9579 ) &  98.4  \\ \cline{2-9}
                               & $\xi$    &  -0.2  &  -0.2443  ( -0.2515 ) &  1.0542  &  (-0.7049, 3.9204)  &  0.0525  ( 0.0877 ) &  -0.0443  ( -0.0515 ) &  100  \\ \cline{2-9}
                               & $\sigma$ &  5  &  5.1052  ( 5.0683 ) &  1.9057  &  (0.7028, 7.9816)  &  2.3763  ( 3.9096 ) &  0.1052  ( 0.0683 ) &  96.4  \\ \cline{2-9}
                               & $\phi_2$ &  0.1  &  0.0982  ( 0.1065 ) &  0.0224  &  (0.0472, 0.1361)  &  1e-04  ( 7e-04 ) &  -0.0018  ( 0.0065 ) &  99.8  \\ \hline
                               
          \multirow{6}{*}{ 500 } & $\phi_1$ &  0.2  &  0.2005  (NA) &  0.0226  &  (0.1634, 0.2518)  &  4e-04  (NA) &  5e-04  (NA) &  98  \\ \cline{2-9}
                               & $\eta$   &  1  &  1.0439  ( 1.0047 ) &  0.1191  &  (0.893, 1.3532)  &  0.0098  ( 0.0069 ) &  0.0439  ( 0.0047 ) &  98.8  \\ \cline{2-9}
                               & $\beta$  &  5  &  4.7361  ( 5.2098 ) &  1.0433  &  (3.0119, 6.9197)  &  0.7195  ( 1.1523 ) &  -0.2639  ( 0.2098 ) &  97.8  \\ \cline{2-9}
                               & $u$      &  11.51293  &  11.7041  ( 12.5128 ) &  1.1391  &  (9.7273, 14.3337)  &  0.7229  ( 2.5916 ) &  0.1912  ( 0.9999 ) &  98.6  \\ \cline{2-9}
                               & $\xi$    &  -0.2  &  -0.231  ( -0.2449 ) &  0.676  &  (-0.542, 2.3146)  &  0.0375  ( 0.0621 ) &  -0.031  ( -0.0449 ) &  100  \\ \cline{2-9}
                               & $\sigma$ &  5  &  5.0279  ( 4.983 ) &  1.2029  &  (1.001, 6.0445)  &  1.7597  ( 2.9304 ) &  0.0279  ( -0.017 ) &  96  \\ \cline{2-9}
                               & $\phi_2$ &  0.1  &  0.0981  ( 0.1058 ) &  0.0183  &  (0.0532, 0.1255)  &  1e-04  ( 5e-04 ) &  -0.0019  ( 0.0058 ) &  100  \\ \hline
                               
    \end{tabular}
\captionsetup{
  labelfont=bf, 
  textfont=normal 
}
       \label{Table-A3}
\end{table*}
\end{landscape}

\begin{landscape} 
\begin{table*}[htbp]
 \caption{Monte Carlo simulation results ($N=2000$) for sample sizes $n = 150, 200, 300, 400,$ and $500$, with an inlier proportion of 20\% and an exponential-tailed GPD component with a tail fraction of 10\%, reporting sample mean, BSE, 95\% BCI, MSE, bias, and CP.}
    \centering
    \scriptsize 
    \setlength{\tabcolsep}{3pt} 
    \renewcommand{\arraystretch}{1.3} 
    \begin{tabular}{|p{2.5cm}|p{1.8cm}|p{1.8cm}|p{2.4cm}|p{2cm}|p{3cm}|p{2.4cm}|p{2.3cm}|p{1.8cm}|} 
        \hline
         \textbf{$n$}  & \textbf{Parameter} & \textbf{True} & \textbf{Sample Mean} & \textbf{BSE} & \textbf{BCI} & \textbf{MSE} & \textbf{Bias} & \textbf{CP} \\ \hline
       \multirow{6}{*}{ 150 } & $\phi_1$ &  0.2  &  0.1995  (NA) &  0.0411  &  (0.1009, 0.2629)  &  0.0011  (NA) &  -5e-04  (NA) &  95.2  \\ \cline{2-9}
                               & $\eta$   &  1  &  1.0566  ( 1.01 ) &  0.2881  &  (0.8717, 1.9848)  &  0.0277  ( 0.025 ) &  0.0566  ( 0.01 ) &  98.8  \\ \cline{2-9}
                               & $\beta$  &  5  &  4.9482  ( 5.8036 ) &  2.5303  &  (2.1044, 10.272)  &  2.5789  ( 7.1984 ) &  -0.0518  ( 0.8036 ) &  98  \\ \cline{2-9}
                               & $u$      &  11.51293  &  11.6241  ( 12.4407 ) &  2.2811  &  (8.7312, 17.0897)  &  2.1218  ( 6.4938 ) &  0.1112  ( 0.9278 ) &  96  \\ \cline{2-9}
                               & $\xi$    &  0  &  -0.1567  ( -0.0344 ) &  4.1787  &  (-1.6492, 11.4489)  &  0.279  ( 0.3991 ) &  -0.1567  ( -0.0344 ) &  100  \\ \cline{2-9}
                               & $\sigma$ &  5  &  6.0142  ( 5.7533 ) &  3.491  &  (0.0036, 11.4062)  &  13.0908  ( 17.9915 ) &  1.0142  ( 0.7533 ) &  96.6  \\ \cline{2-9}
                               & $\phi_2$ &  0.1  &  0.0998  ( 0.1111 ) &  0.0296  &  (0.0355, 0.1574)  &  2e-04  ( 0.0029 ) &  -2e-04  ( 0.0111 ) &  99.6  \\ \hline

         \multirow{6}{*}{ 200 } & $\phi_1$ &  0.2  &  0.2  (NA) &  0.0304  &  (0.129, 0.249)  &  9e-04  (NA) &  0  (NA) &  96.6  \\ \cline{2-9}
                               & $\eta$   &  1  &  1.0501  ( 1.0077 ) &  0.1704  &  (0.863, 1.5384)  &  0.0211  ( 0.0181 ) &  0.0501  ( 0.0077 ) &  98.8  \\ \cline{2-9}
                               & $\beta$  &  5  &  4.8911  ( 5.616 ) &  1.659  &  (2.4762, 7.7705)  &  1.9265  ( 5.4137 ) &  -0.1089  ( 0.616 ) &  97.4  \\ \cline{2-9}
                               & $u$      &  11.51293  &  11.643  ( 12.5278 ) &  1.2965  &  (8.3599, 13.4926)  &  1.6105  ( 5.3347 ) &  0.1301  ( 1.0149 ) &  97  \\ \cline{2-9}
                               & $\xi$    &  0  &  -0.1059  ( -0.0767 ) &  1.3642  &  (-0.5237, 4.9906)  &  0.1806  ( 0.301 ) &  -0.1059  ( -0.0767 ) &  100  \\ \cline{2-9}
                               & $\sigma$ &  5  &  5.6523  ( 5.7803 ) &  1.7828  &  (0.4876, 7.1495)  &  8.487  ( 13.7407 ) &  0.6523  ( 0.7803 ) &  92.8  \\ \cline{2-9}
                               & $\phi_2$ &  0.1  &  0.0991  ( 0.1077 ) &  0.0188  &  (0.0601, 0.1354)  &  1e-04  ( 0.0013 ) &  -9e-04  ( 0.0077 ) &  99.6  \\ \hline                  
        \multirow{6}{*}{ 300 } & $\phi_1$ &  0.2  &  0.2002  (NA) &  0.0282  &  (0.1569, 0.2671)  &  6e-04  (NA) &  2e-04  (NA) &  96.8  \\ \cline{2-9}
                               & $\eta$   &  1  &  1.0454  ( 1.0066 ) &  0.1574  &  (0.8401, 1.4436)  &  0.0145  ( 0.0122 ) &  0.0454  ( 0.0066 ) &  98.4  \\ \cline{2-9}
                               & $\beta$  &  5  &  4.8197  ( 5.4097 ) &  2.2739  &  (2.8882, 10.1398)  &  1.2771  ( 2.8021 ) &  -0.1803  ( 0.4097 ) &  97.4  \\ \cline{2-9}
                               & $u$      &  11.51293  &  11.6453  ( 12.5993 ) &  1.5487  &  (8.7022, 14.6824)  &  1.1217  ( 5.2309 ) &  0.1323  ( 1.0864 ) &  98  \\ \cline{2-9}
                               & $\xi$    &  0  &  -0.0513  ( -0.0816 ) &  1.4302  &  (-0.6094, 5.0821)  &  0.0948  ( 0.1919 ) &  -0.0513  ( -0.0816 ) &  100  \\ \cline{2-9}
                               & $\sigma$ &  5  &  5.2412  ( 5.6228 ) &  2.2727  &  (0.5872, 9.1487)  &  4.3657  ( 9.0232 ) &  0.2412  ( 0.6228 ) &  94  \\ \cline{2-9}
                               & $\phi_2$ &  0.1  &  0.0994  ( 0.1059 ) &  0.0219  &  (0.053, 0.1403)  &  1e-04  ( 0.0012 ) &  -6e-04  ( 0.0059 ) &  99.8  \\ \hline
     \multirow{6}{*}{ 400 } & $\phi_1$ &  0.2  &  0.2007  (NA) &  0.0254  &  (0.169, 0.2671)  &  4e-04  (NA) &  7e-04  (NA) &  98  \\ \cline{2-9}
                               & $\eta$   &  1  &  1.0441  ( 1.0076 ) &  0.1266  &  (0.8246, 1.3189)  &  0.0116  ( 0.009 ) &  0.0441  ( 0.0076 ) &  100  \\ \cline{2-9}
                               & $\beta$  &  5  &  4.7766  ( 5.2612 ) &  2.0019  &  (3.165, 9.5729)  &  0.9569  ( 1.7566 ) &  -0.2234  ( 0.2612 ) &  97  \\ \cline{2-9}
                               & $u$      &  11.51293  &  11.6556  ( 12.5106 ) &  1.4359  &  (8.6198, 14.1305)  &  0.902  ( 3.7987 ) &  0.1426  ( 0.9977 ) &  98  \\ \cline{2-9}
                               & $\xi$    &  0  &  -0.0215  ( -0.039 ) &  0.8692  &  (-0.5004, 3.0791)  &  0.0614  ( 0.1139 ) &  -0.0215  ( -0.039 ) &  100  \\ \cline{2-9}
                               & $\sigma$ &  5  &  5.0492  ( 5.2699 ) &  1.7197  &  (0.857, 7.4853)  &  2.8166  ( 5.6663 ) &  0.0492  ( 0.2699 ) &  94  \\ \cline{2-9}
                               & $\phi_2$ &  0.1  &  0.0991  ( 0.1079 ) &  0.0214  &  (0.0525, 0.1376)  &  1e-04  ( 0.0012 ) &  -9e-04  ( 0.0079 ) &  100  \\ \hline
                               
       \multirow{6}{*}{ 500 } & $\phi_1$ &  0.2  &  0.2003  (NA) &  0.0221  &  (0.1641, 0.2507)  &  4e-04  (NA) &  3e-04  (NA) &  98  \\ \cline{2-9}
                               & $\eta$   &  1  &  1.0422  ( 1.0059 ) &  0.121  &  (0.8908, 1.3629)  &  0.0097  ( 0.0072 ) &  0.0422  ( 0.0059 ) &  96  \\ \cline{2-9}
                               & $\beta$  &  5  &  4.7497  ( 5.2037 ) &  1.2757  &  (3.0166, 6.9628)  &  0.7791  ( 1.3187 ) &  -0.2503  ( 0.2037 ) &  94  \\ \cline{2-9}
                               & $u$      &  11.51293  &  11.6806  ( 12.5624 ) &  1.2036  &  (9.4116, 14.2768)  &  0.751  ( 4.6412 ) &  0.1677  ( 1.0494 ) &  98  \\ \cline{2-9}
                               & $\xi$    &  0  &  -0.0058  ( -0.033 ) &  0.6849  &  (-0.3969, 2.5912)  &  0.0445  ( 0.0888 ) &  -0.0058  ( -0.033 ) &  100  \\ \cline{2-9}
                               & $\sigma$ &  5  &  4.9603  ( 5.2079 ) &  1.3326  &  (1.0623, 6.3656)  &  2.1219  ( 4.4745 ) &  -0.0397  ( 0.2079 ) &  96.5  \\ \cline{2-9}
                               & $\phi_2$ &  0.1  &  0.099  ( 0.1082 ) &  0.0177  &  (0.0574, 0.1272)  &  1e-04  ( 0.0016 ) &  -0.001  ( 0.0082 ) &  100  \\ \hline
                               
    \end{tabular}
\captionsetup{
  labelfont=bf, 
  textfont=normal 
}
        \label{Table-A4}
\end{table*}
\end{landscape}

\vspace{-0.5cm}\begin{table}[!htbp]
\caption{Parameter estimates obtained from the proposed FEVIMM and classical methods (MEP, PSP, and PP) for data with and without inliers.}
\centering
\small
\setlength{\tabcolsep}{4pt} 
\renewcommand{\arraystretch}{1.0}
\begin{tabular}{lcccccccc}
\hline
\textbf{Parameter} 
& \textbf{True}  & \textbf{FEVIMM} 
& \textbf{MEP ($\mathbf{x}$)} & \textbf{MEP ($\mathbf{x}^{+}$)}    
& \textbf{PSP ($\mathbf{x}$)} & \textbf{PSP ($\mathbf{x}^{+}$)} 
& \textbf{PP ($\mathbf{x}$)}  & \textbf{PP ($\mathbf{x}^{+}$)} \\ \hline

\(\phi_1\)  & 0.300   & 0.284     & N/A & N/A & N/A & N/A & N/A & N/A \\
\(\eta\)    & 1.000   & 0.904     & N/A & N/A & N/A & N/A & N/A & N/A \\
\(\beta\)   & 5.000   & 6.343     & N/A & N/A & N/A & N/A & N/A & N/A \\
\(u\)       & 9.486  & 11.553    &   12.000  & 13.000 & 11.000  & 12.000 & 12.000 & 13.000 \\
\(\sigma\)  & 5.000   & 4.942     & 5.400 & 6.500 & 5.400 & 6.500 & N/A & N/A\\
\(\xi\)     & 0.200   & 0.190     & 0.140 & 0.044 & 0.140 & 0.044 & -0.001 & -0.400 \\
\(\phi_2\)  & 0.150   & 0.107     & N/A & N/A & N/A & N/A & N/A & N/A \\ \hline

\end{tabular}
    \label{Table-A5}
\end{table}

\section*{Appendix A4. Summary of EVMMs in the Literature}
\label{Appendix_EVMM_summary}
\noindent A summary of some of EVMMs from the literature is provided in Table~\ref{tab:evmm_lit}.
\begin{table}[!htbp]
\centering
\caption{Summary of EVMMs in the literature}
\resizebox{1\textwidth}{!}{%
\begin{tabular}{@{} l l l @{}}
\toprule
\textbf{Author(s)} & \textbf{Bulk distribution} & \textbf{Tail distribution} \\
\midrule
\citet{behrens2004bayesian} & Gamma, Weibull & GPD \\
\citet{tancredi2006accounting} & Mixture of uniform & GPD \\
\citet{macdonald2012extreme} & Kernel density estimator & GPD \\
\citet{do2012semiparametric} & Mixture of gamma & GPD \\
\citet{hu2018evmix} & Boundary corrected kernel density estimator & GPD \\
\citet{marambakuyana2024composite} & Various bulk & Various tail \\

\bottomrule
\end{tabular}%
}
\label{tab:evmm_lit}
\end{table}

\end{document}